\theoremstyle{definition}
\theoremstyle{definition}
\crefname{section}{\S}{\S\S}
\Crefname{section}{\S}{\S\S}
\Crefname{assumption}{Assumption}{Assumptions}
\Crefname{invariant}{Invariant}{Invariants}
\Crefname{observation}{Observation}{Observations}
\newcommand{\name}{\textsc{Quad}\xspace}
\newcommand{\sname}{\textsc{SQuad}\xspace}
\newcommand{\namebis}{\textsc{Quad}$^{+}$\xspace}
\newcommand{\rare}{\textsc{RareSync}\xspace}
\newcommand{\core}{\text{core}}
\definecolor{lightgray}{gray}{0.90}
\renewenvironment{leftbar}[1][\hsize]
{%
\MakeFramed{\hsize#1\advance\hsize-\width\FrameRestore}%
}
{\endMakeFramed}
\algnewcommand{\BlueComment}[1]{\textcolor{blue}{\hfill\(\triangleright\) #1}}
\algnewcommand{\LineComment}[1]{\State \(\triangleright\) #1}
\crefname{lstlisting}{listing}{listings}
\Crefname{lstlisting}{Listing}{Listings}
\crefname{code}{line}{lines}
\Crefname{code}{Line}{Lines}
\definecolor{mygreen}{rgb}{0.254,0.572,0.294}
\definecolor{mygray}{rgb}{0.5,0.5,0.5}
\definecolor{myorange}{rgb}{1,0.35,0}
\definecolor{mymauve}{rgb}{0.58,0,0.82}
\definecolor{myblue}{rgb}{0.2,0.4,0.6}
\definecolor{rakos4orange}{RGB}{255,165,0}
\definecolor{rakos4blue}{RGB}{14,48,173}
\definecolor{rakos4lblue}{RGB}{92,172,238}
\definecolor{rakos4dgray}{RGB}{77,77,77}
\definecolor{plainred}{RGB}{211,63,63}
\definecolor{plainorange}{RGB}{221,105,41}
\lstdefinelanguage{Golang}%
  {morekeywords=[1]{package,import,struct,defer,panic,%
     recover,select,var,const,iota, class},%
   morekeywords=[2]{string,uint,uint8,uint16,uint32,uint64,int,int8,int16,%
     int32,int64,bool,float32,float64,complex64,complex128,byte,rune,uintptr,%
     error,interface,node},%
   morekeywords=[3]{map,slice,make,new,nil,len,cap,copy,close,%
     delete,append,real,imag,complex,chan,},%
   morekeywords=[4]{break,continue,goto,switch,case,fallthrough,%
    default,},%
   morekeywords=[5]{Println,Printf,Error,Send},%
   sensitive=true,%
   morecomment=[l]{//},%
   morecomment=[s]{/*}{*/},%
   morestring=[b]",%
   morestring=[s]{`}{`},%
   }
\small\color{mygray}\textnormal,
\scriptsize\color{mygray}, 
\newcommand{\pierre}[1]{{\color{blue}{[Pierre: #1}]}}
\newcommand{\blue}[1]{{\color{blue}{#1}}}
\newcommand{\remove}[1]{}
\newif\ifcomments
\author{Pierre Civit}{Sorbonne University, France}{}{}{}
\author{Muhammad Ayaz Dzulfikar}{NUS Singapore, Singapore}{}{}{}
\author{Seth Gilbert}{NUS Singapore, Singapore}{}{}{Supported in part by Singapore MOE grant MOE2018-T2-1-160.}
\author{Vincent Gramoli}{University of Sydney and Redbelly Network, Australia}{}{}{}
\author{Rachid Guerraoui}{Ecole Polytechnique Fédérale de Lausanne (EPFL), Switzerland}{}{}{}
\author{Jovan Komatovic}{Ecole Polytechnique Fédérale de Lausanne (EPFL), Switzerland}{}{}{}
\author{Manuel Vidigueira}{Ecole Polytechnique Fédérale de Lausanne (EPFL), Switzerland}{}{}{}
\begin{document}


\title{Byzantine Consensus is $\Theta(n^2)$ \\ \smallskip \Large{The Dolev-Reischuk Bound is Tight even in Partial Synchrony!} \\ \footnotesize{\normalfont{(Extended Version)}}}

\titlerunning{Communication of Deterministic Byzantine Consensus is $\Theta(n^2)$} 


\authorrunning{P. Civit, M. A. Dzulfikar, S. Gilbert, V. Gramoli, R. Guerraoui, J. Komatovic, M. Vidigueira} 

\Copyright{Anon} 

\ccsdesc[500]{Theory of computation~Distributed algorithms} 

\keywords{Optimal Byzantine consensus, Communication complexity, Latency complexity} 

\bibliographystyle{plainurl}

\maketitle
  

\begin{abstract}
The Dolev-Reischuk bound says that any deterministic Byzantine consensus protocol has (at least) quadratic communication complexity in the worst case.
While it has been shown that the bound is tight in synchronous environments, it is still unknown whether a consensus protocol with quadratic communication complexity can be obtained in partial synchrony.
Until now, the most efficient known solutions for Byzantine consensus in partially synchronous settings had cubic communication complexity (e.g., HotStuff, binary DBFT).

This paper closes the existing gap by introducing \sname, a partially synchronous Byzantine consensus protocol with quadratic worst-case communication complexity.
In addition, \sname is optimally-resilient and achieves linear worst-case latency complexity.
The key technical contribution underlying \sname lies in the way we solve \emph{view synchronization}, the problem of bringing all correct processes to the same view with a correct leader for sufficiently long.
Concretely, we present \rare, a view synchronization protocol with quadratic communication complexity and linear latency complexity, which we utilize in order to obtain \sname.
\end{abstract}

\maketitle

\section{Introduction} \label{section:introduction}

Byzantine consensus~\cite{Lamport1982} is a fundamental distributed computing problem.
In recent years, it has become the target of widespread attention due to the advent of blockchain~\cite{Crain2017a,solida,Gramoli20} and decentralized cloud computing~\cite{cloud}, where it acts as a key primitive.
The demand of these contexts for high performance has given a new impetus to research towards Byzantine consensus with optimal communication guarantees.

Intuitively, Byzantine consensus enables processes to agree on a common value despite Byzantine failures. 
Formally, each process is either correct or faulty; correct processes follow a prescribed protocol, whereas faulty processes (up to $f > 0$) can arbitrarily deviate from it.
Each correct process \emph{proposes} a value, and should eventually \emph{decide} a value.
The following properties are guaranteed:
\begin{compactitem}
    \item \emph{Validity:} If all correct processes propose the same value, then only that value can be decided by a correct process.
    
    \item \emph{Agreement:} No two correct processes decide different values.
    
    \item \emph{Termination:} All correct processes eventually decide.
\end{compactitem}

The celebrated Dolev-Reischuk bound~\cite{Dolev1985} says that any deterministic solution of the Byzantine consensus problem requires correct processes to exchange (at least) a quadratic number of bits of information.
It has been shown that the bound is tight in synchronous environments~\cite{berman,Momose2021}.
However, for the partially synchronous environments~\cite{Dwork1988} in which the network becomes synchronous only after some unknown Global Stabilization Time ($\mathit{GST}$), no Byzantine consensus protocol achieving quadratic communication complexity is known.\footnote{No deterministic protocol solves Byzantine consensus in a completely asynchronous environment~\cite{Fischer1985}.}
Therefore, the question remains whether a partially synchronous Byzantine consensus with quadratic communication complexity exists~\cite{DBLP:journals/sigact/CohenKN21}.
Until now, the most efficient known solutions in partially synchronous environments had cubic communication complexity (e.g., HotStuff~\cite{Yin2019}, binary DBFT~\cite{Crain2017a}).

We close the gap by introducing \sname, a partially synchronous Byzantine consensus protocol with quadratic worst-case communication complexity, matching the Dolev-Reischuk~\cite{Dolev1985} bound.
In addition, \sname is optimally-resilient and achieves optimal linear worst-case latency.

\smallskip
\noindent \textbf{Partially synchronous ``leader-based'' Byzantine consensus.}
Partially synchronous ``leader-based'' consensus protocols~\cite{Yin2019, Diem2021,Castro2002,Buchman2018} operate in \emph{views}, each with a designated leader whose responsibility is to drive the system towards a decision.
If a process does not decide in a view, the process moves to the next view with a different leader and tries again.
Once all correct processes overlap in the same view with a correct leader for sufficiently long, a decision is reached.
Sadly, ensuring such an overlap is non-trivial; for example, processes can start executing the protocol at different times or their local clocks may drift before $\mathit{GST}$, thus placing them in views which are arbitrarily far apart.

Typically, these protocols contain two independent modules:
\begin{compactenum}
    \item View \core: The core of the protocol, responsible for executing the protocol logic of each view.
    
    \item View synchronizer: Auxiliary to the view core, responsible for ``moving'' processes to new views with the goal of ensuring a sufficiently long overlap to allow the view core to decide.
\end{compactenum}
Immediately after $\mathit{GST}$, the view synchronizer brings all correct processes together to the view of the most advanced correct process and keeps them in that view for sufficiently long.
At this point, if the leader of the view is correct, the processes decide.
Otherwise, they ``synchronously'' transit to the next view with a different leader and try again.
In summary, the communication complexity of such protocols can be approximated by $n \cdot C + S$, where:
\begin{compactitem}
    
    \item $C$ denotes the maximum number of bits a correct process sends while executing its view core during $[\mathit{GST}, t_d]$, where $t_d$ is the first time by which all correct processes have decided, and
    
    \item $S$ denotes the communication complexity of the view synchronizer during $[\mathit{GST}, t_d]$.
\end{compactitem}
    
    

Since the adversary can corrupt up to $f$ processes, correct processes must transit through at least $f + 1$ views after $\mathit{GST}$, in the worst case, before reaching a correct leader.
In fact, PBFT~\cite{Castro2002} and HotStuff~\cite{Yin2019} show that passing through $f + 1$ views is sufficient to reach a correct leader.
Furthermore, HotStuff employs the ``leader-to-all, all-to-leader'' communication pattern in each view.
As (1) each process is the leader of at most one view during $[\mathit{GST}, t_d]$, and (2) a process sends $O(n)$ bits in a view if it is the leader of the view, and $O(1)$ bits otherwise, HotStuff achieves $C = 1 \cdot O(n) + f \cdot O(1) = O(n)$.
Unfortunately, $S = (f + 1) \cdot O(n^2) = O(n^3)$ in HotStuff due to ``all-to-all'' communication exploited by its view synchronizer in \emph{every} view.\footnote{While HotStuff~\cite{Yin2019} does not explicitly state how the view synchronization is achieved, we have that $S = O(n^3)$ in Diem BFT~\cite{Diem2021}, which is a mature implementation of the HotStuff protocol.} 
Thus, $S = O(n^3)$ dominates the communication complexity of HotStuff, preventing it from matching the Dolev-Reischuk bound.
If we could design a consensus algorithm for which $S = O(n^2)$ while preserving $C = O(n)$, we would obtain a Byzantine consensus protocol with optimal communication complexity.
The question is if a view synchronizer achieving $S = O(n^2)$ in partial synchrony exists.

\smallskip
\noindent \textbf{Warm-up: View synchronization in complete synchrony.}
Solving the synchronization problem in a completely synchronous environment is not hard. 
As all processes start executing the protocol at the same time and their local clocks do not drift, the desired overlap can be achieved without any communication: processes stay in each view for the fixed, overlap-required time.
However, this simple method \emph{cannot} be used in a partially synchronous setting as it is neither guaranteed that all processes start at the same time nor that their local clocks do not drift (before $\mathit{GST}$).
Still, the observation that, if the system is completely synchronous, processes are not required to communicate in order to synchronize plays a crucial role in developing our view synchronizer which achieves quadratic communication complexity in partially synchronous environments.

\smallskip
\noindent \textbf{\rare.}
The main technical contribution of this work is \rare, a partially synchronous view synchronizer that achieves synchronization within $O(f)$ time after $\mathit{GST}$, and has $O(n^2)$ worst-case communication complexity.
In a nutshell, \rare adapts the ``no-communication'' technique of synchronous view synchronizers to partially synchronous environments.





Namely, \rare groups views into \emph{epochs}; each epoch contains $f + 1$ sequential views.
Instead of performing ``all-to-all'' communication in each view (like the ``traditional'' view synchronizers~\cite{Diem2021}), \rare performs a \emph{single} ``all-to-all'' communication step per epoch.
Specifically, \emph{only} at the end of each epoch do all correct processes communicate to enable further progress.
Once a process has entered an epoch, the process relies \emph{solely} on its local clock (without any communication) to move forward to the next view within the epoch.

Let us give a (rough) explanation of how \rare ensures synchronization.
Let $E$ be the smallest epoch entered by \emph{all} correct processes at or after $\mathit{GST}$; let the first correct process enter $E$ at time $t_E \geq \mathit{GST}$.
Due to (1) the ``all-to-all'' communication step performed at the end of the previous epoch $E - 1$, and (2) the fact that message delays are bounded by a known constant $\delta$ after $\mathit{GST}$, all correct processes enter $E$ by time $t_E + \delta$.
Hence, from the epoch $E$ onward, processes do not need to communicate in order to synchronize: it is sufficient for processes to stay in each view for $\delta + \Delta$ time to achieve $\Delta$-time overlap.
In brief, \rare uses communication to synchronize processes, while relying on local timeouts (and not communication!) to keep them synchronized.

\smallskip
\noindent \textbf{\sname.}
The second contribution of our work is \sname, an optimally-resilient partially synchronous Byzantine consensus protocol with (1) $O(n^2)$ worst-case communication complexity, and (2) $O(f)$ worst-case latency complexity. 
The view core module of \sname is the same as that of HotStuff;
as its view synchronizer, \sname uses \rare.
The combination of the HotStuff's view core and \rare ensures that $C = O(n)$ and $S = O(n^2)$.
By the aforementioned complexity formula, \sname achieves $n \cdot O(n) + O(n^2) = O(n^2)$ communication complexity.
\sname's linear latency is a direct consequence of \rare's ability to synchronize processes within $O(f)$ time after $\mathit{GST}$.

\smallskip
\noindent \textbf{Roadmap.}
We discuss related work in \Cref{section:related_work}.
In \Cref{section:model}, we define the system model.
We introduce \rare in \Cref{section:rare}.
In \Cref{section:quad}, we present \sname.
We conclude the paper in \Cref{section:conclusion}.
Detailed proofs are delegated to the optional appendix.

\section{Related Work}\label{section:related_work}

In this section, we discuss existing results in two related contexts: synchronous networks and randomized algorithms.  In addition, we discuss some precursor (and concurrent) results to our own. 

\smallskip
\noindent \textbf{Synchronous networks.} The first natural question is whether we can achieve synchronous Byzantine agreement with optimal latency and optimal communication complexity.  Momose and Ren answer that question in the affirmative, giving a synchronous  Byzantine agreement protocol with optimal $n/2$ resiliency, optimal $O(n^2)$ worst-case communication complexity and optimal $O(f)$ worst-case latency~\cite{Momose2021}. Optimality follows from two lower bounds: Dolev and Reischuk show that any Byzantine consensus protocol has an execution with quadratic communication complexity~\cite{Dolev1985}; Dolev and Strong show that any synchronous Byzantine consensus protocol has an execution with $f+1$ rounds \cite{Dolev1983}. 
Various other works have tackled the problem of minimizing the latency of Byzantine consensus~\cite{DBLP:conf/wdag/AbrahamDN017,Locher2020,Micali2017ByzantineA}.

\smallskip
\noindent \textbf{Randomization.} A classical approach to circumvent the FLP impossibility~\cite{Fischer1985} is using randomization~\cite{Ben-Or1983b}, where termination is not ensured deterministically. 
Exciting recent results by Abraham \textit{et al.}~\cite{Abraham2019a} and Lu \textit{et al.}~\cite{Lu2020} give fully asynchronous randomized Byzantine consensus with optimal $n/3$ resiliency, optimal $O(n^2)$ expected communication complexity and optimal $O(1)$ expected latency complexity.  Spiegelman~\cite{Spiegelman2021} took a neat \emph{hybrid} approach that achieved optimal results for both synchrony and randomized asynchrony simultaneously: if the network is synchronous, his algorithm yields optimal (deterministic) synchronous complexity; if the network is asynchronous, it falls back on a randomized algorithm and achieves optimal randomized complexity.

Recently, it has been shown that even randomized Byzantine agreement requires $\Omega(n^2)$ expected communication complexity, at least for achieving guaranteed safety against an \emph{adaptive adversary} in an asynchronous setting or against a \emph{strongly rushing adaptive adversary} in a synchronous setting~\cite{Abraham2019b,abraham2019asymptotically}.  (See the papers for details.)  Amazingly, it is possible to break the $O(n^2)$ barrier by accepting a non-zero (but $o(1)$) probability of disagreement~\cite{Chen2018, Cohen2020, King2011}. 


\smallskip 
\noindent \textbf{Authentication.} Most of the results above are \emph{authenticated}: they assume a trusted setup phase\footnote{A trusted setup phase is notably different from randomized algorithms where randomization is used throughout.} wherein devices establish and exchange cryptographic keys; this allows for messages to be signed in a way that proves who sent them. Recently, many of the communication-efficient agreement protocols (such as~\cite{Abraham2019a, Lu2020}) rely on \emph{threshold signatures} (such as~\cite{Libert2016}).  The Dolev-Reischuk~\cite{Dolev1985} lower bound shows that quadratic communication is needed even in such a case (as it looks at the message complexity of authenticated agreement).

Among deterministic, non-authenticated Byzantine agreement protocols, DBFT~\cite{Crain2017a} achieves $O(n^3)$ communication complexity.  For randomized non-authenticated Byzantine agreement protocols, Mostefaoui \textit{et al.}~\cite{MostefaouiMR15} achieve $O(n^2)$ communication complexity---but they assume a perfect common coin, for which efficient implementations may also require signatures. 

We note that it is possible to (1) work towards an authenticated setting from a non-authenticated one by rolling out a public key infrastructure (PKI) \cite{Bracha87, AD15, GKLP18}, (2) set up a threshold scheme \cite{AbrahamJMMST21} without a \emph{trusted dealer}, and (3) asynchronously emulate a perfect common coin \cite{CKS05} used by randomized Byzantine consensus protocols \cite{Rabin83, MostefaouiMR15, Abraham2019a, Lu2020}.


\smallskip
\noindent \textbf{Other related work.}   In this paper, we focus on the partially synchronous setting~\cite{Dwork1988}, where the question of optimal communication complexity of Byzantine agreement has remained open. 
The question can be addressed precisely with the help of rigorous frameworks \cite{Gafni1998, Guerraoui2004, Keidar2006} that were developed to express partially synchronous protocols using a round-based paradigm. More specifically,  state-of-the-art partially synchronous BFT protocols~\cite{Diem2021, Buchman2018, Yin2019, GolanGueta2019} have been developed within a view-based paradigm with a rotating leader, e.g., the seminal PBFT protocol \cite{Castro2002}.
While many approaches improve the complexity for some optimistic scenarios~\cite{Martin2005,Ramasamy2006, Kotla2009,Kuznetsov2021,Pass2018a}, none of them were able to reach the quadratic worst-case Dolev-Reischuk bound.  

The problem of view synchronization was defined in \cite{Naor2021}.
An existing implementation of this abstraction \cite{GolanGueta2019} was based on Bracha's double-echo reliable broadcast at each view, inducing a cubic communication complexity in total. This communication complexity has been reduced for some optimistic scenarios~\cite{Naor2021} and in terms of \emph{expected} complexity~\cite{Naor2020}. The problem has been formalized more precisely in~\cite{Bravo2020} to facilitate formal verification of PBFT-like protocols.

It might be worthwhile highlighting some connections between the view synchronization abstraction and the leader election abstraction $\Omega$~\cite{ Chandra1996, Chandra1992}, capturing the  weakest failure detection information needed to solve consensus (and extended to the Byzantine context in~\cite{Kihlstrom2003a}).   Leaderless partially synchronous Byzantine consensus protocols have also been proposed~\cite{Antoniadis2021a}, somehow indicating that the notion of a leader is not necessary in the mechanisms of a consensus protocol,  even if $\Omega$ is the weakest failure detector needed to solve the problem.
Clock synchronization \cite{ Dolev1995, Srikanth1987} and view synchronization are orthogonal problems.

\smallskip
\noindent \textbf{Concurrent research.}
We have recently discovered concurrent and independent research by Lewis-Pye~\cite{LewisPye}.  Lewis-Pye appears to have discovered a similar approach to the one that we present in this paper, giving an algorithm for state machine replication in a partially synchronous model with quadratic message complexity.  As in this paper, Lewis-Pye makes the key observation that we do not need to synchronize in every view; views can be grouped together, with synchronization occurring only once every fixed number of views.  This yields essentially the same algorithmic approach. Lewis-Pye focuses on state machine replication, instead of Byzantine agreement (though state machine replication is implemented via repeated Byzantine agreement).  
The other useful property of his algorithm is \emph{optimistic responsiveness}, which applies to the multi-shot case and ensures that, in good portions of the executions, decisions happen as quickly as possible. We encourage the reader to look at~\cite{LewisPye} for a different presentation of a similar approach.

Moreover, the similar approach to ours and Lewis-Pye's has been proposed in the first version of HotStuff~\cite{abraham2018hot}: processes synchronize once per \emph{level}, where each level consists of $n$ views.
The authors mention that this approach guarantees the quadratic communication complexity; however, this claim was not formally proven in their work.
The claim was dropped in later versions of HotStuff (including the published version).
We hope readers of our paper will find an increased appreciation of the ideas introduced by HotStuff.

\section{System Model} \label{section:model}

\noindent\textbf{Processes.}
We consider a static set $\{P_1, P_2, ..., P_n\}$ of $n = 3f + 1$ processes out of which at most $f$ can be Byzantine, i.e., can behave arbitrarily.
If a process is Byzantine, the process is \emph{faulty}; otherwise, the process is \emph{correct}.
Processes communicate by exchanging messages over an authenticated point-to-point network.
The communication network is \emph{reliable:} if a correct process sends a message to a correct process, the message is eventually received.
We assume that processes have local hardware clocks.
Furthermore, we assume that local steps of processes take zero time, as the time needed for local computation is negligible compared to message delays.
Finally, we assume that no process can take infinitely many steps in finite time.

\smallskip
\noindent\textbf{Partial synchrony.}
We consider the partially synchronous model introduced in~\cite{Dwork1988}.
For every execution, there exists a Global Stabilization Time ($\mathit{GST}$) and a positive duration $\delta$ such that message delays are bounded by $\delta$ after $\mathit{GST}$.
Furthermore, $\mathit{GST}$ is not known to processes, whereas $\delta$ is known to processes.
We assume that all correct processes start executing their protocol by $\mathit{GST}$.
The hardware clocks of processes may drift arbitrarily before $\mathit{GST}$, but do not drift thereafter.

    

\smallskip
\noindent \textbf{Cryptographic primitives.}
We assume a $(k, n)$-threshold signature scheme~\cite{Libert2016},
where $k = 2f + 1 = n - f$.
In this scheme, each process holds a distinct private key and there is a single public key.
Each process $P_i$ can use its private key to produce a partial signature of a message $m$ by invoking $\mathit{ShareSign}_i(m)$.
A partial signature $\mathit{tsignature}$ of a message $m$ produced by a process $P_i$ can be verified by $\mathit{ShareVerify}_i(m, \mathit{tsignature})$.
Finally, set $S = \{\mathit{tsignature}_i\}$ of partial signatures, where $|S| = k$ and, for each $\mathit{tsignature}_i \in S$, $\mathit{tsignature}_i = \mathit{ShareSign}_i(m)$, can be combined into a \emph{single} (threshold) signature by invoking $\mathit{Combine}(S)$; a combined signature $\mathit{tcombined}$ of message $m$ can be verified by $\mathit{CombinedVerify}(m, \mathit{tcombined})$.
Where appropriate, invocations of $\mathit{ShareVerify}(\cdot)$ and $\mathit{CombinedVerify}(\cdot)$ are implicit in our descriptions of protocols.
$\mathsf{P\_Signature}$ and $\mathsf{T\_Signature}$ denote a partial signature and a (combined) threshold signature, respectively. 

\smallskip
\noindent \textbf{Complexity of Byzantine consensus.}
Let $\mathsf{Consensus}$ be a partially synchronous Byzantine consensus protocol and let $\mathcal{E}(\mathsf{Consensus})$ denote the set of all possible executions.
Let $\alpha \in \mathcal{E}(\mathsf{Consensus})$ be an execution and $t_d(\alpha)$ be the first time by which all correct processes have decided in $\alpha$.

A \emph{word} contains a constant number of signatures and values.
Each message contains at least a single word.
We define the communication complexity of $\alpha$ as the number of words sent in messages by all correct processes during the time period $[\mathit{GST}, t_d(\alpha)]$; if $\mathit{GST} > t_d(\alpha)$, the communication complexity of $\alpha$ is $0$. 
The latency complexity of $\alpha$ is $\max(0, t_d(\alpha) - \mathit{GST})$.

The \emph{communication complexity} of $\mathsf{Consensus}$ is defined as
\begin{equation*}
\max_{\alpha \in \mathcal{E}(\mathsf{Consensus})}\bigg\{\text{communication complexity of } \alpha\bigg\}.
\end{equation*}

Similarly, the \emph{latency complexity} of $\mathsf{Consensus}$ is defined as 
\begin{equation*}
\max_{\alpha \in \mathcal{E}(\mathsf{Consensus})}\bigg\{\text{latency complexity of } \alpha\bigg\}.
\end{equation*}

We underline that the number of words sent by correct processes before $\mathit{GST}$ is unbounded in any partially synchronous Byzantine consensus protocol~\cite{Spiegelman2021}.
Moreover, not a single correct process is guaranteed to decide before $\mathit{GST}$ in any partially synchronous Byzantine consensus protocol~\cite{Fischer1985}; that is why the latency complexity of such protocols is measured from $\mathit{GST}$.

\section{\rare} \label{section:rare}

This section presents \rare, a partially synchronous view synchronizer that achieves synchronization within $O(f)$ time after $\mathit{GST}$, and has $O(n^2)$ worst-case communication complexity.
First, we define the problem of view synchronization (\Cref{subsection:view_synchronization_problem_definition}).
Then, we describe \rare, and present its pseudocode (\Cref{subsection:rare}).
Finally, we reason about \rare's correctness and complexity (\Cref{subsection:rare_proof_sketch}).


\subsection{Problem Definition} \label{subsection:view_synchronization_problem_definition}

View synchronization is defined as the problem of bringing all correct processes to the same view with a correct leader for sufficiently long~\cite{Bravo2020,Naor2020, Naor2021}.
More precisely, let $\mathsf{View} = \{1, 2, ...\}$ denote the set of views.
For each view $v \in \mathsf{View}$, we define $\mathsf{leader}(v)$ to be a process that is the \emph{leader} of view $v$.
The view synchronization problem is associated with a predefined time $\Delta > 0$, which denotes the desired duration during which processes must be in the same view with a correct leader in order to synchronize.
View synchronization provides the following interface:
\begin{compactitem}
    \item \textbf{Indication} $\mathsf{advance}(\mathsf{View} \text{ } v)$: The process advances to a view $v$.
\end{compactitem}
We say that a correct process \emph{enters} a view $v$ at time $t$ if and only if the $\mathsf{advance}(v)$ indication occurs at time $t$.
Moreover, a correct process \emph{is in view} $v$ between the time $t$ (including $t$) at which the $\mathsf{advance}(v)$ indication occurs and the time $t'$ (excluding $t'$) at which the next $\mathsf{advance}(v' \neq v)$ indication occurs.
If an $\mathsf{advance}(v' \neq v)$ indication never occurs, the process remains in the view $v$ from time $t$ onward.

Next, we define a \emph{synchronization time} as a time at which all correct processes are in the same view with a correct leader for (at least) $\Delta$ time.

\begin{definition} [Synchronization time] \label{definition:synchronization_time}
Time $t_s$ is a \emph{synchronization time} if (1) all correct processes are in the same view $v$ from time $t_s$ to (at least) time $t_s + \Delta$, and (2) $\mathsf{leader}(v)$ is correct.
\end{definition}

View synchronization ensures the \emph{eventual synchronization} property which states that there exists a synchronization time at or after $\mathit{GST}$.

\smallskip
\noindent \textbf{Complexity of view synchronization.}
Let $\mathsf{Synchronizer}$ be a partially synchronous view synchronizer and let $\mathcal{E}(\mathsf{Synchronizer})$ denote the set of all possible executions.
Let $\alpha \in \mathcal{E}(\mathsf{Synchronizer})$ be an execution and $t_s(\alpha)$ be the first synchronization time at or after $\mathit{GST}$ in $\alpha$ ($t_s(\alpha) \geq \mathit{GST}$).
We define the communication complexity of $\alpha$ as the number of words sent in messages by all correct processes during the time period $[\mathit{GST}, t_s(\alpha) + \Delta]$.
The latency complexity of $\alpha$ is $t_s(\alpha) + \Delta - \mathit{GST}$.

The \emph{communication complexity} of $\mathsf{Synchronizer}$ is defined as
\begin{equation*}
\max_{\alpha \in \mathcal{E}(\mathsf{Synchronizer})}\bigg\{\text{communication complexity of } \alpha\bigg\}.
\end{equation*}

Similarly, the \emph{latency complexity} of $\mathsf{Synchronizer}$ is defined as 
\begin{equation*}
\max_{\alpha \in \mathcal{E}(\mathsf{Synchronizer})}\bigg\{\text{latency complexity of } \alpha\bigg\}.
\end{equation*}

\subsection{Protocol} \label{subsection:rare}

This subsection details \rare (\Cref{algorithm:synchronizer}).
In essence, \rare achieves $O(n^2)$ communication complexity and $O(f)$ latency complexity by exploiting ``all-to-all'' communication only once per $f + 1$ views.

\smallskip 
\noindent \textbf{Intuition.}
We group views into \emph{epochs}, where each epoch contains $f + 1$ sequential views; $\mathsf{Epoch} = \{1, 2, ...\}$ denotes the set of epochs.
Processes move through an epoch solely by means of local timeouts (without any communication).
However, at the end of each epoch, processes engage in an ``all-to-all'' communication step to obtain permission to move onto the next epoch: (1) Once a correct process has completed an epoch, it broadcasts a message informing other processes of its completion; (2) Upon receiving $2f + 1$ of such messages, a correct process enters the future epoch. Note that (2) applies to \emph{all} processes, including those in arbitrarily ``old'' epochs.
Overall, this ``all-to-all'' communication step is the \emph{only} communication processes perform within a single epoch, implying that per-process communication complexity in each epoch is $O(n)$.
\Cref{fig:raresync} illustrates the main idea behind \rare.

\begin{figure}[h]
    \centering
    \includegraphics[scale=0.2]{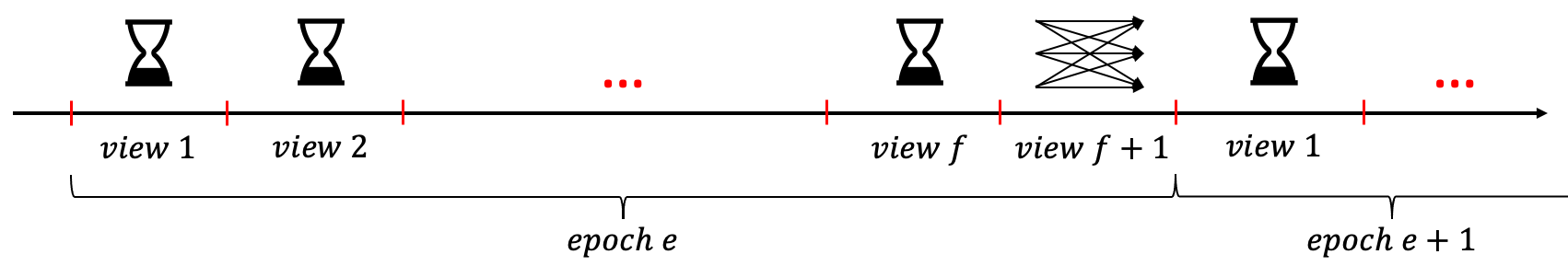}
    \caption{Intuition behind \rare: Processes communicate only in the last view of an epoch; before the last view, they rely solely on local timeouts.}
    \label{fig:raresync}
\end{figure}

Roughly speaking, after $\mathit{GST}$, all correct processes simultaneously enter the same epoch within $O(f)$ time.
After entering the same epoch, processes are guaranteed to synchronize in that epoch, which takes (at most) an additional $O(f)$ time.
Thus, the latency complexity of \rare is $O(f)$.
The communication complexity of \rare is $O(n^2)$ as every correct process executes at most a constant number of epochs, each with $O(n)$ per-process communication, after $\mathit{GST}$.

\smallskip
\noindent \textbf{Protocol description.}
We now explain how \rare works.
The pseudocode of \rare is given in \Cref{algorithm:synchronizer}, whereas all variables, constants, and functions are presented in \Cref{algorithm:variable_constants}.

\begin{algorithm} [b]
\caption{\rare: Variables (for process $P_i$), constants, and functions}
\label{algorithm:variable_constants}
\begin{algorithmic} [1]

\State \textbf{Variables:}
\State \hskip2em $\mathsf{Epoch}$ $\mathit{epoch}_i \gets 1$ \BlueComment{current epoch}
\State \hskip2em $\mathsf{View}$ $\mathit{view}_i \gets 1$ \BlueComment{current view within the current epoch; $\mathit{view}_i \in [1, f + 1]$}  \label{line:init_view}
\State \hskip2em $\mathsf{Timer}$ $\mathit{view\_timer}_i$ \BlueComment{measures the duration of the current view}
\State \hskip2em $\mathsf{Timer}$ $\mathit{dissemination\_timer}_i$ \BlueComment{measures the duration between two communication steps}
\State \hskip2em $\mathsf{T\_Signature}$ $\mathit{epoch\_sig}_i \gets \bot$ \BlueComment{proof that $\mathit{epoch}_i$ can be entered}

\smallskip
\State \textbf{Constants:}
\State \hskip2em $\mathsf{Time}$ $\mathit{view\_duration} = \Delta + 2\delta$ \BlueComment{duration of each view} \label{line:view_duration}

\smallskip
\State \textbf{Functions:}
\State \hskip2em $\mathsf{leader}(\mathsf{View} \text{ } v) \equiv P_{(v \text{ } \mathsf{mod} \text{ } n) + 1}$ \BlueComment{a round-robin function} \label{line:leader_rare}
\end{algorithmic}
\end{algorithm} 

\begin{algorithm} [t]
\caption{\rare: Pseudocode (for process $P_i$)}
\label{algorithm:synchronizer}
\begin{algorithmic} [1]

\State \textbf{upon} $\mathsf{init}$: \BlueComment{start of the protocol} \label{line:start_algorithm}

\State \hskip2em
$\mathit{view\_timer}_i.\mathsf{measure(}\mathit{view\_duration}\mathsf{)}$ \BlueComment{measure the duration of the first view} \label{line:view_timer_measure_first_view} 

\State \hskip2em \textbf{trigger} $\mathsf{advance(}1\mathsf{)}$ \label{line:start_view_1} \BlueComment{enter the first view}

\smallskip
\State \textbf{upon} $\mathit{view\_timer}_i$ \textbf{expires}:
\label{line:rule_view_expired}
\State \hskip2em \textbf{if} $\mathit{view}_i < f + 1$: \BlueComment{check if the current view is not the last view of the current epoch} \label{line:check_last_view}
\State \hskip4em $\mathit{view}_i \gets \mathit{view}_i + 1$ \label{line:increment_view}
\State \hskip4em $\mathsf{View}$ $\mathit{view\_to\_advance} \gets (\mathit{epoch}_i - 1) \cdot (f + 1) + \mathit{view}_i$

\State \hskip4em
$\mathit{view\_timer}_i.\mathsf{measure(}\mathit{view\_duration}\mathsf{)}$ \BlueComment{measure the duration of the view} \label{line:view_timer_measure_without_msg_exchange} 

\State \hskip4em \textbf{trigger} $\mathsf{advance(}\mathit{view\_to\_advance}\mathsf{)}$ \label{line:start_view_without_msg_exchange} \BlueComment{enter the next view}

\State \hskip2em \textbf{else:}
\State \hskip4em \textcolor{blue}{\(\triangleright\) inform other processes that the epoch is completed}
\State \hskip4em \textbf{broadcast} $\langle \textsc{epoch-completed}, \mathit{epoch}_i, \mathit{ShareSign}_i(\mathit{epoch}_i) \rangle$ \label{line:broadcast_epoch_over}

\smallskip
\State \textbf{upon} exists $\mathsf{Epoch}$ $e$ such that $e \geq \mathit{epoch}_i$ and $\langle \textsc{epoch-completed}, e, \mathsf{P\_Signature} \text{ } \mathit{sig} \rangle$ is received from $2f + 1$ processes: \label{line:receive_epoch_over}
\State \hskip2em $\mathit{epoch\_sig}_i \gets \mathit{Combine}\big(\{\mathit{sig} \,|\, \mathit{sig} \text{ is received in an } \textsc{epoch-completed} \text{ message}\}\big)$ \label{line:combine-signatures}
\State \hskip2em $\mathit{epoch}_i \gets e + 1$  \label{line:update_epoch_1} 
\State \hskip2em
$\mathit{view\_timer}_i.\mathsf{cancel()}$ \label{line:cancel_view_timer_1}
\State \hskip2em $\mathit{dissemination\_timer}_i.\mathsf{cancel()}$ \label{line:sync_timer_cancel_1}
\State \hskip2em $\mathit{dissemination\_timer}_i.\mathsf{measure(}\delta\mathsf{)}$ \label{line:sync_timer_measure} \BlueComment{wait $\delta$ time before broadcasting \textsc{enter-epoch}}

\smallskip
\State \textbf{upon} reception of $\langle \textsc{enter-epoch}, \mathsf{Epoch} \text{ } e, \mathsf{T\_Signature} \text{ } \mathit{sig} \rangle$ such that $e > \mathit{epoch}_i$: \label{line:receive_epoch_over_complete}
\State \hskip2em $\mathit{epoch\_sig}_i \gets \mathit{sig}$ \label{line:adopt_signature} \BlueComment{$\mathit{sig}$ is a threshold signature of epoch $e - 1$}
\State \hskip2em $\mathit{epoch}_i \gets e$  \label{line:update_epoch_2} 
\State \hskip2em
$\mathit{view\_timer}_i.\mathsf{cancel()}$ \label{line:cancel_view_timer_2}
\State \hskip2em $\mathit{dissemination\_timer}_i.\mathsf{cancel()}$  \label{line:sync_timer_cancel_2}
\State \hskip2em $\mathit{dissemination\_timer}_i.\mathsf{measure(}\delta\mathsf{)}$ \label{line:sync_timer_measure_2} \BlueComment{wait $\delta$ time before broadcasting \textsc{enter-epoch}}

\smallskip
\State \textbf{upon} $\mathit{dissemination\_timer}_i$ \textbf{expires}: \label{line:sync_timer_expires}
\State \hskip2em \textbf{broadcast} $\langle \textsc{enter-epoch}, \mathit{epoch}_i, \mathit{epoch\_sig}_i \rangle$ \label{line:broadcast_epoch_over_complete}
\State \hskip2em $\mathit{view}_i \gets 1$ \BlueComment{reset the current view to $1$} \label{line:reset_view}
\State \hskip2em $\mathsf{View}$ $\mathit{view\_to\_advance} \gets (\mathit{epoch}_i - 1) \cdot (f + 1) + \mathit{view}_i$

\State \hskip2em
$\mathit{view\_timer}_i.\mathsf{measure(}\mathit{view\_duration}\mathsf{)}$   \label{line:view_timer_measure} \BlueComment{measure the duration of the view}

\State \hskip2em \textbf{trigger} $\mathsf{advance(}\mathit{view\_to\_advance}\mathsf{)}$ \label{line:start_view_2} \BlueComment{enter the first view of the new epoch}

\end{algorithmic}
\end{algorithm}

We explain \rare's pseudocode (\Cref{algorithm:synchronizer}) from the perspective of a correct process $P_i$.
Process $P_i$ utilizes two timers: $\mathit{view\_timer}_i$ and $\mathit{dissemination\_timer}_i$.
A timer has two methods:
\begin{compactenum}
    \item $\mathsf{measure}(\mathsf{Time} \text{ } x)$: After exactly $x$ time as measured by the local clock, an expiration event is received by the host.
    Note that, as local clocks can drift before $\mathit{GST}$, $x$ time as measured by the local clock may not amount to $x$ real time (before $\mathit{GST}$).
    \item $\mathsf{cancel()}$: 
    This method cancels all previously invoked $\mathsf{measure}(\cdot)$ methods on that timer, i.e., all pending expiration events (pertaining to that timer) are removed from the event queue.
\end{compactenum}
In \rare, $\mathsf{leader}(\cdot)$ is a round-robin function (line~\ref{line:leader_rare} of \Cref{algorithm:variable_constants}).

Once $P_i$ starts executing $\rare$ (line~\ref{line:start_algorithm}), it instructs $\mathit{view\_timer}_i$ to measure the duration of the first view (line~\ref{line:view_timer_measure_first_view}) and it enters the first view (line~\ref{line:start_view_1}).

Once $\mathit{view\_timer}_i$ expires (line~\ref{line:rule_view_expired}), $P_i$ checks whether the current view is the last view of the current epoch, $\mathit{epoch}_i$ (line~\ref{line:check_last_view}).
If that is not the case, the process advances to the next view of $\mathit{epoch}_i$ (line~\ref{line:start_view_without_msg_exchange}).
Otherwise, the process broadcasts an $\textsc{epoch-completed}$ message (line~\ref{line:broadcast_epoch_over}) signaling that it has completed $\mathit{epoch}_i$.
At this point in time, the process does not enter any view.

If, at any point in time, $P_i$ receives either (1) $2f + 1$ $\textsc{epoch-completed}$ messages for some epoch $e \geq \mathit{epoch}_i$ (line~\ref{line:receive_epoch_over}), or (2) an \textsc{enter-epoch} message for some epoch $e' > \mathit{epoch}_i$ (line~\ref{line:receive_epoch_over_complete}), the process obtains a proof that a new epoch $E > \mathit{epoch}_i$ can be entered.
However, before entering $E$ and propagating the information that $E$ can be entered, $P_i$ waits $\delta$ time (either line~\ref{line:sync_timer_measure} or line~\ref{line:sync_timer_measure_2}).
This $\delta$-waiting step is introduced to limit the number of epochs $P_i$ can enter within any $\delta$ time period after $\mathit{GST}$ and is crucial for keeping the communication complexity of \rare quadratic.
For example, suppose that processes are allowed to enter epochs and propagate \textsc{enter-epoch} messages without waiting.
Due to an accumulation (from before $\mathit{GST}$) of \textsc{enter-epoch} messages for different epochs, a process might end up disseminating an arbitrary number of these messages by receiving them all at (roughly) the same time.
To curb this behavior, given that message delays are bounded by $\delta$ after $\mathit{GST}$, we force a process to wait $\delta$ time, during which it receives all accumulated messages, before entering the largest known epoch.

Finally, after $\delta$ time has elapsed (line~\ref{line:sync_timer_expires}), $P_i$ disseminates the information that the epoch $E$ can be entered (line~\ref{line:broadcast_epoch_over_complete}) and it enters the first view of $E$ (line~\ref{line:start_view_2}).

\subsection{Correctness and Complexity: Proof Sketch} \label{subsection:rare_proof_sketch}

This subsection presents a proof sketch of the correctness, latency complexity, and communication complexity of \rare. The full proof can be found in \Cref{section:appendix_rare}.

In order to prove the correctness of \rare, we must show that the eventual synchronization property is ensured, i.e., there is a synchronization time $t_s \geq \mathit{GST}$.
For the latency complexity, it suffices to bound $t_s + \Delta - \mathit{GST}$ by $O(f)$.
This is done by proving that synchronization happens within (at most) 2 epochs after $\mathit{GST}$.
As for the communication complexity, we prove that any correct process enters a constant number of epochs during the time period $[\mathit{GST}, t_s + \Delta]$. 
Since every correct process sends $O(n)$ words per epoch, the communication complexity of \rare is $O(n^2) = O(1) \cdot O(n) \cdot n$.
We work towards these conclusions by introducing some key concepts and presenting a series of intermediate results.

A correct process \emph{enters} an epoch $e$ at time $t$ if and only if the process enters the first view of $e$ at time $t$ (either line~\ref{line:start_view_1} or line~\ref{line:start_view_2}).
We denote by $t_e$ the first time a correct process enters epoch $e$.

\smallskip
\noindent \textbf{Result 1:} \emph{If a correct process enters an epoch $e > 1$, then (at least) $f + 1$ correct processes have previously entered epoch $e - 1$.} 
\smallskip
\\ The goal of the communication step at the end of each epoch is to prevent correct processes from arbitrarily entering future epochs. 
In order for a new epoch $e > 1$ to be entered, at least $f + 1$ correct processes must have entered and ``gone through'' each view of the previous epoch, $e - 1$. 
This is indeed the case: in order for a correct process to enter $e$, the process must either (1) collect $2f + 1$ $\textsc{epoch-completed}$ messages for $e - 1$ (line~\ref{line:receive_epoch_over}), or (2) receive an \textsc{enter-epoch} message for $e$, which contains a threshold signature of $e - 1$ (line~\ref{line:receive_epoch_over_complete}). 
In either case, at least $f + 1$ correct processes must have broadcast $\textsc{epoch-completed}$ messages for epoch $e - 1$ (line~\ref{line:broadcast_epoch_over}), which requires them to go through epoch $e - 1$.
Furthermore, $t_{e - 1} \leq t_{e}$; recall that local clocks can drift before $\mathit{GST}$.

\smallskip
\noindent \textbf{Result 2:} \emph{Every epoch is eventually entered by a correct process.}
\smallskip
\\By contradiction, consider the greatest epoch ever entered by a correct process, $e^*$.
In brief, every correct process will eventually (1) receive the \textsc{enter-epoch} message for $e^*$ (line~\ref{line:receive_epoch_over_complete}), (2) enter $e^*$ after its $\mathit{dissemination\_timer}$ expires (lines~\ref{line:sync_timer_expires} and~\ref{line:start_view_2}), (3) send an $\textsc{epoch-completed}$ message for $e^*$ (line~\ref{line:broadcast_epoch_over}), (4) collect $2f+1$ $\textsc{epoch-completed}$ messages for $e^*$ (line~\ref{line:receive_epoch_over}), and, finally, (5) enter $e^* + 1$ (lines~\ref{line:update_epoch_1},~\ref{line:sync_timer_measure},~\ref{line:sync_timer_expires} and~\ref{line:start_view_2}), resulting in a contradiction.
Note that, if $e^* = 1$, no \textsc{enter-epoch} message is sent: all correct processes enter $e^* = 1$ once they start executing \rare (line~\ref{line:start_view_1}).

\smallskip
We now define two epochs: $e_{\mathit{max}}$ and $e_{\mathit{final}} = e_{\mathit{max}} + 1$.
These two epochs are the main protagonists in the proof of correctness and complexity of \rare.

\smallskip
\noindent \textbf{Definition of $\boldsymbol{e_{\mathit{max}}}$:}
\emph{Epoch $e_{\mathit{max}}$ is the greatest epoch entered by a correct process before $\mathit{GST}$; if no such epoch exists, $e_{\mathit{max}} = 0$.\footnote{Epoch $0$ is considered as a special epoch. Note that $0 \notin \mathsf{Epoch}$, where $\mathsf{Epoch}$ denotes the set of epochs (see \Cref{subsection:rare}).}}

\smallskip
\noindent \textbf{Definition of $\boldsymbol{e_{\mathit{final}}}$:} \emph{Epoch $e_{\mathit{final}}$ is the smallest epoch first entered by a correct process at or after $\mathit{GST}$. Note that $\mathit{GST} \leq t_{e_{\mathit{final}}}$. Moreover, $e_{\mathit{final}} = e_{\mathit{max}} + 1$ (by Result 1).}

\smallskip
\noindent \textbf{Result 3:} \emph{For any epoch $e \geq e_{\mathit{final}}$, no correct process broadcasts an \textsc{epoch-completed} message for $e$ (line~\ref{line:broadcast_epoch_over}) before time $t_e + \mathit{epoch\_duration}$, where $\mathit{epoch\_duration} = (f + 1) \cdot \mathit{view\_duration}$.}
\smallskip
\\ This statement is a direct consequence of the fact that, after $\mathit{GST}$, it takes exactly $\mathit{epoch\_duration}$ time for a process to go through $f+1$ views of an epoch; local clocks do not drift after $\mathit{GST}$.
Specifically, the earliest a correct process can broadcast an \textsc{epoch-completed} message for $e$ (line~\ref{line:broadcast_epoch_over}) is at time $t_e + \mathit{epoch\_duration}$, where $t_e$ denotes the first time a correct process enters epoch $e$.

\smallskip
\noindent \textbf{Result 4:} \emph{Every correct process enters epoch $e_{\mathit{final}}$ by time $t_{e_{\mathit{final}}} + 2\delta$.} 
\smallskip
\\ Recall that the first correct process enters $e_{\mathit{final}}$ at time $t_{e_{\mathit{final}}}$. 
If $e_{\mathit{final}} = 1$, all correct processes enter $e_{\mathit{final}}$ at $t_{e_{\mathit{final}}}$.
Otherwise, by time $t_{e_{\mathit{final}}} + \delta$, all correct processes will have received an \textsc{enter-epoch} message for $e_{\mathit{final}}$ and started the $\mathit{dissemination\_timer}_i$ with $epoch_i = e_{\mathit{final}}$ (either lines~\ref{line:update_epoch_1},~\ref{line:sync_timer_measure} or~\ref{line:update_epoch_2},~\ref{line:sync_timer_measure_2}).
By results 1 and 3, no correct process sends an \textsc{epoch-completed} message for an epoch $\geq e_{\mathit{final}}$ (line~\ref{line:broadcast_epoch_over}) before time $t_{e_{\mathit{final}}} + \mathit{epoch\_duration}$, which implies that the $\mathit{dissemination\_timer}$ will not be cancelled.
Hence, the $\mathit{dissemination\_timer}$ will expire by time $t_{e_{\mathit{final}}} + 2\delta$, causing all correct processes to enter $e_{\mathit{final}}$ by time $t_{e_{\mathit{final}}} + 2\delta$.

\smallskip
\noindent \textbf{Result 5:} \emph{In every view of $e_{\mathit{final}}$, processes overlap for (at least) $\Delta$ time. In other words, there exists a synchronization time $t_s \leq t_{e_{\mathit{final}}} + \mathit{epoch\_duration} - \Delta$.}
\smallskip
\\ By Result 3, no future epoch can be entered before time $t_{e_{\mathit{final}}} + \mathit{epoch\_duration}$.
This is precisely enough time for the first correct process (the one to enter $e_{\mathit{final}}$ at $t_{e_{\mathit{final}}}$) to go through all $f+1$ views of $e_{\mathit{final}}$, spending $\mathit{view\_duration}$ time in each view.
Since clocks do not drift after $\mathit{GST}$ and processes spend the same amount of time in each view, the maximum delay of $2\delta$ between processes (Result 4) applies to every view in $e_{\mathit{final}}$.
Thus, all correct processes overlap with each other for (at least) $\mathit{view\_duration} - 2\delta = \Delta$ time in every view of $e_{\mathit{final}}$.
As the $\mathsf{leader}(\cdot)$ function is round-robin, at least one of the $f+1$ views must have a correct leader.
Therefore, synchronization must happen within epoch $e_{\mathit{final}}$, i.e., there is a synchronization time $t_s$ such that $t_{e_{\mathit{final}}} + \Delta \leq t_s + \Delta \leq t_{e_{\mathit{final}}} + \mathit{epoch\_duration}$.



\smallskip
\noindent \textbf{Result 6:} $t_{e_\mathit{final}} \leq \mathit{GST} + \mathit{epoch\_duration} + 4\delta$\emph{.}
\smallskip
\\ If $e_{\mathit{final}} = 1$, all correct processes started executing \rare at time $\mathit{GST}$.
Hence, $t_{e_{\mathit{final}}} = \mathit{GST}$.
Therefore, the result trivially holds in this case.

Let $e_{\mathit{final}} > 1$; recall that $e_{\mathit{final}} = e_{\mathit{max}} + 1$.
(1) By time $\mathit{GST} + \delta$, every correct process receives an \textsc{enter-epoch} message for $e_{\mathit{max}}$ (line~\ref{line:receive_epoch_over_complete}) as the first correct process to enter $e_{\mathit{max}}$ has broadcast this message before $\mathit{GST}$ (line~\ref{line:broadcast_epoch_over_complete}).
Hence, (2) by time $\mathit{GST} + 2\delta$, every correct process enters $e_{\mathit{max}}$.\footnote{If $e_{\mathit{max}} = 1$, every correct process enters $e_{\mathit{max}}$ by time $\mathit{GST}$.}
Then, (3) every correct process broadcasts an \textsc{epoch-completed} message for $e_{\mathit{max}}$ at time $\mathit{GST} + \mathit{epoch\_duration} + 2\delta$ (line~\ref{line:broadcast_epoch_over}), at latest.
(4) By time $\mathit{GST} + \mathit{epoch\_duration} + 3\delta$, every correct process receives $2f + 1$ \textsc{epoch-completed} messages for $e_{\mathit{max}}$ (line~\ref{line:receive_epoch_over}), and triggers the $\mathsf{measure}(\delta)$ method of $\mathit{dissemination\_timer}$ (line~\ref{line:sync_timer_measure}).
Therefore, (5) by time $\mathit{GST} + \mathit{epoch\_duration} + 4\delta$, every correct process enters $e_{\mathit{max}} + 1 = e_{\mathit{final}}$.
\Cref{fig:raresync_proof} depicts this scenario.

Note that for the previous sequence of events \emph{not} to unfold would imply an even lower bound on $t_{e_{\mathit{final}}}$: a correct process would have to receive $2f+1$ \textsc{epoch-completed} messages for $e_{\mathit{max}}$ or an \textsc{enter-epoch} message for $e_{\mathit{max}} + 1 = e_{\mathit{final}}$ before step (4) (i.e., before time $\mathit{GST} + \mathit{epoch\_duration} + 3\delta$), thus showing that $t_{e_{\mathit{final}}} < \mathit{GST} + \mathit{epoch\_duration} + 4\delta$.

\smallskip
\noindent \textbf{Latency:} \emph{Latency complexity of \rare is $O(f)$}. 
\smallskip
\\ By Result 5, $t_s \leq t_{e_{\mathit{final}}} + \mathit{epoch\_duration} - \Delta$.
By Result 6, $t_{e_{\mathit{final}}} \leq \mathit{GST} + \mathit{epoch\_duration} + 4\delta$.
Therefore, $t_s \leq \mathit{GST} + \mathit{epoch\_duration} + 4\delta + \mathit{epoch\_duration} - \Delta = \mathit{GST} + 2\mathit{epoch\_duration} + 4\delta - \Delta$.
Hence, $t_s + \Delta - \mathit{GST} \leq 2\mathit{epoch\_duration} + 4\delta = O(f)$.

\smallskip
\noindent \textbf{Communication:} \emph{Communication complexity of \rare is $O(n^2)$}. 
\smallskip
\\ Roughly speaking, every correct process will have entered $e_{\mathit{max}}$ (or potentially $e_{\mathit{final}} = e_{\mathit{max}} + 1$) by time $\mathit{GST} + 2\delta$ (as seen in the proof of Result 6).
From then on, it will enter at most one other epoch ($e_{\mathit{final}}$) before synchronizing (which is completed by time $t_s + \Delta$).
As for the time interval $[\mathit{GST}, \mathit{GST}+2\delta)$, due to $\mathit{dissemination\_timer}$'s interval of $\delta$, a correct process can enter (at most) two other epochs during this period.
Therefore, a correct process can enter (and send messages for) at most $O(1)$ epochs between $\mathit{GST}$ and $t_s + \Delta$.
The individual communication cost of a correct process is bounded by $O(n)$ words per epoch: $O(n)$ \textsc{epoch-completed} messages (each with a single word), and $O(n)$ \textsc{enter-epoch} messages (each with a single word, as a threshold signature counts as a single word).
Thus, the communication complexity of \rare is $O(n^2) = O(1) \cdot O(n) \cdot n$.

\begin{figure}[h]
    \centering
    \includegraphics[scale=0.2]{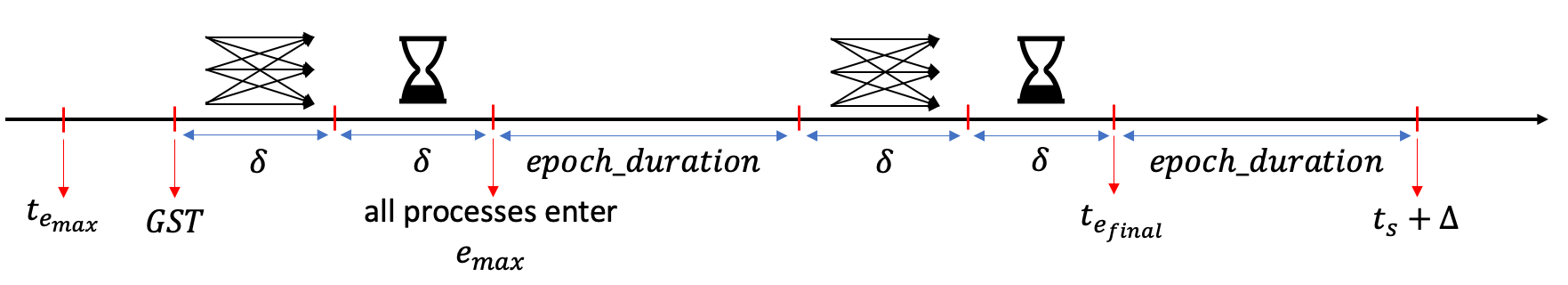}
    \caption{Worst-case latency of \rare: $t_s + \Delta - \mathit{GST} \leq 2\mathit{epoch\_duration} + 4\delta$.}
    \label{fig:raresync_proof}
\end{figure}

\smallskip
The formal proof of the following theorem is delegated to \Cref{section:appendix_rare}.

\begin{theorem}
\rare is a partially synchronous view synchronizer with (1) $O(n^2)$ communication complexity, and (2) $O(f)$ latency complexity.
\end{theorem}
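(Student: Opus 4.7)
The plan is to prove both the communication and latency bounds by analyzing the behavior of correct processes around the distinguished epoch $e_{\mathit{final}}$, defined as the smallest epoch first entered by some correct process at or after $\mathit{GST}$. The correctness (eventual synchronization) will fall out as a byproduct of the latency analysis, since showing that all correct processes synchronize within $e_{\mathit{final}}$ automatically produces a synchronization time $t_s \ge \mathit{GST}$. Throughout, I will write $\mathit{epoch\_duration} = (f+1)\cdot \mathit{view\_duration}$ and $t_e$ for the first time a correct process enters epoch $e$.

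First I would establish two structural lemmas about epoch entry. The first says that entering an epoch $e > 1$ requires that at least $f+1$ correct processes have previously gone through epoch $e-1$: this follows because either the process collects $2f+1$ \textsc{epoch-completed} messages for $e-1$ on line~\ref{line:receive_epoch_over}, or it receives an \textsc{enter-epoch} message whose embedded threshold signature is unforgeable, and in either case the $f+1$ honest contributors must have reached line~\ref{line:broadcast_epoch_over} in $e-1$. The second says every epoch is eventually entered by some correct process: assuming a maximum entered epoch $e^*$, I would chase the usual chain (all processes receive \textsc{enter-epoch} for $e^*$, arm and expire their $\mathit{dissemination\_timer}$, enter $e^*$, broadcast \textsc{epoch-completed}, collect $2f+1$ such messages, and enter $e^* + 1$) to derive a contradiction.

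Next I would prove the key timing lemma: for every $e \ge e_{\mathit{final}}$, no correct process broadcasts \textsc{epoch-completed} for $e$ before time $t_e + \mathit{epoch\_duration}$, because clocks no longer drift after $\mathit{GST}$ and traversing $f+1$ views each of duration $\mathit{view\_duration}$ takes exactly $\mathit{epoch\_duration}$. Combined with the first structural lemma, this implies no correct process can enter any epoch $e' > e_{\mathit{final}}$ before $t_{e_{\mathit{final}}} + \mathit{epoch\_duration}$, which in turn implies that once a correct process receives its first \textsc{enter-epoch} for $e_{\mathit{final}}$ at or before $t_{e_{\mathit{final}}} + \delta$, its $\mathit{dissemination\_timer}$ will not be cancelled and it will enter $e_{\mathit{final}}$ by $t_{e_{\mathit{final}}} + 2\delta$ (trivially at $t_{e_{\mathit{final}}}$ when $e_{\mathit{final}} = 1$). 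Then, since all correct processes are in $e_{\mathit{final}}$ within a window of $2\delta$, spend $\mathit{view\_duration} = \Delta + 2\delta$ in each view, and the $\mathsf{leader}(\cdot)$ function is round-robin over $f+1$ consecutive views, there must be a view within $e_{\mathit{final}}$ with a correct leader in which all correct processes overlap for at least $\Delta$; this gives a synchronization time $t_s \le t_{e_{\mathit{final}}} + \mathit{epoch\_duration} - \Delta$.

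To finish the latency bound I need to bound $t_{e_{\mathit{final}}}$. If $e_{\mathit{final}} = 1$ then $t_{e_{\mathit{final}}} = \mathit{GST}$; otherwise $e_{\mathit{final}} = e_{\mathit{max}} + 1$, and I would walk through the worst-case chain starting from $\mathit{GST}$: by $\mathit{GST} + \delta$ every correct process receives an \textsc{enter-epoch} for $e_{\mathit{max}}$, enters $e_{\mathit{max}}$ by $\mathit{GST} + 2\delta$, broadcasts \textsc{epoch-completed} by $\mathit{GST} + \mathit{epoch\_duration} + 2\delta$, $2f+1$ such messages are delivered by $\mathit{GST} + \mathit{epoch\_duration} + 3\delta$, and after the $\delta$ dissemination wait every correct process enters $e_{\mathit{final}}$ by $\mathit{GST} + \mathit{epoch\_duration} + 4\delta$; any earlier trigger only tightens the bound. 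Combining gives $t_s + \Delta - \mathit{GST} \le 2\mathit{epoch\_duration} + 4\delta = O(f)$. For communication, the same chain shows each correct process enters at most $O(1)$ epochs in $[\mathit{GST}, t_s + \Delta]$ (at most two within the initial $2\delta$ window where the $\delta$-throttle by $\mathit{dissemination\_timer}$ is essential, plus $e_{\mathit{max}}$ and $e_{\mathit{final}}$); since each epoch costs $O(n)$ words per process (the \textsc{epoch-completed} and \textsc{enter-epoch} broadcasts, each a single word as a threshold signature is one word), the total is $O(n^2)$. The main obstacle I anticipate is the tight accounting around $t_{e_{\mathit{final}}}$, specifically ensuring that the $\delta$-throttle genuinely prevents the pre-$\mathit{GST}$ backlog of \textsc{enter-epoch} messages from inducing super-constantly many epoch entries in any correct process after $\mathit{GST}$; this is where I would be most careful, arguing that any epoch entered via line~\ref{line:receive_epoch_over_complete} forces a $\delta$ wait during which all already-in-flight \textsc{enter-epoch} messages arrive and get subsumed into the single largest known epoch.
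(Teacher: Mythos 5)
Your proposal is correct and follows essentially the same route as the paper: the same chain of intermediate results (epoch entry requires $f+1$ correct processes completing the previous epoch, every epoch is eventually entered, no \textsc{epoch-completed} for $e \geq e_{\mathit{final}}$ before $t_{e_{\mathit{final}}} + \mathit{epoch\_duration}$, all correct processes enter $e_{\mathit{final}}$ within $2\delta$, synchronization inside $e_{\mathit{final}}$, and $t_{e_{\mathit{final}}} \leq \mathit{GST} + \mathit{epoch\_duration} + 4\delta$), yielding the identical $2\,\mathit{epoch\_duration} + 4\delta$ latency bound and the same $O(1)$-epochs-per-process counting (via the $\delta$-throttle) for the $O(n^2)$ communication bound. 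The paper's full proof merely adds low-level bookkeeping about timer histories that your sketch leaves implicit.
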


\section{\sname} \label{section:quad}

This section introduces \sname, a partially synchronous Byzantine consensus protocol with optimal resilience~\cite{Dwork1988}.
\sname simultaneously achieves (1) $O(n^2)$ communication complexity, matching the Dolev-Reischuk bound~\cite{Dolev1985}, and (2) $O(f)$ latency complexity, matching the Dolev-Strong bound~\cite{Dolev1983}.

First, we present \name, a partially synchronous Byzantine consensus protocol ensuring weak validity (\Cref{subsection:quad_algorithm}).
\name achieves quadratic communication complexity and linear latency complexity.
Then, we construct \sname by adding a simple preprocessing phase to \name (\Cref{subsection:squad}).


\subsection{\name} \label{subsection:quad_algorithm}

\name is a partially synchronous Byzantine consensus protocol satisfying the weak validity property:
\begin{compactitem}
    \item \emph{Weak validity:} If all processes are correct, then a value decided by a process was proposed.
\end{compactitem}
\name achieves (1) quadratic communication complexity, and (2) linear latency complexity.
Interestingly, the Dolev-Reischuk lower bound~\cite{Dolev1985} does not apply to Byzantine protocols satisfying weak validity; hence, we do not know whether \name has optimal communication complexity.
As explained in \Cref{subsection:squad}, we accompany \name by a preprocessing phase to obtain \sname.

\name (\Cref{algorithm:quad}) uses the same view core module as HotStuff~\cite{Yin2019}, i.e., the view logic of \name is identical to that of HotStuff.
Moreover, \name uses \rare as its view synchronizer, achieving synchronization with $O(n^2)$ communication.
The combination of HotStuff's view core and \rare ensures that each correct process sends $O(n)$ words after $\mathit{GST}$ (and before the decision), i.e., $C = O(n)$ in \name.
Following the formula introduced in \Cref{section:introduction}, \name indeed achieves $n \cdot C + S = n \cdot O(n) + O(n^2) = O(n^2)$ communication complexity.
Due to the linear latency of \rare, \name also achieves $O(f)$ latency complexity.

\smallskip
\noindent \textbf{View core.}
We now give a brief description of the view core module of \name.
The complete pseudocode of this module can be found in \Cref{section:appendix_quad} (and in~\cite{Yin2019}).

Each correct process keeps track of two critical variables: (1) the \emph{prepare} quorum certificate (QC), and (2) the \emph{locked} QC.
Each of these represents a process' estimation of the value that will be decided, although with a different degree of certainty.
For example, if a correct process decides a value $v$, it is guaranteed that (at least) $f + 1$ correct processes have $v$ in their locked QC.
Moreover, it is ensured that no correct process updates (from this point onward) its prepare or locked QC to any other value, thus ensuring agreement.
Lastly, a QC is a (constant-sized) threshold signature.

The structure of a view follows the ``all-to-leader, leader-to-all'' communication pattern.
Specifically, each view is comprised of the following four phases:
\begin{compactenum}
    \item \textbf{Prepare:} A process sends to the leader a \textsc{view-change} message containing its prepare QC.
    Once the leader receives $2f + 1$ \textsc{view-change} messages, it selects the prepare QC from the ``latest'' view.
    The leader sends this QC to all processes via a \textsc{prepare} message.
    
    Once a process receives the \textsc{prepare} message from the leader, it supports the received prepare QC if (1) the received QC is consistent with its locked QC, or (2) the received QC is ``more recent'' than its locked QC.
    If the process supports the received QC, it acknowledges this by sending a \textsc{prepare-vote} message to the leader.
    
    \item \textbf{Precommit:}
    Once the leader receives $2f + 1$ \textsc{prepare-vote} messages, it combines them into a cryptographic proof $\sigma$ that ``enough'' processes have supported its ``prepare-phase'' value; $\sigma$ is a threshold signature.
    Then, it disseminates $\sigma$ to all processes via a \textsc{precommit} message.
    Once a process receives the \textsc{precommit} message carrying $\sigma$, it updates its prepare QC to $\sigma$ and sends back to the leader a \textsc{precommit-vote} message. 
    
    \item \textbf{Commit:}
    Once the leader receives $2f + 1$ \textsc{precommit-vote} messages, it combines them into a cryptographic proof $\sigma'$ that ``enough'' processes have adopted its ``precommit-phase'' value (by updating their prepare QC); $\sigma'$ is a threshold signature.
    Then, it disseminates $\sigma'$ to all processes via a \textsc{commit} message.
    Once a process receives the \textsc{commit} message carrying $\sigma'$, it updates its locked QC to $\sigma'$ and sends back to the leader a \textsc{commit-vote} message. 
    
    \item \textbf{Decide:}
    Once the leader receives $2f + 1$ \textsc{commit-vote} messages, it combines them into a threshold signature $\sigma''$, and relays $\sigma''$ to all processes via a \textsc{decide} message.
    When a process receives the \textsc{decide} message carrying $\sigma''$, it decides the value associated with $\sigma''$.
\end{compactenum}
As a consequence of the ``all-to-leader, leader-to-all'' communication pattern and the constant size of messages, the leader of a view sends $O(n)$ words, while a non-leader process sends $O(1)$ words.

\smallskip
The view core module provides the following interface:
\begin{compactitem}
    \item \textbf{Request} $\mathsf{start\_executing(View} \text{ } v\mathsf{)}$: The view core starts executing the logic of view $v$ and abandons the previous view.
    Concretely, it stops accepting and sending messages for the previous view, and it starts accepting, sending, and replying to messages for view $v$. 
    The state of the view core is kept across views (e.g., the prepare and locked QCs).
    
    \item \textbf{Indication} $\mathsf{decide(Value} \text{ } \mathit{decision}\mathsf{)}$: The view core decides value $\mathit{decision}$ (this indication is triggered at most once).
\end{compactitem}

\smallskip
\noindent \textbf{Protocol description.}
The protocol (\Cref{algorithm:quad}) amounts to a composition of \rare and the aforementioned view core.
Since the view core requires $8$ communication steps in order for correct processes to decide, a synchronous overlap of $8\delta$ is sufficient.
Thus, we parameterize \rare with $\Delta = 8\delta$ (line~\ref{line:init_rare_sync}).
In short, the view core is subservient to \rare, i.e., when \rare triggers the $\mathsf{advance(}v\mathsf{)}$ event (line~\ref{line:synchronizer_advance}), the view core starts executing the logic of view $v$ (line~\ref{line:start_executing_logic}).
Once the view core decides (line~\ref{line:view_core_decide}), \name decides (line~\ref{line:quad_decide}).

\begin{algorithm}
\caption{\name: Pseudocode (for process $P_i$)}
\label{algorithm:quad}
\begin{algorithmic} [1]
\State \textbf{Modules:}
\State \hskip2em $\mathsf{View\_Core}$ $\mathit{core}$ 
\State \hskip2em $\mathsf{View\_Synchronizer}$ $\mathit{synchronizer} \gets \text{\rare}(\Delta = 8\delta)$ \label{line:init_rare_sync}

\smallskip
\State \textbf{upon} $\mathsf{init}(\mathsf{Value} \text{ } \mathit{proposal})$: \BlueComment{propose value $\mathit{proposal}$}
\State \hskip2em $\mathit{core}.\mathsf{init(}\mathit{proposal}\mathsf{)}$ \BlueComment{initialize the view core with the proposal}
\State \hskip2em $\mathit{synchronizer}.\mathsf{init}$ \BlueComment{start \rare}

\smallskip
\State \textbf{upon} $\mathit{synchronizer.}\mathsf{advance(View} \text{ } v\mathsf{)}$: \label{line:synchronizer_advance}
\State \hskip2em $\mathit{core}.\mathsf{start\_executing(}v\mathsf{)}$ \label{line:start_executing_logic}

\smallskip
\State \textbf{upon} $\mathit{core.}\mathsf{decide(Value} \text{ }\mathit{decision}\mathsf{)}$: \label{line:view_core_decide}
\State \hskip2em \textbf{trigger} $\mathsf{decide(}\mathit{decision}\mathsf{)}$ \label{line:quad_decide} \BlueComment{decide value $\mathit{decision}$}
\end{algorithmic}
\end{algorithm}

\smallskip
\noindent \textbf{Proof sketch.}
The agreement and weak validity properties of \name are ensured by the view core's implementation.
As for the termination property, the view core, and therefore \name, is guaranteed to decide as soon as processes have synchronized in the same view with a correct leader for $\Delta = 8\delta$ time at or after $\mathit{GST}$.
Since \rare ensures the eventual synchronization property, this eventually happens, which implies that \name satisfies termination.
As processes synchronize within $O(f)$ time after $\mathit{GST}$, the latency complexity of \name is $O(f)$.

As for the total communication complexity, it is the sum of the communication complexity of (1) \rare, which is $O(n^2)$, and (2) the view core, which is also $O(n^2)$. 
The view core's complexity is a consequence of the fact that:
\begin{compactitem}
    \item each process executes $O(1)$ epochs between $\mathit{GST}$ and the time by which every process decides,
    \item each epoch has $f + 1$ views,
    \item a process can be the leader in only one view of any epoch, and
    \item a process sends $O(n)$ words in a view if it is the leader, and $O(1)$ words otherwise, for an average of $O(1)$ words per view in any epoch.
\end{compactitem}
Thus, the view core's communication complexity is $O(n^2) = O(1) \cdot (f+1) \cdot O(1) \cdot n$.
Therefore, \name indeed achieves $O(n^2)$ communication complexity. 
The formal proof of the following theorem can be found in \Cref{section:appendix_quad}.

\begin{theorem}
\name is a Byzantine consensus protocol ensuring weak validity with (1) $O(n^2)$ communication complexity, and (2) $O(f)$ latency complexity.
\end{theorem}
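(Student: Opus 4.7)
The plan is to decompose the theorem into four claims (agreement, weak validity, termination, and the two complexity bounds) and handle each by composing properties of the view core with properties of \rare established earlier.

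For safety (agreement and weak validity), I would argue that these properties depend only on the internal logic of the view core, which is identical to HotStuff's view logic. The interaction with \rare is solely through the $\mathsf{advance}(v)$ indication, which instructs the core to move to view $v$, and the $\mathsf{decide}$ indication raised by the core. Because HotStuff's safety proof makes no assumption on \emph{when} views are entered (only on the two-phase commit structure involving the prepare QC and the locked QC), the agreement and weak validity arguments transfer verbatim. I would cite the HotStuff proof and remark that \rare's $\mathsf{advance}$ events can be treated as an arbitrary (adversarial) scheduling pattern from the core's perspective.

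For termination and latency, I would combine the eventual synchronization property of \rare with the good-case termination of the view core. By \rare's correctness, there exists a synchronization time $t_s \geq \mathit{GST}$ such that all correct processes are in the same view $v$ with $\mathsf{leader}(v)$ correct for at least $\Delta = 8\delta$ time. During such an overlap, the HotStuff core completes its eight communication steps, producing a \textsc{decide} message at every correct process, which triggers $\mathsf{decide}(\cdot)$ in \name. Plugging in the latency bound from the \rare theorem, $t_s + \Delta - \mathit{GST} \leq 2\cdot \mathit{epoch\_duration} + 4\delta = O(f)$, yields the $O(f)$ latency complexity.

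The main obstacle, and the step I expect to require the most care, is the communication complexity. \rare contributes $O(n^2)$ by its own theorem, so I need to show that the view core also contributes $O(n^2)$ words during $[\mathit{GST}, t_d]$. The plan is to bound, for each correct process $P_i$, the number of words it sends while executing the core. The key facts are: (i) during $[\mathit{GST}, t_d]$, by the analysis of \rare, $P_i$ enters at most $O(1)$ epochs; (ii) each epoch contains exactly $f+1$ views; (iii) the round-robin leader function of \rare ensures that $P_i$ is the leader of at most one view per epoch; (iv) by the HotStuff pattern, $P_i$ sends $O(n)$ words when it is leader and $O(1)$ words when it is a follower, with messages of constant word-size thanks to threshold signatures. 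Summing, $P_i$ sends $O(1)\cdot\bigl(1\cdot O(n) + f\cdot O(1)\bigr) = O(n)$ words, and multiplying by $n$ correct processes gives $O(n^2)$ for the core. Adding the $O(n^2)$ of \rare yields the claimed bound. A subtlety I would be explicit about is accounting for messages of ``old'' views that arrive after $\mathit{GST}$ but were sent before $\mathit{GST}$: these do not count towards complexity, and messages sent by $P_i$ for views it has already abandoned (via $\mathsf{start\_executing}$) are not generated, so only the $O(1)$ epochs entered after $\mathit{GST}$ contribute. This completes the proof of all four claims.
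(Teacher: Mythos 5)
Your proposal is correct and follows essentially the same route as the paper: safety (agreement, weak validity) inherited from the HotStuff view core independently of view scheduling, termination and $O(f)$ latency from \rare's eventual synchronization with $\Delta = 8\delta$, and the $O(n^2)$ communication bound obtained by combining \rare's own bound with the per-process count $O(1)\cdot\bigl(1\cdot O(n) + f\cdot O(1)\bigr) = O(n)$ of view-core words, using that a process sends core messages only for the $O(1)$ epochs it enters in $[\mathit{GST}, t_d]$ and is leader of at most one view per epoch. The paper formalizes the ``only current-epoch messages'' point as a separate lemma, but your argument captures the same content.
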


\subsection{\sname: Protocol Description} \label{subsection:squad}

At last, we present \sname, which we derive from \name.


\smallskip
\noindent \textbf{Deriving \sname from \name.}
Imagine a locally-verifiable, \emph{constant-sized} cryptographic proof $\sigma_v$ vouching that value $v$ is \emph{valid}.
Moreover, imagine that it is impossible, in the case in which all correct processes propose $v$ to \name, for any process to obtain a proof for a value different from $v$:
\begin{compactitem}
    \item Computability: If all correct processes propose $v$ to \name, then no process (even if faulty) obtains a cryptographic proof $\sigma_{v'}$ for a value $v' \neq v$.
\end{compactitem}
If such a cryptographic primitive were to exist, then the \name protocol could be modified in the following manner in order to satisfy the validity property introduced in \Cref{section:introduction}:
\begin{compactitem}
    \item A correct process accompanies each value by a cryptographic proof that the value is valid.
    
    \item A correct process ignores any message with a value not accompanied by the value's proof.
\end{compactitem}
Suppose that all correct processes propose the same value $v$ and that a correct process $P_i$ decides $v'$ from the modified version of \name.
Given that $P_i$ ignores messages with non-valid values, $P_i$ has obtained a proof for $v'$ before deciding.
The computability property of the cryptographic primitive guarantees that $v' = v$, implying that validity is satisfied.
Given that the proof is of constant size, the communication complexity of the modified version of \name remains $O(n^2)$.

Therefore, the main challenge in obtaining \sname from \name, while preserving \name's complexity, lies in implementing the introduced cryptographic primitive.

\smallskip
\noindent \textbf{Certification phase.}
\sname utilizes its \emph{certification phase} (\Cref{algorithm:groundwork}) to obtain the introduced constant-sized cryptographic proofs; we call these proofs \emph{certificates}.\footnote{Note the distinction between certificates and prepare and locked QCs of the view core.}
Formally, $\mathsf{Certificate}$ denotes the set of all certificates.
Moreover, we define a locally computable function $\mathsf{verify}\text{: } \mathsf{Value} \times \mathsf{Certificate} \to \{\mathit{true}, \mathit{false}\}$.
We require the following properties to hold:
\begin{compactitem}
    \item \emph{Computability:} If all correct processes propose the same value $v$ to \sname, then no process (even if faulty) obtains a certificate $\sigma_{v'}$ with $\mathsf{verify}(v', \sigma_{v'}) = \mathit{true}$ and $v' \neq v$.
    
    \item \emph{Liveness:} Every correct process eventually obtains a certificate $\sigma_v$ such that $\mathsf{verify}(v, \sigma_v) = \mathit{true}$, for some value $v$.
\end{compactitem}
The computability property states that, if all correct processes propose the same value $v$ to \sname, then no process (even if Byzantine) can obtain a certificate for a value different from $v$.
The liveness property ensures that all correct processes eventually obtain a certificate.
Hence, if all correct processes propose the same value $v$, all correct processes eventually obtain a certificate for $v$ and no process obtains a certificate for a different value.

In order to implement the certification phase, we assume an $(f + 1, n)$-threshold signature scheme (see \Cref{section:model}) used throughout the entirety of the certification phase.
The $(f + 1, n)$-threshold signature scheme allows certificates to count as a single word, as each certificate is a threshold signature.
Finally, in order to not disrupt \name's communication and latency, the certification phase itself incurs $O(n^2)$ communication and $O(1)$ latency.

\begin{algorithm} [ht]
\caption{Certification Phase: Pseudocode (for process $P_i$)}
\label{algorithm:groundwork}
\begin{algorithmic} [1]

\State \textbf{upon} $\mathsf{init(Value} \text{ } \mathit{proposal}\mathsf{)}$: \BlueComment{propose value $\mathit{proposal}$} \label{line:certification_phase_start}

\State \hskip2em \textcolor{blue}{\(\triangleright\) inform other processes that $\mathit{proposal}$ was proposed}
\State \hskip2em \textbf{broadcast} $\langle \textsc{disclose}, \mathit{proposal}, \mathit{ShareSign}_i(\mathit{proposal}) \rangle$ \label{line:send_disclose}

\smallskip
\State \textbf{upon} exists $\mathsf{Value} \text{ } v$ such that $\langle \textsc{disclose}, v, \mathsf{P\_Signature} \text{ } \mathit{sig}\rangle$ is received from $f+1$ processes: \label{line:f+1_disclose}
\State \hskip2em \textcolor{blue}{\(\triangleright\) a certificate for $v$ is obtained}
\State \hskip2em $\mathsf{Certificate} \text{ } \sigma_v \gets \mathit{Combine}\big(\{\mathit{sig} \,|\, \mathit{sig} \text{ is received in a } \textsc{disclose} \text{ message}\}\big)$ \label{line:combine_certification_phase}
\State \hskip2em \textbf{broadcast} $\langle \textsc{certificate}, v, \sigma_v \rangle$ \BlueComment{disseminate the certificate} \label{line:broadcast_certificate_1}
\State \hskip2em exit the certification phase

\smallskip
\State \textbf{upon} for the first time (1) \textsc{disclose} message is received from $2f + 1$ processes, and (2) not exist $\mathsf{Value} \text{ } v$ such that $\langle \textsc{disclose}, v, \mathsf{P\_Signature} \text{ } \mathit{sig}\rangle$ is received from $f + 1$ processes: \label{line:rule_allow_any}
\State \hskip2em \textcolor{blue}{\(\triangleright\) inform other processes that any value can be ``accepted''}
\State \hskip2em \textbf{broadcast} $\langle \textsc{allow-any}, \mathit{ShareSign}_i(\text{``any value''}) \rangle$ \label{line:broadcast_allow_any}

\smallskip
\State \textbf{upon} $\langle \textsc{allow-any}, \mathsf{P\_Signature} \text{ } \mathit{sig} \rangle$ is received from $f + 1$ processes \label{line:f+1_allow_any}:
\State \hskip2em \textcolor{blue}{\(\triangleright\) a certificate for ``any value'' is obtained}
\State \hskip2em $\mathsf{Certificate} \text{ } \sigma_{\bot} \gets \mathit{Combine}\big(\{\mathit{sig} \,|\, \mathit{sig} \text{ is received in an } \textsc{allow-any} \text{ message}\}\big)$ \label{line:combine_allow_any}
\State \hskip2em \textbf{broadcast} $\langle \textsc{certificate}, \bot, \sigma_{\bot} \rangle$ \BlueComment{disseminate the certificate} \label{line:broadcast_certificate_2}
\State \hskip2em exit the certification phase

\smallskip
\State \textcolor{blue}{\(\triangleright\) a certificate for $v$ is obtained; $v$ can be $\bot$, meaning that $\sigma_v$ vouches for any value}
\State \textbf{upon} reception of $\langle \textsc{certificate}, \mathsf{Value} \text{ } v, \mathsf{Certificate} \text{ } \sigma_v \rangle$: \label{line:receive_certificate}
\State \hskip2em \textbf{broadcast} $\langle \textsc{certificate}, v, \sigma_v \rangle$ \BlueComment{disseminate the certificate} \label{line:broadcast_certificate_3}
\State \hskip2em exit the certification phase

\smallskip
\State \textbf{function} $\mathsf{verify(Value} \text{ } v, \mathsf{Certificate} \text{ } \sigma)$: \label{line:function_verify}
\State \hskip2em \textbf{if} $\mathit{CombinedVerify}(\text{``any value''}, \sigma) = \mathit{true}$: \textbf{return} $\mathit{true}$ \label{line:verify_any_value}
\State \hskip2em \textbf{else if} $\mathit{CombinedVerify}(v, \sigma) = \mathit{true}$: \textbf{return} $\mathit{true}$ \label{line:verify_v}
\State \hskip2em \textbf{else return} $\mathit{false}$

\end{algorithmic}
\end{algorithm}

A certificate $\sigma$ vouches for a value $v$ (the $\mathsf{verify}(\cdot)$ function at line~\ref{line:function_verify}) if (1) $\sigma$ is a threshold signature of the predefined string ``any value'' (line~\ref{line:verify_any_value}), or (2) $\sigma$ is a threshold signature of $v$ (line~\ref{line:verify_v}).
Otherwise, $\mathsf{verify}(v, \sigma)$ returns $\mathit{false}$.

Once $P_i$ enters the certification phase (line~\ref{line:certification_phase_start}), $P_i$ informs all processes about the value it has proposed by broadcasting a \textsc{disclose} message (line~\ref{line:send_disclose}).
Process $P_i$ includes a partial signature of its proposed value in the message.
If $P_i$ receives \textsc{disclose} messages for the same value $v$ from $f + 1$ processes (line~\ref{line:f+1_disclose}), $P_i$ combines the received partial signatures into a threshold signature of $v$ (line~\ref{line:combine_certification_phase}), which represents a certificate for $v$.
To ensure liveness, $P_i$ disseminates the certificate (line~\ref{line:broadcast_certificate_1}).

If $P_i$ receives $2f + 1$ \textsc{disclose} messages and there does not exist a ``common'' value received in $f + 1$ (or more) \textsc{disclose} messages (line~\ref{line:rule_allow_any}), the process concludes that it is fine for a certificate for \emph{any} value to be obtained.
Therefore, $P_i$ broadcasts an \textsc{allow-any} message containing a partial signature of the predefined string ``any value'' (line~\ref{line:broadcast_allow_any}).

If $P_i$ receives $f + 1$ \textsc{allow-any} messages (line~\ref{line:f+1_allow_any}), it combines the received partial signatures into a certificate that vouches for \emph{any} value (line~\ref{line:combine_allow_any}), and it disseminates the certificate (line~\ref{line:broadcast_certificate_2}).
Since \textsc{allow-any} messages are received from $f + 1$ processes, there exists a correct process that has verified that it is indeed fine for such a certificate to exist.

If, at any point, $P_i$ receives a certificate (line~\ref{line:receive_certificate}), it adopts the certificate, and disseminates it (line~\ref{line:broadcast_certificate_3}) to ensure liveness.

Given that each message of the certification phase contains a single word, the certification phase incurs $O(n^2)$ communication.
Moreover, each correct process obtains a certificate after (at most) $2 = O(1)$ rounds of communication.
Therefore, the certification phase incurs $O(1)$ latency.

We explain below why the certification phase (\Cref{algorithm:groundwork}) ensures computability and liveness:
\begin{compactitem}
    \item Computability: If all correct processes propose the same value $v$ to \sname, all correct processes broadcast a \textsc{disclose} message for $v$ (line~\ref{line:send_disclose}).
    Since $2f + 1$ processes are correct, no process obtains a certificate $\sigma_{v'}$ for a value $v' \neq v$ such that $\mathit{CombinedVerify}(v', \sigma_{v'}) = \mathit{true}$ (line~\ref{line:verify_v}).
    
    Moreover, as every correct process receives $f + 1$ \textsc{disclose} messages for $v$ within any set of $2f + 1$ received \textsc{disclose} messages, no correct process sends an \textsc{allow-any} message (line~\ref{line:broadcast_allow_any}).
    Hence, no process obtains a certificate $\sigma_{\bot}$ such that $\mathit{CombinedVerify}(\text{``any value''}, \sigma_{\bot}) = \mathit{true}$ (line~\ref{line:verify_any_value}).
    Thus, computability is ensured.
    
    \item Liveness: If a correct process receives $f + 1$ \textsc{disclose} messages for a value $v$ (line~\ref{line:f+1_disclose}), the process obtains a certificate for $v$ (line~\ref{line:combine_certification_phase}).
    Since the process disseminates the certificate (line~\ref{line:broadcast_certificate_1}), every correct process eventually obtains a certificate (line~\ref{line:receive_certificate}), ensuring liveness in this scenario.
    
    Otherwise, all correct processes broadcast an \textsc{allow-any} message (line~\ref{line:broadcast_allow_any}).
    Since there are at least $2f + 1$ correct processes, every correct process eventually receives $f + 1$ \textsc{allow-any} messages (line~\ref{line:f+1_allow_any}), thus obtaining a certificate.
    Hence, liveness is satisfied in this case as well.
\end{compactitem}

\smallskip
\noindent \textbf{\sname = Certification phase + \name.}
We obtain \sname by combining the certification phase with \name.
The pseudocode of \sname is given in \Cref{algorithm:squad}.

\begin{algorithm}
\caption{\sname: Pseudocode (for process $P_i$)}
\label{algorithm:squad}
\begin{algorithmic} [1]
\State \textbf{upon} $\mathsf{init(Value} \text{ } \mathit{proposal}\mathsf{)}$: \BlueComment{propose value $\mathit{proposal}$}
\State \hskip2em start the certification phase with $\mathit{proposal}$ \label{line:start_certification_phase_squad}

\smallskip
\State \textbf{upon} exiting the certification phase with a certificate $\sigma_v$ for a value $v$: \label{line:obtained_certificate_squad}

\State \hskip2em \textcolor{blue}{\(\triangleright\) in $\name_{\mathit{cer}}$, processes ignore messages with values not accompanied by their certificates}
\State \hskip2em start executing $\name_{\mathit{cer}}$ with the proposal $(v, \sigma_v)$ \label{line:start_quad_squad}

\smallskip
\State \textbf{upon} $\name_{\mathit{cer}}$ decides $\mathsf{Value} \text{ } \mathit{decision}$: \label{line:decide_quad_squad}
\State \hskip2em \textbf{trigger} $\mathsf{decide}(\mathit{decision})$ \label{line:decide_squad} \BlueComment{decide value $\mathit{decision}$}
\end{algorithmic}
\end{algorithm}


A correct process $P_i$ executes the following steps in \sname:
\begin{compactenum}
    \item $P_i$ starts executing the certification phase with its proposal (line~\ref{line:start_certification_phase_squad}).
    
    \item Once the process exits the certification phase with a certificate $\sigma_v$ for a value $v$, it proposes $(v, \sigma_v)$ to $\name_{\mathit{cer}}$, a version of \name ``enriched'' with certificates (line~\ref{line:start_quad_squad}).
    While executing $\name_{\mathit{cer}}$, correct processes \emph{ignore} messages containing values not accompanied by their certificates.
    
    \item Once $P_i$ decides from $\name_{\mathit{cer}}$ (line~\ref{line:decide_quad_squad}), $P_i$ decides the same value from \sname (line~\ref{line:decide_squad}).
\end{compactenum}

\smallskip
The proof of the following theorem is delegated to \Cref{section:squad_appendix}.

\begin{theorem}
\sname is a Byzantine consensus protocol with (1) $O(n^2)$ communication complexity, and (2) $O(f)$ latency complexity.
\end{theorem}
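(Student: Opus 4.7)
The plan is to derive \sname's three consensus properties and its complexity bounds from the two established modular facts: the Computability and Liveness of the certification phase, and the theorem already proven for \name. First I would dispatch agreement: since $\name_{\mathit{cer}}$ reuses exactly the view core of \name (certificate annotation acts only as a filter on which proposals are processed) and since the view core's safety argument is independent of which proposed values reach it, two correct processes of \sname cannot decide different values.

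Next I would prove validity (in the strong sense of \Cref{section:introduction}). Suppose every correct process invokes \sname with the same value $v$. By Computability, no certificate $\sigma_{v'}$ with $v' \neq v$, and no certificate for ``any value'', can ever be obtained by any process, even a Byzantine one. Hence the only certificate visible during the execution of $\name_{\mathit{cer}}$ vouches for $v$. Because correct processes ignore every $\name_{\mathit{cer}}$ message whose value is not accompanied by a valid certificate, no correct process ever updates its prepare QC or locked QC to anything other than $v$, so the only value any correct process can decide is $v$.

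For termination, I would first apply Liveness of the certification phase to conclude that every correct process eventually exits it with a certified proposal and enters $\name_{\mathit{cer}}$. The main obstacle is showing that the certificate filter does not break the view core's liveness argument inherited from \name. I would resolve this by observing that (i) correct processes rebroadcast every certificate they obtain (lines~\ref{line:broadcast_certificate_1}, \ref{line:broadcast_certificate_2}, and~\ref{line:broadcast_certificate_3}), so any value that legitimately appears in $\name_{\mathit{cer}}$ reaches every correct process together with its certificate within $\delta$ time after $\mathit{GST}$; and (ii) when a correct leader extracts a value from $2f+1$ \textsc{view-change} messages, that value is carried by a prepare QC originally supported by at least $f+1$ correct processes, all of whom already held (and rebroadcast) a certificate for it. Coupling this with the eventual synchronization property of \rare, parameterized here with $\Delta = 8\delta$, $\name_{\mathit{cer}}$ behaves exactly as \name once synchronization is reached and therefore decides.

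Finally, for the complexity bounds I would simply sum the contributions of the two phases. The certification phase sends a constant number of single-word messages per correct process (each \textsc{disclose}, \textsc{allow-any}, and \textsc{certificate} carries a single threshold or partial signature), yielding $O(n^2)$ communication and at most two communication rounds, i.e.\ $O(1)$ latency. The $\name_{\mathit{cer}}$ phase is syntactically \name with a constant-size certificate appended to every value-carrying message, so by the theorem for \name it has $O(n^2)$ communication and $O(f)$ latency. Adding these gives $O(n^2)+O(n^2)=O(n^2)$ total communication and $O(1)+O(f)=O(f)$ total latency, matching the statement.
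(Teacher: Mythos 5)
Your decomposition (certification phase, then a certificate-filtered \name) and your treatment of agreement, validity, and termination match the paper's. The gap is in the complexity accounting. You write that the $\name_{\mathit{cer}}$ phase ``is syntactically \name'' and so inherits $O(n^2)$ communication and $O(f)$ latency ``by the theorem for \name,'' and then you add the two phases. But that theorem (and all the lemmas behind it, e.g.\ \Cref{lemma:constant-number-of-epochs-after-GST}) is proved under the model assumption that every correct process has started executing the protocol by $\mathit{GST}$, and complexity is measured over $[\mathit{GST}, t_d]$. In \sname, a correct process only starts $\name_{\mathit{cer}}$ when it exits the certification phase, and if $\mathit{GST}$ falls during that phase the last process enters $\name_{\mathit{cer}}$ only at some $t_{\mathit{last}} \leq \mathit{GST} + 2\delta$, while other correct processes may already be deep inside $\name_{\mathit{cer}}$ in arbitrary, desynchronized epochs. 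So you cannot invoke the \name theorem verbatim for the $\name_{\mathit{cer}}$ sub-execution, and your ``sum the two phases'' step silently assumes the phases are cleanly aligned with $\mathit{GST}$.

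The paper closes this by a case split on whether $\mathit{GST} \geq t_{\mathit{last}}$. In the nontrivial case it constructs from the \sname execution an execution of plain \name whose global stabilization time is re-based to $t_{\mathit{last}}$ (stripping certificates and certification-phase events), applies \Cref{theorem:quad_complexity_appendix} to that execution to bound everything sent at or after $t_{\mathit{last}}$, and then \emph{separately} bounds the words sent in the window $[\mathit{GST}, t_{\mathit{last}})$: since $t_{\mathit{last}} - \mathit{GST} \leq 2\delta$, \Cref{lemma:within_delta} limits each correct process to $O(1)$ epochs in that window, hence $O(n)$ words of \rare and view-core traffic, giving $n \cdot O(n) + O(n^2) = O(n^2)$ overall. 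Your proof needs this window argument (or an equivalent re-indexing of $\mathit{GST}$); without it, the claim that the $\name_{\mathit{cer}}$ portion costs only $O(n^2)$ from $\mathit{GST}$ onward is not justified by anything you have established.
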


\section{Concluding Remarks} \label{section:conclusion}

This paper shows that the Dolev-Reischuk lower bound can be met by a partially synchronous Byzantine consensus protocol.
Namely, we introduce \sname, an optimally-resilient partially synchronous Byzantine consensus protocol with optimal $O(n^2)$ communication complexity, and optimal $O(f)$ latency complexity.
\sname owes its complexity to \rare, an ``epoch-based'' view synchronizer ensuring synchronization with quadratic communication and linear latency in partial synchrony.
In the future, we aim to address the following limitations of \rare.

\smallskip
\noindent \textbf{Lack of adaptiveness.}
\rare is not \emph{adaptive}, i.e., its complexity does not depend on the \emph{actual} number $b$, but rather on the upper bound $f$, of Byzantine processes.
Consider a scenario $S$ in which all processes are correct; we separate them into three disjoint groups: (1) group $A$, with $|A| = f$, (2) group $B$, with $|B| = f$, and (3) group $C$, with $|C| = f + 1$.
At $\mathit{GST}$, group $A$ is in the first view of epoch $e_{\mathit{max}}$, group $B$ is in the second view of $e_{\mathit{max}}$, and group $C$ is in the third view of $e_{\mathit{max}}$.\footnote{Recall that $e_{\mathit{max}}$ is the greatest epoch entered by a correct process before $\mathit{GST}$; see \Cref{subsection:rare_proof_sketch}.}
Unfortunately, it is impossible for processes to synchronize in epoch $e_{\mathit{max}}$.
Hence, they will need to wait for the end of epoch $e_{\mathit{max}}$ in order to synchronize in the next epoch: thus, the latency complexity is $O(f)$ (since $e_{\mathit{max}}$ has $f + 1$ views) and the communication complexity is $O(n^2)$ (because of the ``all-to-all'' communication step at the end of $e_{\mathit{max}}$).
In contrast, the view synchronizer presented in~\cite{Naor2020} achieves $O(1)$ latency and $O(n)$ communication complexity in $S$.

\smallskip
\noindent \textbf{Suboptimal expected complexity.}
A second limitation of \rare is that its \emph{expected complexity} is the same as its worst-case complexity.
Namely, the expected complexity considers a weaker adversary which does not have a knowledge of the $\mathsf{leader}(\cdot)$ function. 
Therefore, this adversary is unable to corrupt $f$ processes that are scheduled to be leaders right after $\mathit{GST}$.

As the previously introduced scenario $S$ does not include any Byzantine process, we can analyze it for the expected complexity of \rare.
Therefore, the expected latency complexity of \rare is $O(f)$ and the expected communication complexity of \rare is $O(n^2)$.
On the other hand, the view synchronizer of Naor and Keidar~\cite{Naor2020} achieves $O(1)$ expected latency complexity and $O(n)$ expected communication complexity.


\smallskip
\noindent \textbf{Limited clock drift tolerance.}
A third limitation of \rare is that its latency is susceptible to clock drifts.
Namely, let $\phi > 1$ denote the bound on clock drifts after $\mathit{GST}$.
To accommodate for the bounded clock drifts after $\mathit{GST}$, \rare increases the duration of a view. 
The duration of the $i$-th view of an epoch becomes $\phi^i \cdot \mathit{view\_duration}$ (instead of only $\mathit{view\_duration}$).
Thus, the latency complexity of \rare becomes $O(f \cdot \phi^f)$.

\paragraph*{Acknowledgments}

The authors would like to thank Gregory Chockler and Alexey Gotsman for helpful conversations. This work is supported in part by the ARC
Future Fellowship funding scheme (\#180100496).

\bibstyle{plainurl}
\bibliography{references}

\newpage
\appendix
\section{\rare: Proof of Correctness and Complexity}
\label{section:appendix_rare}

This section proves the correctness and establishes the complexity of \rare (\Cref{algorithm:synchronizer}).
We start by defining the concept of a process' \emph{behavior} and \emph{timer history}.

\smallskip
\textit{Behaviors \& timer histories.}
A \emph{behavior} of a process $P_i$ is a sequence of (1) message-sending events performed by $P_i$, (2) message-reception events performed by $P_i$, and (3) internal events performed by $P_i$ (e.g., invocations of the $\mathsf{measure}(\cdot)$ and $\mathsf{cancel()}$ methods on the local timers).
If an event $e$ belongs to a behavior $\beta_i$, we write $e \in \beta_i$; otherwise, we write $e \notin \beta_i$.
If an event $e_1$ precedes an event $e_2$ in a behavior $\beta_i$, we write $e_1 \stackrel{\beta_i}{\prec} e_2$.
Note that, if $e_1 \stackrel{\beta_i}{\prec} e_2$ and $e_1$ occurs at some time $t_1$ and $e_2$ occurs at some time $t_2$, $t_1 \leq t_2$.

A \emph{timer history} of a process $P_i$ is a sequence of (1) invocations of the $\mathsf{measure}(\cdot)$ and $\mathsf{cancel}()$ methods on $\mathit{view\_timer}_i$ and $\mathit{dissemination\_timer}_i$, and (2) processed expiration events of $\mathit{view\_timer}_i$ and $\mathit{dissemination\_timer}_i$.
Observe that a timer history of a process is a subsequence of the behavior of the process.
We further denote by $h_i|_{\mathit{view}}$ a subsequence of $h_i$ associated with $\mathit{view\_timer}_i$, where $h_i$ is a timer history of a process $P_i$.
If an expiration event $\mathit{Exp}$ of a timer is associated with an invocation $\mathit{Inv}$ of the $\mathsf{measure(\cdot)}$ method on the timer, we say that $\mathit{Inv}$ \emph{produces} $\mathit{Exp}$.
Note that a single invocation of the $\mathsf{measure}(\cdot)$ method can produce at most one expiration event.

Given an execution, we denote by $\beta_i$ and $h_i$ the behavior and the timer history of the process $P_i$, respectively.

\smallskip
\noindent \textbf{Proof of correctness.}
In order to prove the correctness of \rare, we need to prove that \rare ensures the eventual synchronization property (see \Cref{subsection:view_synchronization_problem_definition}).
First, we show that the value of $\mathit{view}_i$ variable at a correct process $P_i$ is never smaller than $1$ or greater than $f + 1$.

\begin{lemma} \label{lemma:view_f+1}
Let $P_i$ be a correct process.
Then, $1 \leq \mathit{view}_i \leq f + 1$ throughout the entire execution.
\end{lemma}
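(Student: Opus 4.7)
The plan is to prove the invariant by straightforward induction on the sequence of events in $\beta_i$ that modify the value of $\mathit{view}_i$. First, I would identify the complete list of such events by scanning the pseudocode in \Cref{algorithm:synchronizer}: the variable is set to $1$ at its declaration (line~\ref{line:init_view} of \Cref{algorithm:variable_constants}), it is incremented at line~\ref{line:increment_view}, and it is reset to $1$ at line~\ref{line:reset_view}. No other line touches $\mathit{view}_i$, so the case analysis is exhaustive.

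The base case is immediate: before any event occurs, $\mathit{view}_i = 1$ by initialization (line~\ref{line:init_view}), and $1 \in [1, f+1]$ since $f \geq 0$. For the inductive step, assume the invariant holds just before some modification. If the next modifying event is the reset at line~\ref{line:reset_view}, then the new value is $1$, which satisfies the invariant trivially. If the next modifying event is the increment at line~\ref{line:increment_view}, then by inspection of the code this event is executed only inside the \textbf{if} branch guarded by the test $\mathit{view}_i < f+1$ at line~\ref{line:check_last_view}; by the inductive hypothesis $\mathit{view}_i \geq 1$ before the increment, so after the increment $2 \leq \mathit{view}_i \leq f+1$, and the invariant is preserved.

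I do not foresee any real obstacle: the argument is purely syntactic on the pseudocode, with no reliance on message-delivery timing, threshold-signature properties, or partial synchrony. The only thing I would want to be careful about is explicitly justifying that no other rule alters $\mathit{view}_i$ (in particular, the rules at lines~\ref{line:receive_epoch_over} and~\ref{line:receive_epoch_over_complete} update $\mathit{epoch}_i$ and timers but leave $\mathit{view}_i$ untouched, the reset happening only later at line~\ref{line:reset_view} inside the handler at line~\ref{line:sync_timer_expires}). Once that is noted, the lemma follows immediately from the two inductive cases above.
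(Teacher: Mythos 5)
Your proof is correct and follows essentially the same route as the paper's: both arguments reduce to the observation that $\mathit{view}_i$ is only ever initialized to $1$, reset to $1$ at line~\ref{line:reset_view}, or incremented at line~\ref{line:increment_view} under the guard $\mathit{view}_i < f+1$ at line~\ref{line:check_last_view}. The paper phrases the upper bound as a proof by contradiction on the first violating update rather than as an explicit induction, but the content is identical.
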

\begin{proof}
First, $\mathit{view}_i \geq 1$ throughout the entire execution since (1) the initial value of $\mathit{view}_i$ is $1$ (line~\ref{line:init_view} of \Cref{algorithm:variable_constants}), and (2) the value of $\mathit{view}_i$ either increases (line~\ref{line:increment_view}) or is set to $1$ (line~\ref{line:reset_view}).

By contradiction, suppose that $\mathit{view}_i = F > f + 1 > 1$ at some time during the execution.
The update of $\mathit{view}_i$ to $F > f + 1$ must have been done at line~\ref{line:increment_view}.
This means that, just before executing line~\ref{line:increment_view}, $\mathit{view}_i \geq f + 1$.
However, this contradicts the check at line~\ref{line:check_last_view}, which concludes the proof.
\end{proof}

The next lemma shows that, if an invocation of the $\mathsf{measure}(\cdot)$ method on $\mathit{dissemination\_timer}_i$ produces an expiration event, the expiration event immediately follows the invocation in the timer history $h_i$ of a correct process $P_i$.

\begin{lemma} \label{lemma:expiration_after_invocation_dissemination}
Let $P_i$ be a correct process.
Let $\mathit{Exp}_d$ be any expiration event of $\mathit{dissemination\_timer}_i$ that belongs to $h_i$ and let $\mathit{Inv}_d$ be the invocation of the $\mathsf{measure}(\cdot)$ method (on $\mathit{dissemination\_timer}_i$) that has produced $\mathit{Exp}_d$.
Then, $\mathit{Exp}_d$ immediately follows $\mathit{Inv}_d$ in $h_i$.
\end{lemma}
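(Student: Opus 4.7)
The plan is to prove this by contradiction: assume some event $E \in h_i$ occurs strictly between $\mathit{Inv}_d$ and $\mathit{Exp}_d$, and derive a contradiction in every case. The key preliminary observation, obtained by inspecting \Cref{algorithm:synchronizer}, is that both invocations of $\mathsf{measure}(\cdot)$ on $\mathit{dissemination\_timer}_i$ in the code (line~\ref{line:sync_timer_measure} inside the rule at line~\ref{line:receive_epoch_over}, and line~\ref{line:sync_timer_measure_2} inside the rule at line~\ref{line:receive_epoch_over_complete}) are immediately preceded, within the same rule, by $\mathit{dissemination\_timer}_i.\mathsf{cancel}()$ (lines~\ref{line:sync_timer_cancel_1} and~\ref{line:sync_timer_cancel_2}, respectively) and by $\mathit{view\_timer}_i.\mathsf{cancel}()$ (lines~\ref{line:cancel_view_timer_1} and~\ref{line:cancel_view_timer_2}). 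In particular, $\mathit{Inv}_d$ itself is preceded by such a pair of cancels, so at the moment $\mathit{Inv}_d$ occurs neither timer has a pending expiration event queued, and the only expiration event on $\mathit{dissemination\_timer}_i$ that can be placed in the queue after $\mathit{Inv}_d$ (absent any further operation on that timer) is precisely $\mathit{Exp}_d$.

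I would then perform a case analysis on $E$. If $E$ is a $\mathsf{cancel}()$ on $\mathit{dissemination\_timer}_i$, then by the semantics of $\mathsf{cancel}()$ it removes the pending expiration produced by $\mathit{Inv}_d$ from the event queue, so $\mathit{Exp}_d \notin h_i$, contradicting the hypothesis. If $E$ is a $\mathsf{measure}(\cdot)$ on $\mathit{dissemination\_timer}_i$, the preliminary observation implies that $E$ is itself immediately preceded by a $\mathsf{cancel}()$ on the same timer that also lies strictly between $\mathit{Inv}_d$ and $\mathit{Exp}_d$, reducing to the previous case. If $E$ is an expiration of $\mathit{dissemination\_timer}_i$, it must have been produced by some invocation $\mathit{Inv}_d' \neq \mathit{Inv}_d$; but $\mathit{Inv}_d'$ cannot precede $\mathit{Inv}_d$ (the $\mathsf{cancel}()$ immediately preceding $\mathit{Inv}_d$ would have removed $E$ from the queue) and it cannot lie between $\mathit{Inv}_d$ and $\mathit{Exp}_d$ (by the previous case), so this is impossible.

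The remaining and slightly more delicate cases concern events on $\mathit{view\_timer}_i$, which I would handle by noting that each such event is triggered by one of the four rules of the protocol. If $E$ is a $\mathsf{cancel}()$ on $\mathit{view\_timer}_i$, inspection shows that this only happens inside the rules at lines~\ref{line:receive_epoch_over} and~\ref{line:receive_epoch_over_complete}, which both invoke $\mathit{dissemination\_timer}_i.\mathsf{cancel}()$ between $\mathit{Inv}_d$ and $\mathit{Exp}_d$, reducing to the first case. If $E$ is a $\mathsf{measure}(\cdot)$ on $\mathit{view\_timer}_i$, it is triggered either by the init rule (which runs only once, strictly before $\mathit{Inv}_d$), by the rule at line~\ref{line:rule_view_expired} (which requires a preceding expiration of $\mathit{view\_timer}_i$ — ruled out below), or by the rule at line~\ref{line:sync_timer_expires} (which requires $\mathit{Exp}_d$ to already have been processed, contradicting that $E$ precedes $\mathit{Exp}_d$). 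Finally, if $E$ is an expiration of $\mathit{view\_timer}_i$, note that $\mathit{Inv}_d$ is preceded by $\mathit{view\_timer}_i.\mathsf{cancel}()$ and, by the argument just given, no $\mathit{view\_timer}_i.\mathsf{measure}(\cdot)$ can occur between $\mathit{Inv}_d$ and $\mathit{Exp}_d$ without first forcing one of the contradictions above. Hence no admissible $E$ exists and the lemma follows. The main point requiring care is the self-referential structure of the view-timer cases, which is why I would dispatch the dissemination-timer cases first and then use them to close the view-timer cases.
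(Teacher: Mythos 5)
Your overall strategy is sound and, in substance, it is the same argument the paper makes: use the fact that every $\mathsf{measure}(\cdot)$ on $\mathit{dissemination\_timer}_i$ (lines~\ref{line:sync_timer_measure} and~\ref{line:sync_timer_measure_2}) is immediately preceded in the same rule by $\mathsf{cancel}()$ on both timers, plus the semantics of $\mathsf{cancel}()$, to exclude every possible interloper by a case analysis on event types. The dissemination-timer cases and the $\mathsf{cancel}()$-on-$\mathit{view\_timer}_i$ case are handled correctly. The one genuine gap is exactly the spot you flag as ``delicate'': as written, your view-timer cases are circular. Your sub-case ``$E$ is a $\mathsf{measure}(\cdot)$ on $\mathit{view\_timer}_i$ triggered by the rule at line~\ref{line:rule_view_expired}'' is dismissed by appealing to the claim that no $\mathit{view\_timer}_i$ expiration lies in the window (``ruled out below''), while your final case ``$E$ is an expiration of $\mathit{view\_timer}_i$'' is dismissed by appealing to the claim that no $\mathit{view\_timer}_i$ $\mathsf{measure}(\cdot)$ lies in the window (``by the argument just given''). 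Ordering the dissemination-timer cases first, as you propose, does help with the line~\ref{line:sync_timer_expires}/line~\ref{line:view_timer_measure} sub-case, but it does not break this particular cycle, because both members of the cycle are view-timer cases.

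The repair is cheap but must be stated: take $E$ to be the \emph{earliest} offending event strictly between $\mathit{Inv}_d$ and $\mathit{Exp}_d$ (or the earliest view-timer event in that window). Then a $\mathit{view\_timer}_i$ expiration cannot be earliest, since the $\mathsf{cancel}()$ preceding $\mathit{Inv}_d$ cleared all pending view-timer expirations and the enabling $\mathsf{measure}(\cdot)$ would have to occur earlier in the window; and a $\mathsf{measure}(\cdot)$ via line~\ref{line:rule_view_expired} cannot be earliest, since the view-timer expiration that triggers it would have to occur earlier in the window. With that minimality argument every case closes and your proof is complete. It is worth noting that the paper sidesteps the issue by a different bookkeeping: it shows that the event \emph{immediately following} $\mathit{Inv}_d$ in $h_i$ can only be $\mathit{Exp}_d$, so each case is resolved by inspecting immediate neighbours of a single position and no case ever needs to quote another; your ``nothing strictly between'' framing is equivalent once the earliest-event device is added, but without it the two view-timer cases lean on each other.
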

\begin{proof}
In order to prove the lemma, we show that only $\mathit{Exp}_d$ can immediately follow $\mathit{Inv}_d$ in $h_i$.
We consider the following scenarios:
\begin{compactitem}
    \item Let an invocation $\mathit{Inv}'_d$ of the $\mathsf{measure}(\cdot)$ method on $\mathit{dissemination\_timer}_i$ immediately follow $\mathit{Inv}_d$ in $h_i$:
    $\mathit{Inv}'_d$ could only have been invoked either at line~\ref{line:sync_timer_measure} or at line~\ref{line:sync_timer_measure_2}.
    However, an invocation of the $\mathsf{cancel()}$ method on $\mathit{dissemination\_timer}_i$ (line~\ref{line:sync_timer_cancel_1} or line~\ref{line:sync_timer_cancel_2}) must immediately precede $\mathit{Inv}'_d$ in $h_i$, which contradicts the fact that $\mathit{Inv}_d$ immediately precedes $\mathit{Inv}'_d$.
    Therefore, this scenario is impossible.
    
    \item Let an invocation $\mathit{Inv}'_d$ of the $\mathsf{cancel}()$ method on $\mathit{dissemination\_timer}_i$ immediately follow $\mathit{Inv}_d$ in $h_i$:
    $\mathit{Inv}'_d$ could only have been invoked either at line~\ref{line:sync_timer_cancel_1} or at line~\ref{line:sync_timer_cancel_2}.
    However, an invocation of the $\mathsf{cancel()}$ method on $\mathit{view\_timer}_i$ (line~\ref{line:cancel_view_timer_1} or line~\ref{line:cancel_view_timer_2}) must immediately precede $\mathit{Inv}'_d$ in $h_i$, which contradicts the fact that $\mathit{Inv}_d$ immediately precedes $\mathit{Inv}'_d$.
    Hence, this scenario is impossible, as well.
    
    \item Let an expiration event $\mathit{Exp}'_d \neq \mathit{Exp}_d$ of $\mathit{dissemination\_timer}_i$ immediately follow $\mathit{Inv}_d$ in $h_i$:
    As $\mathit{Inv}_d$ could have been invoked either at line~\ref{line:sync_timer_measure} or at line~\ref{line:sync_timer_measure_2}, an invocation of the $\mathsf{cancel()}$ method on $\mathit{dissemination\_timer}_i$ (line~\ref{line:sync_timer_cancel_1} or line~\ref{line:sync_timer_cancel_2}) immediately precedes $\mathit{Inv}_d$ in $h_i$.
    This contradicts the fact that $\mathit{Exp}'_d \neq \mathit{Exp}_d$ is produced and immediately follows $\mathit{Inv}_d$, which renders this scenario impossible.
    
    \item Let an invocation $\mathit{Inv}_v$ of the $\mathsf{measure}(\cdot)$ method on $\mathit{view\_timer}_i$ immediately follow $\mathit{Inv}_d$ in $h_i$:
    $\mathit{Inv}_v$ could have been invoked either at line~\ref{line:view_timer_measure_without_msg_exchange} or at line~\ref{line:view_timer_measure}.
    We further consider both cases:
    \begin{compactitem}
        \item If $\mathit{Inv}_v$ was invoked at line~\ref{line:view_timer_measure_without_msg_exchange}, then $\mathit{Inv}_v$ is immediately preceded by an expiration event of $\mathit{view\_timer}_i$ (line~\ref{line:rule_view_expired}).
        This case is impossible as $\mathit{Inv}_v$ is not immediately preceded by $\mathit{Inv}_d$.
        
        \item If $\mathit{Inv}_v$ was invoked at line~\ref{line:view_timer_measure}, then $\mathit{Inv}_v$ is immediately preceded by an expiration event of $\mathit{dissemination\_timer}_i$ (line~\ref{line:sync_timer_expires}).
        This case is also impossible as $\mathit{Inv}_v$ is not immediately preceded by $\mathit{Inv}_d$.
    \end{compactitem}
    As neither of the two cases is possible, $\mathit{Inv}_v$ cannot immediately follow $\mathit{Inv}_d$.
    
    \item Let an invocation $\mathit{Inv}_v$ of the $\mathsf{cancel}()$ method on $\mathit{view\_timer}_i$ immediately follow $\mathit{Inv}_d$ in $h_i$:
    $\mathit{Inv}_v$ could have been invoked either at line~\ref{line:cancel_view_timer_1} or at line~\ref{line:cancel_view_timer_2}.
    In both cases, an invocation of the $\mathsf{cancel}()$ method on $\mathit{dissemination\_timer}$ (line~\ref{line:sync_timer_cancel_1} or line~\ref{line:sync_timer_cancel_2}) immediately follows $\mathit{Inv}_v$ in $h_i$.
    This contradicts the fact that $\mathit{Inv}_d$ produces $\mathit{Exp}_d$, which implies that this case is impossible.
    
    \item Let an expiration event $\mathit{Exp}_v$ of $\mathit{view\_timer}_i$ immediately follow $\mathit{Inv}_d$ in $h_i$:
    As $\mathit{Inv}_d$ could have been invoked either at line~\ref{line:sync_timer_measure} or at line~\ref{line:sync_timer_measure_2}, invocations of the $\mathsf{cancel()}$ method on $\mathit{view\_timer}_i$ and $\mathit{dissemination\_timer}_i$ (lines~\ref{line:cancel_view_timer_1},~\ref{line:sync_timer_cancel_1} or lines~\ref{line:cancel_view_timer_2},~\ref{line:sync_timer_cancel_2}) immediately precede $\mathit{Inv}_d$ in $h_i$.
    This contradicts the fact that $\mathit{Exp}_v$ is produced and immediately follows $\mathit{Inv}_d$, which renders this scenario impossible.
\end{compactitem}
As any other option is impossible, $\mathit{Exp}_d$ must immediately follow $\mathit{Inv}_d$ in $h_i$.
Thus, the lemma.
\end{proof}

The next lemma shows that views entered by a correct process are monotonically increasing.

\begin{lemma} [Monotonically increasing views] \label{lemma:increasing_views}
Let $P_i$ be a correct process.
Let $e_1 = \mathsf{advance}(v)$, $e_2 = \mathsf{advance}(v')$ and $e_1 \stackrel{\beta_i}{\prec} e_2$.
Then, $v' > v$.
\end{lemma}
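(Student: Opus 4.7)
The approach is to track the composite quantity $\Phi_i := (\mathit{epoch}_i - 1)(f+1) + \mathit{view}_i$ and show that (i) $v = \Phi_i$ at the instant every $\mathsf{advance}(v)$ triggering occurs, and (ii) $\Phi_i$ is strictly increasing across consecutive advance triggerings; the lemma then follows by transitivity. Claim (i) reduces to inspection of the three places in \Cref{algorithm:synchronizer} at which $\mathsf{advance}(\cdot)$ is fired: line~\ref{line:start_view_1} (with $\mathit{epoch}_i = \mathit{view}_i = 1$), line~\ref{line:start_view_without_msg_exchange} (after incrementing $\mathit{view}_i$), and line~\ref{line:start_view_2} (after resetting $\mathit{view}_i$ to $1$ with a freshly increased $\mathit{epoch}_i$). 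Because $1 \leq \mathit{view}_i \leq f+1$ by Lemma~\ref{lemma:view_f+1}, the map $(\mathit{epoch}_i, \mathit{view}_i) \mapsto \Phi_i$ is injective and order-preserving under the lexicographic order on pairs, so a strict lexicographic increase of the pair yields $v' > v$.

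Next I would fix consecutive advance events $e_1 \stackrel{\beta_i}{\prec} e_2$ with associated values $v$ and $v'$ and split on where $e_2$ is triggered (line~\ref{line:start_view_1} is ruled out since the initialization event is unique and cannot succeed $e_1$). If $e_2$ is triggered at line~\ref{line:start_view_without_msg_exchange}, the governing $\mathit{view\_timer}_i$ expiration must be produced by the $\mathsf{measure}(\cdot)$ invocation immediately preceding $e_1$ (at line~\ref{line:view_timer_measure_first_view}, \ref{line:view_timer_measure_without_msg_exchange}, or \ref{line:view_timer_measure}); otherwise an intermediate $\mathsf{measure}$ on $\mathit{view\_timer}_i$ exists, and every such $\mathsf{measure}$ is in turn immediately followed by an $\mathsf{advance}$, contradicting the consecutiveness of $e_1$ and $e_2$. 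Consequently, neither line~\ref{line:update_epoch_1} nor line~\ref{line:update_epoch_2} executed between $e_1$ and $e_2$ (each cancels $\mathit{view\_timer}_i$ and would remove its pending expiration), so $\mathit{epoch}_i$ is unchanged while $\mathit{view}_i$ is incremented by exactly one at line~\ref{line:increment_view}, yielding $v' = v + 1 > v$.

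If instead $e_2$ is triggered at line~\ref{line:start_view_2}, then by Lemma~\ref{lemma:expiration_after_invocation_dissemination} the firing expiration of $\mathit{dissemination\_timer}_i$ was produced by an invocation at line~\ref{line:sync_timer_measure} or~\ref{line:sync_timer_measure_2}, each of which is preceded by a strict increase of $\mathit{epoch}_i$ at line~\ref{line:update_epoch_1} or~\ref{line:update_epoch_2}. Writing $E$ for the value of $\mathit{epoch}_i$ at $e_1$ and $E' \geq E + 1$ for its value at $e_2$, and using $\mathit{view}_i \leq f+1$ (Lemma~\ref{lemma:view_f+1}), a direct computation gives
\[
v' = (E' - 1)(f+1) + 1 \geq E(f+1) + 1 > (E-1)(f+1) + (f+1) \geq v,
\]
which completes the case analysis. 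The main obstacle will be the timer bookkeeping in the first case: Lemma~\ref{lemma:expiration_after_invocation_dissemination} is stated only for $\mathit{dissemination\_timer}_i$, so a companion statement for $\mathit{view\_timer}_i$ (proved by an analogous case analysis on the event immediately following each $\mathsf{measure}$) is needed to justify that the view-timer expiration triggering $e_2$ is indeed produced by the $\mathsf{measure}$ immediately preceding $e_1$. Once that companion is in hand, the remainder is a routine arithmetic check.
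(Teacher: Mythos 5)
Your proposal is correct in substance, but it takes a noticeably different route from the paper's proof, and it is worth spelling out the trade-off. The paper cases on the epoch values $e' \geq e$ at the two indications: the case $e' > e$ is pure arithmetic via \Cref{lemma:view_f+1}, and in the case $e' = e$ it derives a contradiction directly — if $\mathit{view}_i$ could fail to increase, line~\ref{line:reset_view} (or the trigger at line~\ref{line:start_view_2}) must have executed, which forces an expiration of $\mathit{dissemination\_timer}_i$ whose producing $\mathsf{measure}(\cdot)$, by \Cref{lemma:expiration_after_invocation_dissemination}, follows the invocation $\mathit{Inv}_v$ preceding $\mathsf{advance}(v)$, and that $\mathsf{measure}$ is preceded (at line~\ref{line:update_epoch_1} or~\ref{line:update_epoch_2}) by a strict increase of $\mathit{epoch}_i$ beyond $e$, contradicting $e' = e$. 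Crucially, this argument needs \emph{only} the dissemination-timer lemma, and it applies to an arbitrary (not necessarily consecutive) pair of advance events. You instead restrict to consecutive advance events, case on where $e_2$ fires, and invoke transitivity; the price is exactly the obstacle you flag yourself: your first case needs the $\mathit{view\_timer}_i$ analogue of \Cref{lemma:expiration_after_invocation_dissemination} to rule out stale expirations, i.e., the paper's \Cref{lemma:view_timer_behavior,lemma:expiration_after_invocation,lemma:expiration_after_invocation_view}, which the paper proves only \emph{after} this lemma (there is no circularity — those lemmas do not use monotonicity — but your proof is not self-contained without writing them out, whereas the paper's is). One further point to tighten in your second case: the inequality $E' \geq E + 1$ does not follow merely from "the producing invocation is preceded by a strict increase of $\mathit{epoch}_i$"; you must also argue, as the paper does, that this producing $\mathsf{measure}$ on $\mathit{dissemination\_timer}_i$ comes \emph{after} $\mathit{Inv}_v$ in $h_i$ (by \Cref{lemma:expiration_after_invocation_dissemination}), so that the epoch update postdates $e_1$, at which point $\mathit{epoch}_i$ was already $E$; an update occurring before $e_1$ would give you nothing. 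With that step made explicit and the companion view-timer lemma proved as you sketch, your argument goes through, and in the consecutive case at line~\ref{line:start_view_without_msg_exchange} it even yields the slightly sharper conclusion $v' = v + 1$.
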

\begin{proof}
Let $\mathit{epoch}_i = e$ and $\mathit{view}_i = j$ when $P_i$ triggers $\mathsf{advance}(v)$.
Moreover, let $\mathit{epoch}_i = e'$ and $\mathit{view}_i = j'$ when $P_i$ triggers $\mathsf{advance}(v')$.
As the value of the $\mathit{epoch}_i$ variable only increases throughout the execution (lines~\ref{line:receive_epoch_over}, ~\ref{line:update_epoch_1} and lines~\ref{line:receive_epoch_over_complete}, ~\ref{line:update_epoch_2}), $e' \geq e$.

We investigate both possibilities:
\begin{compactitem}
    \item Let $e' > e$.
    In this case, the lemma follows from \Cref{lemma:view_f+1} and the fact that $(e' - 1) \cdot (f + 1) + j' > (e - 1) \cdot (f + 1) + j$, for every $j, j' \in [1, f + 1]$.
    
    \item Let $e' = e$.
    Just before triggering $\mathsf{advance}(v)$ (line~\ref{line:start_view_1} or line~\ref{line:start_view_without_msg_exchange} or line~\ref{line:start_view_2}), $P_i$ has invoked the $\mathsf{measure}(\cdot)$ method on $\mathit{view\_timer}_i$ (line~\ref{line:view_timer_measure_first_view} or line~\ref{line:view_timer_measure_without_msg_exchange} or line~\ref{line:view_timer_measure}); we denote this invocation of the $\mathsf{measure}(\cdot)$ method by $\mathit{Inv}_v$.
    
    Now, we investigate two possible scenarios:
    \begin{compactitem}
        \item Let $P_i$ trigger $\mathsf{advance}(v')$ at line~\ref{line:start_view_without_msg_exchange}.
        By contradiction, suppose that $j' \leq j$.
        Hence, just before triggering $\mathsf{advance}(v')$ (i.e., just before executing line~\ref{line:increment_view}), we have that $\mathit{view}_i < j$.
        Thus, line~\ref{line:reset_view} must have been executed by $P_i$ after triggering $\mathsf{advance}(v)$ and before triggering $\mathsf{advance}(v')$, which means that an expiration event of $\mathit{dissemination\_timer}_i$ (line~\ref{line:sync_timer_expires}) follows $\mathit{Inv}_v$ in $h_i$.
        By \Cref{lemma:expiration_after_invocation_dissemination}, the $\mathsf{measure}(\cdot)$ method on $\mathit{dissemination\_timer}_i$ was invoked by $P_i$ after the invocation of $\mathit{Inv}_v$.
        Hence, when the aforementioned invocation of the $\mathsf{measure}(\cdot)$ method on $\mathit{dissemination\_timer}_i$ was invoked by $P_i$ (line~\ref{line:sync_timer_measure} or line~\ref{line:sync_timer_measure_2}), the $\mathit{epoch}_i$ variable had a value greater than $e$ (line~\ref{line:update_epoch_1} or line~\ref{line:update_epoch_2}) since $\mathit{epoch}_i \geq e$ when processing line~\ref{line:receive_epoch_over} or line~\ref{line:receive_epoch_over_complete}; recall that the value of the $\mathit{epoch}_i$ variable only increases throughout the execution.
        Therefore, we reach a contradiction with the fact that $e' = e$, which means that $j' > j$ and the lemma holds in this case.
        
        \item Let $P_i$ trigger $\mathsf{advance}(v')$ at line~\ref{line:start_view_2}.
        In this case, $P_i$ processes an expiration event of $\mathit{dissemination\_timer}_i$ (line~\ref{line:sync_timer_expires}); therefore, the $\mathsf{measure}(\cdot)$ method on $\mathit{dissemination\_timer}_i$ was invoked by $P_i$ after the invocation of $\mathit{Inv}_v$ (by \Cref{lemma:expiration_after_invocation_dissemination}).
        As in the previous case, when the aforementioned invocation of the $\mathsf{measure}(\cdot)$ method on $\mathit{dissemination\_timer}_i$ was invoked by $P_i$ (line~\ref{line:sync_timer_measure} or line~\ref{line:sync_timer_measure_2}), the $\mathit{epoch}_i$ variable had a value greater than $e$ (line~\ref{line:update_epoch_1} or line~\ref{line:update_epoch_2}); recall that the value of the $\mathit{epoch}_i$ variable only increases throughout the execution.
        Thus, we reach a contradiction with the fact that $e' = e$, which renders this case impossible.
    \end{compactitem}
    In the only possible scenario, we have that $j' > j$, which implies that $v' > v$.
\end{compactitem}
The lemma holds as it holds in both possible cases.
\end{proof}

The next lemma shows that an invocation of the $\mathsf{measure(\cdot)}$ method cannot be immediately followed by another invocation of the same method in a timer history (of a correct process) associated with $\mathit{view\_timer}_i$.

\begin{lemma} \label{lemma:view_timer_behavior}
Let $P_i$ be a correct process.
Let $\mathit{Inv}_v$ be any invocation of the $\mathsf{measure(\cdot)}$ method on $\mathit{view\_timer}_i$ that belongs to $h_i$.
Invocation $\mathit{Inv}_v$ is not immediately followed by another invocation of the $\mathsf{measure(\cdot)}$ method on $\mathit{view\_timer}_i$ in $h_i|_{\mathit{view}}$.
\end{lemma}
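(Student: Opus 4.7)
The plan is to argue by contradiction, assuming $\mathit{Inv}_v$ is immediately followed in $h_i|_{\mathit{view}}$ by another $\mathsf{measure}(\cdot)$ invocation $\mathit{Inv}'_v$ on $\mathit{view\_timer}_i$, and then to perform a case analysis on which of the three invocation sites in \Cref{algorithm:synchronizer} houses $\mathit{Inv}'_v$: the init rule (line~\ref{line:view_timer_measure_first_view}), the expiration-triggered branch (line~\ref{line:view_timer_measure_without_msg_exchange}, inside the upon-block at line~\ref{line:rule_view_expired}), or the dissemination-triggered branch (line~\ref{line:view_timer_measure}, inside the upon-block at line~\ref{line:sync_timer_expires}). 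In each case, my goal is to exhibit a $\mathit{view\_timer}_i$ event (an expiration or a cancel) strictly between $\mathit{Inv}_v$ and $\mathit{Inv}'_v$ in $h_i$, which would contradict the assumed immediate succession in $h_i|_{\mathit{view}}$.

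The first two cases are routine. If $\mathit{Inv}'_v$ lives at line~\ref{line:view_timer_measure_first_view}, then the init rule fires only once at the very start of the execution, so no earlier $\mathit{Inv}_v$ can exist. If $\mathit{Inv}'_v$ lives at line~\ref{line:view_timer_measure_without_msg_exchange}, its upon-block is triggered by a processed expiration of $\mathit{view\_timer}_i$; this expiration sits in $h_i$ strictly after $\mathit{Inv}_v$ and immediately before $\mathit{Inv}'_v$, and is itself a $\mathit{view\_timer}_i$ event, supplying the desired intermediate event in $h_i|_{\mathit{view}}$.

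The third case, in which $\mathit{Inv}'_v$ lives at line~\ref{line:view_timer_measure}, is the main obstacle because the triggering event is a $\mathit{dissemination\_timer}_i$ expiration $\mathit{Exp}_d$, which is not itself a $\mathit{view\_timer}_i$ event. Here I would follow the causal chain: by \Cref{lemma:expiration_after_invocation_dissemination}, $\mathit{Exp}_d$ is immediately preceded in $h_i$ (restricted to $\mathit{dissemination\_timer}_i$ events) by the $\mathsf{measure}(\cdot)$ invocation $\mathit{Inv}_d$ that produced it; and $\mathit{Inv}_d$ appears at line~\ref{line:sync_timer_measure} or line~\ref{line:sync_timer_measure_2}, inside an upon-block whose body also invokes $\mathsf{cancel}()$ on $\mathit{view\_timer}_i$ (at line~\ref{line:cancel_view_timer_1} or line~\ref{line:cancel_view_timer_2}) immediately before $\mathit{Inv}_d$. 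Call this cancel $\mathit{Cancel}_v$. I would argue that $\mathit{Cancel}_v$ lies strictly between $\mathit{Inv}_v$ and $\mathit{Inv}'_v$ in $h_i$ by showing the alternative leads to a contradiction: if $\mathit{Cancel}_v$ preceded $\mathit{Inv}_v$, then $\mathit{Inv}_v$ would be a $\mathsf{measure}(\cdot)$ invocation on $\mathit{view\_timer}_i$ occurring inside the open interval $(\mathit{Inv}_d,\mathit{Exp}_d)$, which I rule out site by site---init only fires at the start of the execution; another measure at line~\ref{line:view_timer_measure} would demand a second $\mathit{dissemination\_timer}_i$ event in this interval, contradicting \Cref{lemma:expiration_after_invocation_dissemination}; and a measure at line~\ref{line:view_timer_measure_without_msg_exchange} would require a prior $\mathit{view\_timer}_i$ expiration and hence a prior view_timer measure inside the interval, producing an infinite descent that the model's assumption ``no process can take infinitely many steps in finite time'' forbids.

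The main difficulty therefore lies in Case~3, where the witness event must be recovered one step back in the causal chain via \Cref{lemma:expiration_after_invocation_dissemination}, and where ruling out stray $\mathit{view\_timer}_i$ measures inside $(\mathit{Inv}_d,\mathit{Exp}_d)$ --- by the above site-by-site descent --- is the essential technical step of the proof.
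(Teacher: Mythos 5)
Your proposal is correct and takes essentially the same route as the paper: a case analysis on the site of the immediately-following $\mathsf{measure}(\cdot)$ invocation, with the hard case (line~\ref{line:view_timer_measure}) resolved by walking back through \Cref{lemma:expiration_after_invocation_dissemination} to the $\mathsf{cancel()}$ on $\mathit{view\_timer}_i$ at line~\ref{line:cancel_view_timer_1} or line~\ref{line:cancel_view_timer_2}, which is the event separating $\mathit{Inv}_v$ and $\mathit{Inv}'_v$ in $h_i|_{\mathit{view}}$. Two remarks. First, you read \Cref{lemma:expiration_after_invocation_dissemination} as adjacency only among $\mathit{dissemination\_timer}_i$ events, but it states that $\mathit{Exp}_d$ immediately follows $\mathit{Inv}_d$ in the \emph{full} history $h_i$; with that strength no event at all---in particular no $\mathit{view\_timer}_i$ measure---can lie in the interval $(\mathit{Inv}_d, \mathit{Exp}_d)$, so your site-by-site descent is unnecessary (this is how the paper argues). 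Second, as written the descent has a small hole: the step claiming that the expiration preceding a measure at line~\ref{line:view_timer_measure_without_msg_exchange} forces a producing measure \emph{inside} the interval tacitly relies on the $\mathsf{cancel()}$ semantics---$\mathit{Cancel}_v$ removed every pending $\mathit{view\_timer}_i$ expiration, so any expiration processed after it must stem from a measure invoked after it---and this should be said explicitly; with either fix the argument is complete.
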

\begin{proof}
We denote by $\mathit{Inv}'_v$ the first invocation of the $\mathsf{measure(\cdot)}$ method on $\mathit{view\_timer}_i$ after $\mathit{Inv}_v$ in $h_i|_{\mathit{view}}$.
If $\mathit{Inv}'_v$ does not exist, the lemma trivially holds.
Hence, let $\mathit{Inv}'_v$ exist in the rest of the proof.
We examine two possible cases:
\begin{compactitem}
    \item Let $\mathit{Inv}'_v$ be invoked at line~\ref{line:view_timer_measure_without_msg_exchange}: 
    In this case, there exists an expiration event of $\mathit{view\_timer}_i$ (line~\ref{line:rule_view_expired}) separating $\mathit{Inv}_v$ and $\mathit{Inv}'_v$ in $h_i|_{\mathit{view}}$.
    
    \item Let $\mathit{Inv}'_v$ be invoked at line~\ref{line:view_timer_measure}:
    In this case, $\mathit{Inv}'_v$ is immediately preceded by an expiration event $\mathit{Exp}_d$ of $\mathit{dissemination\_timer}_i$ (line~\ref{line:sync_timer_expires}) in $h_i$.
    By \Cref{lemma:expiration_after_invocation_dissemination}, an invocation $\mathit{Inv}_d$ of the $\mathsf{measure}(\cdot)$ method on $\mathit{dissemination\_timer}_i$ immediately precedes $\mathit{Exp}_d$ in $h_i$.
    As $\mathit{Inv}_d$ could have been invoked either at line~\ref{line:sync_timer_measure} or at line~\ref{line:sync_timer_measure_2}, $\mathit{Inv}_d$ is immediately preceded by invocations of the $\mathsf{cancel()}$ methods on $\mathit{view\_timer}_i$ and $\mathit{dissemination\_timer}_i$ (lines~\ref{line:cancel_view_timer_1},~\ref{line:sync_timer_cancel_1} or lines~\ref{line:cancel_view_timer_2},~\ref{line:sync_timer_cancel_2}).
    Hence, in this case, an invocation of the $\mathsf{cancel()}$ method on $\mathit{view\_timer}_i$ separates $\mathit{Inv}_v$ and $\mathit{Inv}'_v$ in $h_i|_{\mathit{view}}$.
\end{compactitem}
The lemma holds since  $\mathit{Inv}'_v$ does not immediately follow $\mathit{Inv}_v$ in $h_i|_{\mathit{view}}$ in any of the two cases.
\end{proof}

A direct consequence of \Cref{lemma:view_timer_behavior} is that an expiration event of $\mathit{view\_timer}_i$ immediately follows (in a timer history associated with $\mathit{view\_timer}_i$) the $\mathsf{measure(\cdot)}$ invocation that has produced it.

\begin{lemma} \label{lemma:expiration_after_invocation}
Let $P_i$ be a correct process.
Let $\mathit{Exp}_v$ be any expiration event that belongs to $h_i|_{\mathit{view}}$ and let $\mathit{Inv}_v$ be the invocation of the $\mathsf{measure(\cdot)}$ method (on $\mathit{view\_timer}_i$) that has produced $\mathit{Exp}_v$.
Then, $\mathit{Exp}_v$ immediately follows $\mathit{Inv}_v$ in $h_i|_{\mathit{view}}$.
\end{lemma}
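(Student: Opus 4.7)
The plan is to proceed by contradiction: assume that $\mathit{Exp}_v$ does not immediately follow $\mathit{Inv}_v$ in $h_i|_{\mathit{view}}$, and let $e$ be the event that does immediately follow $\mathit{Inv}_v$ in $h_i|_{\mathit{view}}$ ($e$ is well defined since $\mathit{Exp}_v$ belongs to $h_i|_{\mathit{view}}$ and occurs after $\mathit{Inv}_v$). Since $h_i|_{\mathit{view}}$ contains only invocations of $\mathsf{measure}(\cdot)$ and $\mathsf{cancel}()$ on $\mathit{view\_timer}_i$, and processed expiration events of $\mathit{view\_timer}_i$, the goal will be to rule out every case for $e$ other than $e = \mathit{Exp}_v$.

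First I would invoke \Cref{lemma:view_timer_behavior} directly to dispose of the possibility that $e$ is another invocation of $\mathsf{measure}(\cdot)$ on $\mathit{view\_timer}_i$. Next, if $e$ is an invocation of $\mathsf{cancel}()$ on $\mathit{view\_timer}_i$, then by the specification of $\mathsf{cancel}()$ — which removes \emph{all} pending expiration events of $\mathit{view\_timer}_i$ from the event queue — the pending expiration produced by $\mathit{Inv}_v$ would be wiped out, contradicting the hypothesis that $\mathit{Inv}_v$ actually produces $\mathit{Exp}_v$.

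The main case, and the step I expect to require the most care, is when $e$ is some expiration event $\mathit{Exp}' \neq \mathit{Exp}_v$. Then $\mathit{Exp}'$ must have been produced by some invocation $\mathit{Inv}' \neq \mathit{Inv}_v$ of $\mathsf{measure}(\cdot)$, and a producing invocation must precede its expiration in the history, so $\mathit{Inv}'$ strictly precedes $\mathit{Inv}_v$ in $h_i|_{\mathit{view}}$. Applying \Cref{lemma:view_timer_behavior} to $\mathit{Inv}'$, there is at least one non-$\mathsf{measure}$ event of $\mathit{view\_timer}_i$ strictly between $\mathit{Inv}'$ and $\mathit{Inv}_v$; if any such intervening event is a $\mathsf{cancel}()$ then $\mathit{Inv}'$'s pending expiration is wiped out, contradicting the occurrence of $\mathit{Exp}'$, and if the intervening event is itself an expiration, the same argument can be iterated. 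The cleanest way to close this recursion is to establish, by induction on $h_i|_{\mathit{view}}$ using \Cref{lemma:view_timer_behavior} and the $\mathsf{cancel}()$ semantics, the auxiliary invariant that $\mathit{view\_timer}_i$ has at most one pending expiration event at any point in time. Given that invariant, the only expiration pending immediately after $\mathit{Inv}_v$ is the one produced by $\mathit{Inv}_v$ itself, so the only expiration that can occur before the next $\mathsf{measure}$ or $\mathsf{cancel}$ is $\mathit{Exp}_v$. Combining the three cases yields $e = \mathit{Exp}_v$, which is exactly the claim of the lemma.
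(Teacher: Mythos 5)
Your plan is correct, and it reaches the same conclusion from essentially the same ingredients as the paper, but packaged differently. The paper proves the lemma by inducting directly on the sequence of $\mathsf{measure}(\cdot)$ invocations in $h_i|_{\mathit{view}}$ that produce expiration events: for the $j$-th such invocation it rules out an immediately following $\mathsf{cancel}()$ (it would erase the pending expiration, contradicting production), an immediately following $\mathsf{measure}(\cdot)$ (\Cref{lemma:view_timer_behavior}), and an immediately following expiration produced by a different invocation (the induction hypothesis pins every earlier-produced expiration to the slot right after its own producing invocation). You instead argue by contradiction and close the troublesome third case with an auxiliary invariant — $\mathit{view\_timer}_i$ has at most one pending expiration at any point — proved by its own induction over $h_i|_{\mathit{view}}$. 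That invariant does go through: by \Cref{lemma:view_timer_behavior} the event immediately preceding any $\mathsf{measure}(\cdot)$ invocation in $h_i|_{\mathit{view}}$ is either absent, a $\mathsf{cancel}()$, or an expiration, and in each case the pending count is zero when the new $\mathsf{measure}(\cdot)$ is issued, so each invocation starts from a clean slate and the stray-expiration case collapses exactly as you claim. Your first attempt at the third case (iterating the argument backwards through intervening expirations) is the shakier path, but since you yourself replace it with the invariant, the plan is sound; the invariant-based formulation arguably isolates the reusable fact (one pending expiration at a time) more cleanly, at the cost of one extra induction compared with the paper's direct induction on the lemma statement.
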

\begin{proof}
We prove the lemma by induction.

\smallskip
\noindent \textbf{Base step:} \emph{Let $\mathit{Inv}_v^1$ be the first invocation of the $\mathsf{measure(\cdot)}$ method in $h_i|_{\mathit{view}}$ that produces an expiration event, and let $\mathit{Exp}_v^1$ be the expiration event produced by $\mathit{Inv}_v^1$. Expiration event $\mathit{Exp}_v^1$ immediately follows $\mathit{Inv}_v^1$ in $h_i|_{\mathit{view}}$.}
\smallskip
\\ Since $\mathit{Inv}_v^1$ produces the expiration event $\mathit{Exp}_v^1$, an invocation of the $\mathsf{cancel()}$ method does not immediately follow $\mathit{Inv}_v^1$ in $h_i|_{\mathit{view}}$.
Moreover, no invocation of the $\mathsf{measure(\cdot)}$ method immediately follows $\mathit{Inv}_v^1$ in $h_i|_{\mathit{view}}$ (by \Cref{lemma:view_timer_behavior}).
Finally, no expiration event produced by a different invocation of the $\mathsf{measure(\cdot)}$ method immediately follows $\mathit{Inv}_v^1$ in $h_i|_{\mathit{view}}$ since $\mathit{Inv}_v^1$ is the first invocation of the method in $h_i|_{\mathit{view}}$ that produces an expiration event.
Therefore, the statement of the lemma holds for $\mathit{Inv}_v^1$ and $\mathit{Exp}_v^1$.

\smallskip
\noindent \textbf{Induction step:} \emph{Let $\mathit{Inv}_v^j$ be the $j$-th invocation of the $\mathsf{measure(\cdot)}$ method in $h_i|_{\mathit{view}}$ that produces an expiration event, where $j > 1$, and let $\mathit{Exp}_v^j$ be the expiration event produced by $\mathit{Inv}_v^j$. Expiration event $\mathit{Exp}_v^j$ immediately follows $\mathit{Inv}_v^j$ in $h_i|_{\mathit{view}}$.
\\Induction hypothesis: For every $k \in [1, j - 1]$, the $k$-th invocation of the $\mathsf{measure}(\cdot)$ method in $h_i|_{\mathit{view}}$ that produces an expiration event is immediately followed by the produced expiration event in $h_i|_{\mathit{view}}$.}
\smallskip
\\ An invocation of the $\mathsf{cancel()}$ method does not immediately follow $\mathit{Inv}_v^j$ in $h_i|_{\mathit{view}}$ since $\mathit{Inv}_v^j$ produces $\mathit{Exp}_v^j$.
Moreover, no invocation of the $\mathsf{measure(\cdot)}$ method immediately follows $\mathit{Inv}_v^j$ in $h_i|_{\mathit{view}}$ (by \Cref{lemma:view_timer_behavior}).
Lastly, no expiration event produced by a different invocation of the $\mathsf{measure(\cdot)}$ method immediately follows $\mathit{Inv}_v^j$ in $h_i|_{\mathit{view}}$ by the induction hypothesis.
Therefore, the statement of the lemma holds for $\mathit{Inv}_v^j$ and $\mathit{Exp}_v^j$, which concludes the proof.
\end{proof}

We now prove the statement of \Cref{lemma:expiration_after_invocation_dissemination} for $\mathit{view\_timer}_i$.

\begin{lemma} \label{lemma:expiration_after_invocation_view}
Let $P_i$ be a correct process.
Let $\mathit{Exp}_v$ be any expiration event of $\mathit{view\_timer}_i$ that belongs to $h_i$ and let $\mathit{Inv}_v$ be the invocation of the $\mathsf{measure}(\cdot)$ method (on $\mathit{view\_timer}_i$) that has produced $\mathit{Exp}_v$.
Then, $\mathit{Exp}_v$ immediately follows $\mathit{Inv}_v$ in $h_i$.
\end{lemma}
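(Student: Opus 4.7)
The plan is to strengthen the previous lemma, which already locates $\mathit{Exp}_v$ immediately after $\mathit{Inv}_v$ within the view-only subsequence $h_i|_{\mathit{view}}$, by ruling out any \emph{dissemination}-timer event sneaking in between them in the full history $h_i$. Since $h_i$ differs from $h_i|_{\mathit{view}}$ only by events associated with $\mathit{dissemination\_timer}_i$, and $\mathit{Exp}_v$ immediately follows $\mathit{Inv}_v$ in $h_i|_{\mathit{view}}$ by \Cref{lemma:expiration_after_invocation}, it suffices to show that no event pertaining to $\mathit{dissemination\_timer}_i$ can occur between $\mathit{Inv}_v$ and $\mathit{Exp}_v$ in $h_i$.

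Toward a contradiction I would assume such a dissemination-timer event $E_d$ exists between $\mathit{Inv}_v$ and $\mathit{Exp}_v$ and do a three-way case split. If $E_d$ is an invocation of $\mathsf{measure}(\cdot)$ on $\mathit{dissemination\_timer}_i$, then $E_d$ was executed either at line~\ref{line:sync_timer_measure} or at line~\ref{line:sync_timer_measure_2}; inspecting the pseudocode, in both cases $E_d$ is immediately preceded in $h_i$ by an invocation of $\mathsf{cancel()}$ on $\mathit{view\_timer}_i$ (line~\ref{line:cancel_view_timer_1} or line~\ref{line:cancel_view_timer_2}), which would therefore also lie strictly between $\mathit{Inv}_v$ and $\mathit{Exp}_v$; but any $\mathsf{cancel}()$ on $\mathit{view\_timer}_i$ removes all pending expiration events of $\mathit{view\_timer}_i$, contradicting the fact that $\mathit{Inv}_v$ produced $\mathit{Exp}_v$. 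If $E_d$ is an invocation of $\mathsf{cancel}()$ on $\mathit{dissemination\_timer}_i$, the same obstruction applies directly, since $E_d$ is invoked either at line~\ref{line:sync_timer_cancel_1} or at line~\ref{line:sync_timer_cancel_2}, each immediately preceded in $h_i$ by a $\mathsf{cancel}()$ on $\mathit{view\_timer}_i$ (line~\ref{line:cancel_view_timer_1} or line~\ref{line:cancel_view_timer_2}), again cancelling $\mathit{Exp}_v$.

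Finally, if $E_d$ is an expiration event of $\mathit{dissemination\_timer}_i$, then by \Cref{lemma:expiration_after_invocation_dissemination} $E_d$ is immediately preceded in $h_i$ by the invocation of $\mathsf{measure}(\cdot)$ on $\mathit{dissemination\_timer}_i$ that produced it; that invocation must then also lie strictly between $\mathit{Inv}_v$ and $\mathit{Exp}_v$, reducing this case to the first one and yielding the same contradiction. Thus no dissemination-timer event can appear between $\mathit{Inv}_v$ and $\mathit{Exp}_v$ in $h_i$, and combined with \Cref{lemma:expiration_after_invocation} this forces $\mathit{Exp}_v$ to immediately follow $\mathit{Inv}_v$ in the full timer history $h_i$.

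The only real subtlety, and the step I would double-check most carefully, is the pseudocode tracing that justifies the phrase \emph{immediately preceded} in each case: namely, that every $\mathsf{measure}(\cdot)$ or $\mathsf{cancel}()$ on $\mathit{dissemination\_timer}_i$ executed by a correct process is, within the same atomic block triggered at lines~\ref{line:receive_epoch_over} or~\ref{line:receive_epoch_over_complete}, tightly bound to a preceding $\mathsf{cancel}()$ on $\mathit{view\_timer}_i$ with no other timer events interleaved. Once that purely syntactic observation on \Cref{algorithm:synchronizer} is confirmed, the argument above is just bookkeeping.
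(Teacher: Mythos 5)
Your proof is correct and follows essentially the same route as the paper's: both arguments reduce to the observations that every $\mathsf{measure}(\cdot)$/$\mathsf{cancel}()$ on $\mathit{dissemination\_timer}_i$ sits in an atomic block behind a $\mathsf{cancel}()$ on $\mathit{view\_timer}_i$, and both invoke \Cref{lemma:expiration_after_invocation_dissemination} and \Cref{lemma:expiration_after_invocation} to finish (the paper merely organizes the case split around the event \emph{immediately following} $\mathit{Inv}_v$ rather than around dissemination-timer events in the gap). The only nit is your phrase that a $\mathsf{measure}(\cdot)$ on $\mathit{dissemination\_timer}_i$ is \emph{immediately} preceded by a $\mathsf{cancel}()$ on $\mathit{view\_timer}_i$ --- it is immediately preceded by the $\mathsf{cancel}()$ on $\mathit{dissemination\_timer}_i$, with the view-timer cancel one step earlier --- but this does not affect your argument, which only needs that cancel to lie strictly between $\mathit{Inv}_v$ and $\mathit{Exp}_v$.
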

\begin{proof}
Let us consider all possible scenarios (as in the proof of \Cref{lemma:expiration_after_invocation_dissemination}):
\begin{compactitem}
    \item Let an invocation $\mathit{Inv}_d$ of the $\mathsf{measure}(\cdot)$ method on $\mathit{dissemination\_timer}_i$ immediately follow $\mathit{Inv}_v$ in $h_i$:
    $\mathit{Inv}_d$ could have been invoked either at line~\ref{line:sync_timer_measure} or at line~\ref{line:sync_timer_measure_2}.
    However, an invocation of the $\mathsf{cancel()}$ method on $\mathit{dissemination\_timer}_i$ (line~\ref{line:sync_timer_cancel_1} or line~\ref{line:sync_timer_cancel_2}) must immediately precede $\mathit{Inv}_d$ in $h_i$, which contradicts the fact that $\mathit{Inv}_v$ immediately precedes $\mathit{Inv}_d$.
    Therefore, this scenario is impossible.
    
    \item Let an invocation $\mathit{Inv}_d$ of the $\mathsf{cancel}()$ method on $\mathit{dissemination\_timer}_i$ immediately follow $\mathit{Inv}_v$ in $h_i$:
    $\mathit{Inv}_d$ could have been invoked either at line~\ref{line:sync_timer_cancel_1} or at line~\ref{line:sync_timer_cancel_2}.
    However, an invocation of the $\mathsf{cancel()}$ method on $\mathit{view\_timer}_i$ (line~\ref{line:cancel_view_timer_1} or line~\ref{line:cancel_view_timer_2}) must immediately precede $\mathit{Inv}_d$ in $h_i$, which contradicts the fact that $\mathit{Inv}_v$ immediately precedes $\mathit{Inv}_d$.
    Hence, this scenario is impossible, as well.
    
    \item Let an expiration event $\mathit{Exp}_d$ of $\mathit{dissemination\_timer}_i$ immediately follow $\mathit{Inv}_v$ in $h_i$:
    This is impossible due to \Cref{lemma:expiration_after_invocation_dissemination}.
    
    \item Let the event immediately following $\mathit{Inv}_v$ be (1) an invocation of the $\mathsf{measure(\cdot)}$ method on $\mathit{view\_timer}_i$, or (2) an invocation of the $\mathsf{cancel()}$ method on $\mathit{view\_timer}_i$, or (3) an expiration event $\mathit{Exp}'_v$ of $\mathit{view\_timer}_i$, where $\mathit{Exp}'_v \neq \mathit{Exp}_v$:
    This case is impossible due to \Cref{lemma:expiration_after_invocation}.
\end{compactitem}
As any other option is impossible, $\mathit{Exp}_v$ must immediately follow $\mathit{Inv}_v$ in $h_i$.
\end{proof}

Next, we show that the values of the $\mathit{epoch}_i$ and $\mathit{view}_i$ variables of a correct process $P_i$ do not change between an invocation of the $\mathsf{measure(\cdot)}$ method on $\mathit{view\_timer}_i$ and the processing of the expiration event the invocation produces.

\begin{lemma} \label{lemma:epoch_and_view_no_change}
Let $P_i$ be a correct process.
Let $\mathit{Inv}_v$ denote an invocation of the $\mathsf{measure(\cdot)}$ method on $\mathit{view\_timer}_i$ which produces an expiration event, and let $\mathit{Exp}_v$ denote the expiration event produced by $\mathit{Inv}_v$.
Let $\mathit{epoch}_i = e$ and $\mathit{view}_i = v$ when $P_i$ invokes $\mathit{Inv}_v$.
Then, when $P_i$ processes $\mathit{Exp}_v$ (line~\ref{line:rule_view_expired}), $\mathit{epoch}_i = e$ and $\mathit{view}_i = v$.
\end{lemma}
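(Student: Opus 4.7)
The plan is to show that any modification of $\mathit{epoch}_i$ or $\mathit{view}_i$ would necessarily insert some event into the timer history $h_i$ strictly between $\mathit{Inv}_v$ and $\mathit{Exp}_v$, contradicting \Cref{lemma:expiration_after_invocation_view}. First, I would apply \Cref{lemma:expiration_after_invocation_view} to conclude that $\mathit{Exp}_v$ immediately follows $\mathit{Inv}_v$ in $h_i$. Consequently, every event of $\beta_i$ occurring strictly between $\mathit{Inv}_v$ and $\mathit{Exp}_v$ is a non-timer event (a message-sending, a message-reception, or a local computation that does not touch either timer).

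Next, I would enumerate exhaustively the lines of \Cref{algorithm:synchronizer} at which $\mathit{view}_i$ or $\mathit{epoch}_i$ may be updated after initialization. The variable $\mathit{view}_i$ is modified only at line~\ref{line:increment_view} (inside the $\mathit{view\_timer}_i$ expiration handler at line~\ref{line:rule_view_expired}) and at line~\ref{line:reset_view} (inside the $\mathit{dissemination\_timer}_i$ expiration handler at line~\ref{line:sync_timer_expires}). The variable $\mathit{epoch}_i$ is modified only at line~\ref{line:update_epoch_1} (inside the handler at line~\ref{line:receive_epoch_over}, immediately followed by the $\mathit{view\_timer}_i.\mathsf{cancel}()$ invocation at line~\ref{line:cancel_view_timer_1}) and at line~\ref{line:update_epoch_2} (inside the handler at line~\ref{line:receive_epoch_over_complete}, immediately followed by the $\mathit{view\_timer}_i.\mathsf{cancel}()$ invocation at line~\ref{line:cancel_view_timer_2}).

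Now, suppose for contradiction that at least one such update takes place between $\mathit{Inv}_v$ and $\mathit{Exp}_v$ in $\beta_i$. Since local steps are atomic and the assumption that no process takes infinitely many steps in finite time ensures that each handler body executes as an uninterrupted sequence, the whole handler containing that update runs strictly between $\mathit{Inv}_v$ and $\mathit{Exp}_v$. Hence, the update to $\mathit{view}_i$ is accompanied by a processed expiration event of $\mathit{view\_timer}_i$ (at line~\ref{line:rule_view_expired}) or of $\mathit{dissemination\_timer}_i$ (at line~\ref{line:sync_timer_expires}), and any update to $\mathit{epoch}_i$ is accompanied by an invocation of $\mathit{view\_timer}_i.\mathsf{cancel}()$ (at line~\ref{line:cancel_view_timer_1} or line~\ref{line:cancel_view_timer_2}). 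In every case, a timer-related event would appear in $h_i$ strictly between $\mathit{Inv}_v$ and $\mathit{Exp}_v$, contradicting \Cref{lemma:expiration_after_invocation_view}. Therefore, the values of $\mathit{epoch}_i$ and $\mathit{view}_i$ at the moment $P_i$ processes $\mathit{Exp}_v$ are still $e$ and $v$ respectively.

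The main obstacle is the bookkeeping: one must be sure the enumeration of writes to $\mathit{epoch}_i$ and $\mathit{view}_i$ is complete (excluding the one-time initialization at line~\ref{line:init_view}, which cannot occur after $\mathit{Inv}_v$), and that each writing handler provably touches $\mathit{view\_timer}_i$ or is itself triggered by a processed $\mathit{view\_timer}_i$/$\mathit{dissemination\_timer}_i$ expiration, so that a $h_i$ event is always forced to lie between $\mathit{Inv}_v$ and $\mathit{Exp}_v$.
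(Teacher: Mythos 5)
Your proof is correct and follows essentially the same route as the paper's: an exhaustive enumeration of the lines that write $\mathit{epoch}_i$ or $\mathit{view}_i$ (lines~\ref{line:increment_view}, \ref{line:reset_view}, \ref{line:update_epoch_1}, \ref{line:update_epoch_2}), each of which forces a timer event of $h_i$ to lie strictly between $\mathit{Inv}_v$ and $\mathit{Exp}_v$, contradicting \Cref{lemma:expiration_after_invocation_view}. The appeal to ``no infinitely many steps in finite time'' is unnecessary, but the argument stands as is.
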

\begin{proof}
By contradiction, suppose that $\mathit{epoch}_i \neq e$ or $\mathit{view}_i \neq v$ when $P_i$ processes $\mathit{Exp}_v$.
Hence, the value of the variables of $P_i$ must have changed between invoking $\mathit{Inv}_v$ and processing $\mathit{Exp}_v$.
Let us investigate all possible lines of \Cref{algorithm:synchronizer} where $P_i$ could have modified its variables for the first time after invoking $\mathit{Inv}_v$ (the first modification occurs before processing $\mathit{Exp}_v$):
\begin{compactitem}
    \item the $\mathit{view}_i$ variable at line~\ref{line:increment_view}: 
    If $P_i$ has modified its $\mathit{view}_i$ variable here, there exists an expiration event of $\mathit{view\_timer}_i$ (line~\ref{line:rule_view_expired}) which follows $\mathit{Inv}_v$ in $h_i$.
    By \Cref{lemma:expiration_after_invocation_view}, this expiration event cannot occur before processing $\mathit{Exp}_v$, which implies that this case is impossible.
    
    \item the $\mathit{epoch}_i$ variable at line~\ref{line:update_epoch_1}:
    If $P_i$ updates its $\mathit{epoch}_i$ variable here, an invocation of the $\mathsf{cancel()}$ method on $\mathit{view\_timer}_i$ (line~\ref{line:cancel_view_timer_1}) separates $\mathit{Inv}_v$ and $\mathit{Exp}_v$ in $h_i$.
    However, this is impossible due to \Cref{lemma:expiration_after_invocation_view}, which renders this case impossible.
    
    \item the $\mathit{epoch}_i$ variable at line~\ref{line:update_epoch_2}:
    If $P_i$ updates its $\mathit{epoch}_i$ variable here, an invocation of the $\mathsf{cancel()}$ method on $\mathit{view\_timer}_i$ (line~\ref{line:cancel_view_timer_2}) separates $\mathit{Inv}_v$ and $\mathit{Exp}_v$ in $h_i$.
    However, this is impossible due to \Cref{lemma:expiration_after_invocation_view}, which implies that this case is impossible.
    
    \item the $\mathit{view}_i$ variable at line~\ref{line:reset_view}:
    If $P_i$ updates its $\mathit{view}_i$ variable here, an expiration event of $\mathit{dissemination\_timer}_i$ (line~\ref{line:sync_timer_expires}) separates $\mathit{Inv}_v$ and $\mathit{Exp}_v$ in $h_i$, which contradicts \Cref{lemma:expiration_after_invocation_view}. 
\end{compactitem}
Given that $P_i$ does not change the value of neither $\mathit{epoch}_i$ nor $\mathit{view}_i$ between invoking $\mathit{Inv}_v$ and processing $\mathit{Exp}_v$, the lemma holds.
\end{proof}

Finally, we show that correct processes cannot ``jump'' into an epoch, i.e., they must go into an epoch by going into its first view.

\begin{lemma} \label{lemma:no_jump}
Let $P_i$ be a correct process.
Let $\mathsf{advance}(v) \in \beta_i$, where $v$ is the $j$-th view of an epoch $e$ and $j > 1$.
Then, $\mathsf{advance}(v - 1) \stackrel{\beta_i}{\prec} \mathsf{advance}(v)$.
\end{lemma}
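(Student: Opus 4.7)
The plan is to trace backwards from the $\mathsf{advance}(v)$ event to identify the $\mathsf{measure}(\cdot)$ invocation of $\mathit{view\_timer}_i$ whose expiration immediately preceded it, and then observe that this invocation was itself immediately followed (in the code) by an $\mathsf{advance}(v-1)$ event. First, I would rule out two of the three lines at which $\mathsf{advance}(\cdot)$ can be triggered. Line~\ref{line:start_view_1} always triggers $\mathsf{advance}(1)$, and since $v = (e-1)(f+1) + j$ with $j > 1$ we have $v > 1$, so this line is excluded. Line~\ref{line:start_view_2} is reached only after executing line~\ref{line:reset_view}, so at that point $\mathit{view}_i = 1$; thus $\mathit{view\_to\_advance}$ is the first view of some epoch and cannot equal $v$ (since $j > 1$). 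Hence the $\mathsf{advance}(v)$ event must have been triggered at line~\ref{line:start_view_without_msg_exchange}.

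Next I would use the pseudocode of the \textbf{upon} block at lines~\ref{line:rule_view_expired}--\ref{line:start_view_without_msg_exchange} to reconstruct the state of $P_i$ just before $\mathsf{advance}(v)$: an expiration event $\mathit{Exp}_v$ of $\mathit{view\_timer}_i$ was processed, the check $\mathit{view}_i < f+1$ at line~\ref{line:check_last_view} succeeded, $\mathit{view}_i$ was incremented at line~\ref{line:increment_view}, and $(\mathit{epoch}_i - 1)(f+1) + \mathit{view}_i = v$ was computed. Consequently, at the moment $\mathit{Exp}_v$ was processed, we had $\mathit{epoch}_i = e$ and $\mathit{view}_i = j-1$. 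Let $\mathit{Inv}_v$ be the invocation of $\mathsf{measure}(\cdot)$ on $\mathit{view\_timer}_i$ that produced $\mathit{Exp}_v$. By \Cref{lemma:epoch_and_view_no_change}, the same values $\mathit{epoch}_i = e$ and $\mathit{view}_i = j-1$ held at the time of $\mathit{Inv}_v$.

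I would then case-split on which of the three lines $\mathit{Inv}_v$ corresponds to:
\begin{compactitem}
    \item Line~\ref{line:view_timer_measure_first_view}: immediately followed by $\mathsf{advance}(1)$ at line~\ref{line:start_view_1}. At that instant $\mathit{epoch}_i = 1, \mathit{view}_i = 1$, so $j - 1 = 1$ and $e = 1$, giving $v - 1 = 1$.
    \item Line~\ref{line:view_timer_measure_without_msg_exchange}: immediately followed by $\mathsf{advance}((\mathit{epoch}_i - 1)(f+1) + \mathit{view}_i)$ at line~\ref{line:start_view_without_msg_exchange}, which equals $(e-1)(f+1) + (j-1) = v - 1$.
    \item Line~\ref{line:view_timer_measure}: immediately followed by $\mathsf{advance}((\mathit{epoch}_i - 1)(f+1) + \mathit{view}_i)$ at line~\ref{line:start_view_2}, which again equals $(e-1)(f+1) + (j-1) = v - 1$ (here $\mathit{view}_i = 1$ due to line~\ref{line:reset_view}, forcing $j = 2$).
\end{compactitem}
In every case, the event $\mathsf{advance}(v-1)$ occurs in $\beta_i$ and is located strictly before $\mathit{Exp}_v$, which in turn is strictly before $\mathsf{advance}(v)$. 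Therefore $\mathsf{advance}(v-1) \stackrel{\beta_i}{\prec} \mathsf{advance}(v)$.

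The only technically delicate step is ensuring that the variables $\mathit{epoch}_i$ and $\mathit{view}_i$ seen at $\mathit{Inv}_v$ really match those seen at $\mathit{Exp}_v$, so that the syntactic value advanced to after $\mathit{Inv}_v$ evaluates to exactly $v-1$. This is precisely what \Cref{lemma:epoch_and_view_no_change} delivers, so the case analysis above is rigorous and no new invariant is needed.
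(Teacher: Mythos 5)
Your proof is correct and follows essentially the same route as the paper's: rule out lines~\ref{line:start_view_1} and~\ref{line:start_view_2} so that $\mathsf{advance}(v)$ must come from line~\ref{line:start_view_without_msg_exchange}, use \Cref{lemma:epoch_and_view_no_change} to carry the state $\mathit{epoch}_i = e$, $\mathit{view}_i = j-1$ back to the producing $\mathsf{measure}(\cdot)$ invocation, and observe that the $\mathsf{advance}(\cdot)$ triggered immediately after that invocation is for $v-1$. The only difference is that you spell out the three-way case analysis in slightly more detail than the paper does, which is fine.
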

\begin{proof}
Since $P_i$ enters view $v$, which is not the first view of epoch $e$, $P_i$ triggers $\mathsf{advance}(v)$ at line~\ref{line:start_view_without_msg_exchange}: $P_i$ could not have triggered $\mathsf{advance}(v)$ neither at line~\ref{line:start_view_1} nor at line~\ref{line:start_view_2} since $v$ is not the first view of epoch $e$.
Due to line~\ref{line:rule_view_expired}, the $\mathsf{measure}(\cdot)$ method was invoked on $\mathit{view\_timer}_i$ before $\mathsf{advance}(v)$ is triggered; we denote by $\mathit{Inv}_v$ this specific invocation of the $\mathsf{measure(\cdot)}$ method on $\mathit{view\_timer}_i$ and by $\mathit{Exp}_v$ its expiration event (processed by $P_i$ just before triggering $\mathsf{advance}(v)$).


When $P_i$ triggers $\mathsf{advance}(v)$ (at line~\ref{line:start_view_without_msg_exchange}), we have that $\mathit{epoch}_i = e$ and $\mathit{view}_i = j$.
Moreover, when processing $\mathit{Exp}_v$, we have that $\mathit{epoch}_i = e$ and $\mathit{view}_i = j - 1$ (due to line~\ref{line:increment_view}).
By \Cref{lemma:epoch_and_view_no_change}, when $P_i$ has invoked $\mathit{Inv}_v$, we had the same state: $\mathit{epoch}_i = e$ and $\mathit{view}_i = j - 1$.
Process $P_i$ could have invoked $\mathit{Inv}_v$ either (1) at line~\ref{line:view_timer_measure_first_view}, or (2) at line~\ref{line:view_timer_measure_without_msg_exchange}, or (3) at line~\ref{line:view_timer_measure}. 
Since $P_i$ triggers $\mathsf{advance(\cdot)}$ immediately after (line~\ref{line:start_view_1}, line~\ref{line:start_view_without_msg_exchange}, or line~\ref{line:start_view_2}), that $\mathsf{advance}(\cdot)$ indication is for $v - 1$ (as $\mathit{epoch}_i = e$ and $\mathit{view}_i = j - 1$ at that time).
Hence, $\mathsf{advance}(v - 1) \stackrel{\beta_i}{\prec} \mathsf{advance}(v)$.
\end{proof}

We say that a correct process \emph{enters} an epoch $e$ at time $t$ if and only if the process enters the first view of $e$ (i.e., the view $(e - 1) \cdot (f + 1) + 1$) at time $t$.
Furthermore, a correct process \emph{is in epoch $e$} between the time $t$ (including $t$) at which it enters $e$ and the time $t'$ (excluding $t'$) at which it enters (for the first time after entering $e$) another epoch $e'$.
If another epoch is never entered, the process is in epoch $e$ from time $t$ onward.
Recall that, by \Cref{lemma:increasing_views}, a correct process enters each view at most once, which means that a correct process enters each epoch at most once.

The following lemma shows that, if a correct process broadcasts an \textsc{epoch-completed} message for an epoch (line~\ref{line:broadcast_epoch_over}), then the process has previously entered that epoch.

\begin{lemma} \label{lemma:epoch_over_previously_started}
Let a correct process $P_i$ send an \textsc{epoch-completed} message for an epoch $e$ (line~\ref{line:broadcast_epoch_over}); let this sending event be denoted by $e_{\mathit{send}}$.
Then, $\mathsf{advance}(v) \stackrel{\beta_i}{\prec} e_{\mathit{send}}$, where $v$ is the first view of the epoch $e$.
\end{lemma}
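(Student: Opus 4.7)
The plan is to trace $P_i$'s history backward from $e_{\mathit{send}}$ to its entry into the first view of epoch $e$, using Lemma~\ref{lemma:no_jump} as the induction step.

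First, I would pinpoint the $\mathsf{measure}(\cdot)$ invocation on $\mathit{view\_timer}_i$ that produced the expiration triggering $e_{\mathit{send}}$. The \textsc{epoch-completed} broadcast at line~\ref{line:broadcast_epoch_over} is reached only from the handler at line~\ref{line:rule_view_expired}, along the \textbf{else} branch of the test at line~\ref{line:check_last_view}; by Lemma~\ref{lemma:view_f+1}, this forces $\mathit{view}_i = f+1$ at that moment, while $\mathit{epoch}_i = e$ by hypothesis on the content of the broadcast. Let $\mathit{Exp}_v$ denote that expiration event and $\mathit{Inv}_v$ the invocation that produced it. Lemma~\ref{lemma:epoch_and_view_no_change} then yields $\mathit{epoch}_i = e$ and $\mathit{view}_i = f+1$ already at $\mathit{Inv}_v$. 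Of the three sites invoking $\mathsf{measure}(\cdot)$ on $\mathit{view\_timer}_i$, lines~\ref{line:view_timer_measure_first_view} and~\ref{line:view_timer_measure} execute with $\mathit{view}_i = 1$ (the latter because of the assignment at line~\ref{line:reset_view} that immediately precedes it), so for $f \geq 1$ only line~\ref{line:view_timer_measure_without_msg_exchange} is consistent. That line is immediately followed by $\mathsf{advance}(\mathit{view\_to\_advance})$ with $\mathit{view\_to\_advance} = (e-1)(f+1) + (f+1) = e(f+1)$, which is the last view of $e$. Hence $\mathsf{advance}(e(f+1)) \stackrel{\beta_i}{\prec} e_{\mathit{send}}$.

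Second, I would apply Lemma~\ref{lemma:no_jump} by downward induction on the view index within epoch $e$: from the established $\mathsf{advance}(e(f+1))$, the lemma successively yields $\mathsf{advance}(w-1) \stackrel{\beta_i}{\prec} \mathsf{advance}(w)$ for every view $w$ of $e$ whose in-epoch index is $>1$. Iterating until $w-1 = (e-1)(f+1)+1 = v$ and composing the $\stackrel{\beta_i}{\prec}$ relation delivers the desired $\mathsf{advance}(v) \stackrel{\beta_i}{\prec} e_{\mathit{send}}$.

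The only delicate point, and what I would flag as the main obstacle, is the corner case $f = 0$, in which $f+1 = 1$ collapses the first and last views of $e$ and the branching in the first paragraph must be handled separately. Here $\mathit{Inv}_v$ may sit at line~\ref{line:view_timer_measure_first_view} (when $e=1$) or at line~\ref{line:view_timer_measure} (when $e>1$), each of which is directly followed by an $\mathsf{advance}$ already targeting $v = (e-1)\cdot 1 + 1$, so the induction becomes vacuous and the conclusion is immediate. Apart from this edge case, the proof is essentially bookkeeping on view indices plus a single invocation of Lemma~\ref{lemma:no_jump} per view step.
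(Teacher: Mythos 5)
Your proof is correct and follows essentially the same route as the paper: use Lemma~\ref{lemma:view_f+1} and the check at line~\ref{line:check_last_view} to fix $\mathit{epoch}_i = e$, $\mathit{view}_i = f+1$ at the expiration, pull this back to the producing $\mathsf{measure}(\cdot)$ invocation via Lemma~\ref{lemma:epoch_and_view_no_change}, rule out lines~\ref{line:view_timer_measure_first_view} and~\ref{line:view_timer_measure} since $\mathit{view}_i \neq 1$, conclude the process entered the last view of $e$ before $e_{\mathit{send}}$, and then descend to the first view via Lemma~\ref{lemma:no_jump} (the paper leaves the iteration implicit where you spell it out). The $f=0$ corner case you flag is moot, since the model assumes $f > 0$.
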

\begin{proof}
At the moment of sending the message (line~\ref{line:broadcast_epoch_over}), the following holds: (1) $\mathit{epoch}_i = e$, and (2) $\mathit{view}_i = f + 1$ (by the check at line~\ref{line:check_last_view} and \Cref{lemma:view_f+1}).
We denote by $\mathit{Inv}_v$ the invocation of the $\mathsf{measure(\cdot)}$ method on $\mathit{view\_timer}_i$ producing the expiration event $\mathit{Exp}_v$ leading to $P_i$ broadcasting the \textsc{epoch-completed} message for $e$.
Note that $\mathit{Inv}_v$ precedes the sending of the \textsc{epoch-completed} message in $\beta_i$.

When processing $\mathit{Exp}_v$ (line~\ref{line:rule_view_expired}), the following was the state of $P_i$: $\mathit{epoch}_i = e$ and $\mathit{view}_i = f + 1$.
By \Cref{lemma:epoch_and_view_no_change}, when $P_i$ invokes $\mathit{Inv}_v$, $\mathit{epoch}_i = e$ and $\mathit{view}_i = f + 1 > 1$.
Therefore, $\mathit{Inv}_v$ must have been invoked at line~\ref{line:view_timer_measure_without_msg_exchange}: $\mathit{Inv}_v$ could not have invoked neither at line~\ref{line:view_timer_measure_first_view} nor at line~\ref{line:view_timer_measure} since $\mathit{view}_i = f + 1 \neq 1$ at that moment.
Immediately after invoking $\mathit{Inv}_v$, $P_i$ enters the $(f + 1)$-st view of $e$ (line~\ref{line:start_view_without_msg_exchange}), which implies that $P_i$ enters the $(f + 1)$-st view of $e$ before it sends the \textsc{epoch-completed} message.
Therefore, the lemma follows from \Cref{lemma:no_jump}.
\end{proof}

The next lemma shows that, if a correct process $P_i$ updates its $\mathit{epoch}_i$ variable to $e > 1$, then (at least) $f + 1$ correct processes have previously entered epoch $e - 1$.

\begin{lemma} \label{lemma:epoch_update_previous_epoch_entered}
Let a correct process $P_i$ update its $\mathit{epoch}_i$ variable to $e > 1$ at some time $t$.
Then, at least $f + 1$ correct processes have entered $e - 1$ by time $t$.
\end{lemma}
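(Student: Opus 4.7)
The plan is to case-split on which rule of \Cref{algorithm:synchronizer} causes $P_i$ to update $\mathit{epoch}_i$ to $e$ at time $t$. The update can only occur at line~\ref{line:update_epoch_1} (upon receiving $2f+1$ \textsc{epoch-completed} messages for $e-1$) or at line~\ref{line:update_epoch_2} (upon receiving an \textsc{enter-epoch} message for $e$ carrying a threshold signature of $e-1$). In both cases, the goal is to exhibit a set of $2f+1$ distinct processes that vouch for the completion of epoch $e-1$, so that by Byzantine quorum intersection at least $f+1$ of them are correct, and then to invoke \Cref{lemma:epoch_over_previously_started} on each of those correct processes.

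For the first case, the rule at line~\ref{line:receive_epoch_over} directly provides $2f+1$ senders of \textsc{epoch-completed} messages for $e-1$. Since messages are only considered received when delivered, and the update happens at time $t$, all those sending events occurred at times $\leq t$. Among the $2f+1$ senders, at least $f+1$ are correct; each such correct sender executed line~\ref{line:broadcast_epoch_over} for epoch $e-1$ before $t$, so by \Cref{lemma:epoch_over_previously_started} it entered (the first view of) $e-1$ strictly before $t$.

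For the second case, the \textsc{enter-epoch} message received at time $\leq t$ contains a threshold signature $\mathit{sig}$ of $e-1$. I would argue that, by the unforgeability of the $(2f+1,n)$-threshold signature scheme introduced in \Cref{section:model}, a valid $\mathit{sig}$ can only have been produced by combining partial signatures from $2f+1$ distinct processes; in particular, at least $f+1$ of these signers are correct. Inspecting the pseudocode, the only place where a correct process ever invokes $\mathit{ShareSign}_i(\cdot)$ on an epoch value is line~\ref{line:broadcast_epoch_over}, which occurs jointly with the broadcast of an \textsc{epoch-completed} message for that epoch. Hence each of those $f+1$ correct signers broadcast an \textsc{epoch-completed} for $e-1$ at some time preceding its partial signature's incorporation into $\mathit{sig}$, which in turn precedes the receipt of the \textsc{enter-epoch} message at $P_i$ and thus $t$. \Cref{lemma:epoch_over_previously_started} then yields that each of these $f+1$ correct processes entered $e-1$ by time $t$.

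The only delicate step is the temporal chaining in the second case: I need to be explicit that ``sig produced by time $t$'' entails ``the partial signatures composing it were produced by time $t$'', which entails ``the corresponding \textsc{epoch-completed} broadcasts were issued by time $t$'', to which \Cref{lemma:epoch_over_previously_started} applies. This chain, together with the unforgeability assumption on the threshold scheme, is the main (and essentially the only) non-trivial ingredient; the rest is bookkeeping on the pseudocode.
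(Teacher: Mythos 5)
Your proof is correct and follows essentially the same route as the paper's: a case split on whether the update happens via $2f+1$ \textsc{epoch-completed} messages or via an \textsc{enter-epoch} message, quorum/unforgeability arguments to extract $f+1$ correct senders of \textsc{epoch-completed} messages for $e-1$ by time $t$, and an appeal to \Cref{lemma:epoch_over_previously_started}. The extra care you take with the temporal chaining in the second case is implicit in the paper's version but adds nothing structurally new.
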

\begin{proof}
Since $P_i$ updates $\mathit{epoch}_i$ to $e > 1$ at time $t$, it does so at either:
\begin{compactitem}
    \item line~\ref{line:update_epoch_1}: In this case, $P_i$ has received $2f + 1$ \textsc{epoch-completed} messages for epoch $e - 1$ (line~\ref{line:receive_epoch_over}), out of which (at least) $f + 1$ were sent by correct processes.
    
    \item line~\ref{line:update_epoch_2}: In this case, $P_i$ has received a threshold signature of epoch $e - 1$ (line~\ref{line:receive_epoch_over_complete}) built out of $2f + 1$ partial signatures, out of which (at least) $f + 1$ must have come from correct processes.
    Such a partial signature from a correct process can only be obtained by receiving an \textsc{epoch-completed} message for epoch $e - 1$ from that process.
\end{compactitem}
In both cases, $f + 1$ correct processes have sent \textsc{epoch-completed} messages (line~\ref{line:broadcast_epoch_over}) for epoch $e - 1$ by time $t$.
By \Cref{lemma:epoch_over_previously_started}, all these correct processes have entered epoch $e - 1$ by time $t$.
\end{proof}

Note that a correct process $P_i$ \emph{does not} enter an epoch immediately upon updating its $\mathit{epoch}_i$ variable, but only upon triggering the $\mathsf{advance(\cdot)}$ indication for the first view of that epoch (line~\ref{line:start_view_1} or line~\ref{line:start_view_2}).
We now prove that, if an epoch $e > 1$ is entered by a correct process at some time $t$, then epoch $e - 1$ is entered by a (potentially different) correct process by time $t$. 

\begin{lemma} \label{lemma:epoch_previously_started}
Let a correct process $P_i$ enter an epoch $e > 1$ at time $t$.
Then, epoch $e - 1$ was entered by a correct process by time $t$. 
\end{lemma}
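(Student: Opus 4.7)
My plan is to reduce this statement to the previously proved \Cref{lemma:epoch_update_previous_epoch_entered}, which already handles the scenario in which a correct process \emph{updates} its $\mathit{epoch}_i$ variable (as opposed to \emph{entering} an epoch). The gap I need to bridge is therefore the one between the event ``$P_i$ triggers $\mathsf{advance}(v)$ for the first view $v$ of epoch $e$'' and the event ``some line of the pseudocode assigns the value $e$ to $\mathit{epoch}_i$''.

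First, I would argue that, because $e > 1$, the entry of $P_i$ into $e$ cannot have occurred at line~\ref{line:start_view_1}, which is reserved for epoch $1$. Hence the $\mathsf{advance}(v)$ indication triggered at time $t$ must have been fired at line~\ref{line:start_view_2}, inside the handler of the $\mathit{dissemination\_timer}_i$ expiration (line~\ref{line:sync_timer_expires}). In that handler, the argument of the $\mathsf{advance}(\cdot)$ indication is computed directly from the current value of $\mathit{epoch}_i$, so we conclude that $\mathit{epoch}_i = e$ at time $t$.

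Next, I would observe that $\mathit{epoch}_i$ is initialized to $1$ and is subsequently modified only at line~\ref{line:update_epoch_1} or line~\ref{line:update_epoch_2}, and that both modifications are monotonically non-decreasing (by inspection: at line~\ref{line:update_epoch_1} it is set to $e^\sharp + 1$ for some $e^\sharp \geq \mathit{epoch}_i$, and at line~\ref{line:update_epoch_2} it is set to $e^\sharp > \mathit{epoch}_i$). Consequently, since $\mathit{epoch}_i = e > 1$ at time $t$, there is a (unique) latest time $t^\star \leq t$ at which one of these two lines assigned the value $e$ to $\mathit{epoch}_i$.

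Finally, I would invoke \Cref{lemma:epoch_update_previous_epoch_entered} at that update event: at least $f + 1$ correct processes have entered epoch $e - 1$ by time $t^\star \leq t$. In particular, at least one correct process has entered $e - 1$ by time $t$, which is what we needed. I do not anticipate any real obstacle here; the only point that requires a little care is ruling out that $P_i$ ``skipped over'' the value $e$ in some intermediate update, which is handled by choosing $t^\star$ to be the update event whose assigned value is exactly $e$ (and which must exist, since $\mathit{epoch}_i = e$ at time $t$ and the variable changes only at these two lines).
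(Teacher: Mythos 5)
Your proposal is correct and follows essentially the same route as the paper: observe that entering $e>1$ happens at line~\ref{line:start_view_2} with $\mathit{epoch}_i = e$, deduce that $\mathit{epoch}_i$ was updated to $e$ at some time $\leq t$, and then invoke \Cref{lemma:epoch_update_previous_epoch_entered}. Your extra care about monotonicity and not ``skipping over'' $e$ only makes explicit what the paper leaves implicit.
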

\begin{proof}
Since $P_i$ enters $e > 1$ at time $t$ (line~\ref{line:start_view_2}), $\mathit{epoch}_i = e$ at time $t$.
Hence, $P_i$ has updated its $\mathit{epoch}_i$ variable to $e > 1$ by time $t$.
Therefore, the lemma follows directly from \Cref{lemma:epoch_update_previous_epoch_entered}.
\end{proof}

The next lemma shows that all epochs are eventually entered by some correct processes. 
In other words, correct processes keep transiting to new epochs forever.

\begin{lemma} \label{lemma:no_decision_constantly_next_epoch}
Every epoch is eventually entered by a correct process. 
\end{lemma}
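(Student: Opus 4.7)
The plan is to proceed by contradiction. Suppose there exists a greatest epoch $e^*$ ever entered by any correct process; by \Cref{lemma:epoch_previously_started} together with the fact that every correct process enters epoch $1$ upon initialization (line~\ref{line:start_view_1}), $e^* \geq 1$ is well-defined. I aim to derive a contradiction by showing that some correct process must eventually enter $e^* + 1$. The argument proceeds in three stages: (i) every correct process eventually has $\mathit{epoch}_i \geq e^*$; (ii) every correct process eventually enters $e^*$; (iii) every correct process eventually broadcasts an \textsc{epoch-completed} message for $e^*$, triggering an update to $e^* + 1$.

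For stage (i), if $e^* = 1$ the claim is immediate. Otherwise, by definition of $e^*$ some correct process $P_j$ enters $e^*$; just before entering, $P_j$ broadcasts an \textsc{enter-epoch} message for $e^*$ at line~\ref{line:broadcast_epoch_over_complete}. Reliable channels guarantee that every correct process eventually receives this message; at the moment of reception the rule at line~\ref{line:receive_epoch_over_complete} either fires (raising $\mathit{epoch}_i$ to $e^*$) or is vacuous because $\mathit{epoch}_i \geq e^*$ already.

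For stage (ii), fix a correct process $P_i$ and consider the earliest time at which $\mathit{epoch}_i \geq e^*$. If $\mathit{epoch}_i > e^*$, then $P_i$'s $\mathit{dissemination\_timer}_i$ will eventually expire (it cannot be cancelled forever without $\mathit{epoch}_i$ being updated past $e^*$ again, which only strengthens the contradiction), causing $P_i$ to enter an epoch strictly greater than $e^*$ via lines~\ref{line:sync_timer_expires}--\ref{line:start_view_2}, contradicting maximality. So $\mathit{epoch}_i = e^*$. Any subsequent update of $\mathit{epoch}_i$ (at line~\ref{line:update_epoch_1} or line~\ref{line:update_epoch_2}) would strictly increase it past $e^*$ and, by the same reasoning, eventually cause $P_i$ to enter an epoch $> e^*$, again contradicting maximality. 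Hence no such update occurs, the $\mathit{dissemination\_timer}_i$ runs uncancelled to expiration, and $P_i$ enters $e^*$ at lines~\ref{line:sync_timer_expires}--\ref{line:start_view_2}.

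For stage (iii), every correct process is now in $e^*$ and, by maximality, never advances to a higher epoch. Using the non-Zeno assumption together with the fact that $\mathit{view\_timer}_i$ is set to the positive duration $\mathit{view\_duration}$ each time a view is entered (lines~\ref{line:view_timer_measure_first_view},~\ref{line:view_timer_measure_without_msg_exchange},~\ref{line:view_timer_measure}), each correct process traverses all $f+1$ views of $e^*$ in bounded time and reaches the branch at line~\ref{line:broadcast_epoch_over}, where it broadcasts an \textsc{epoch-completed} message for $e^*$. Since at least $2f+1$ correct processes do so, every correct process eventually collects $2f+1$ such messages (line~\ref{line:receive_epoch_over}) and updates $\mathit{epoch}_i$ to $e^* + 1$. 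Applying stage (ii) to $e^* + 1$ then shows $P_i$ enters $e^* + 1$, the desired contradiction. The main obstacle is the interleaving in stage (ii): while waiting for the $\mathit{dissemination\_timer}_i$, further messages could in principle cause $\mathit{epoch}_i$ to be updated and the timer to be cancelled and restarted; resolving this requires the minimality-style observation that any such update would, by the same inductive argument, push $P_i$ past $e^*$, which is exactly what we are forbidden from happening.
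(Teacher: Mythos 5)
Your proof is correct and follows essentially the same route as the paper's: a contradiction built around a maximal entered epoch $e^*$, showing every correct process eventually reaches $e^*$ (via the \textsc{enter-epoch} dissemination and an uncancelled $\mathit{dissemination\_timer}$), traverses all $f+1$ views, broadcasts \textsc{epoch-completed}, and then collects $2f+1$ such messages, forcing entry into $e^*+1$. The paper only differs presentationally: it isolates your inlined ``no update of $\mathit{epoch}_i$ past $e^*$'' arguments as an explicit standalone claim (justified via \Cref{lemma:epoch_update_previous_epoch_entered}) and handles the $e^* = 1$ case separately (where entry happens at line~\ref{line:start_view_1} rather than through the dissemination timer), two small points your write-up leaves implicit.
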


\begin{proof}
Epoch $1$ is entered by a correct process since every correct process initially triggers the $\mathsf{advance(}1\mathsf{)}$ indication (line~\ref{line:start_view_1}).
Therefore, it is left to prove that all epochs greater than $1$ are entered by a correct process.
By contradiction, let $e + 1$ be the smallest epoch not entered by a correct process, where $e \geq 1$.

\smallskip
\noindent \textbf{Part 1.} \emph{No correct process $P_i$ ever sets $\mathit{epoch}_i$ to an epoch greater than $e$.}
\smallskip
\\Since $e + 1$ is the smallest epoch not entered by a correct process, no correct process ever enters any epoch greater than $e$ (by \Cref{lemma:epoch_previously_started}).
Furthermore, \Cref{lemma:epoch_update_previous_epoch_entered} shows that no correct process $P_i$ ever updates its $\mathit{epoch}_i$ variable to an epoch greater than $e + 1$. 

Finally, $P_i$ never sets $\mathit{epoch}_i$ to $e + 1$ either.
By contradiction, suppose that it does.
In this case, $P_i$ invokes the $\mathsf{measure(}\delta\mathsf{)}$ method on $\mathit{dissemination\_timer}_i$ (either line~\ref{line:sync_timer_measure} or line~\ref{line:sync_timer_measure_2}).
Since $P_i$ does not update $\mathit{epoch}_i$ to an epoch greater than $e + 1$ (as shown in the previous paragraph), the previously invoked $\mathsf{measure(}\delta\mathsf{)}$ method will never be canceled (neither at line~\ref{line:sync_timer_cancel_1} nor at line~\ref{line:sync_timer_cancel_2}).
This implies that $\mathit{dissemination\_timer}_i$ eventually expires (line~\ref{line:sync_timer_expires}), and $P_i$ enters epoch $e + 1$ (line~\ref{line:start_view_2}).
Hence, a contradiction with the fact that epoch $e + 1$ is never entered by a correct process.


\smallskip
\noindent \textbf{Part 2.} \emph{Every correct process eventually enters epoch $e$.}
\smallskip
\\ If $e = 1$, every correct process enters $e$ as every correct process eventually executes line~\ref{line:start_view_1}.

Let $e > 1$.
Since $e > 1$ is entered by a correct process (line~\ref{line:start_view_2}), the process has disseminated an $\textsc{enter-epoch}$ message for $e$ (line~\ref{line:broadcast_epoch_over_complete}).
This message is eventually received by every correct process since the network is reliable.
If a correct process $P_i$ has not previously set its $\mathit{epoch}_i$ variable to $e$, it does so upon the reception of the \textsc{enter-epoch} message (line~\ref{line:update_epoch_2}).
Hence, $P_i$ eventually sets its $\mathit{epoch}_i$ variable to $e$.

Immediately after updating its $\mathit{epoch}_i$ variable to $e$ (line~\ref{line:update_epoch_1} or line~\ref{line:update_epoch_2}), $P_i$ invokes $\mathsf{measure}(\delta)$ on $\mathit{dissemination\_timer}_i$ (line~\ref{line:sync_timer_measure} or line~\ref{line:sync_timer_measure_2}).
Because $P_i$ never updates $\mathit{epoch}_i$ to an epoch greater than $e$ (by Part 1), $\mathit{dissemination\_timer}_i$ expires while $\mathit{epoch}_i = e$.
When this happens (line~\ref{line:sync_timer_expires}), $P_i$ enters epoch $e$ (line~\ref{line:start_view_2}).
Thus, all correct processes eventually enter epoch $e$.

\smallskip
\noindent \textbf{Epilogue.} By Part 2, a correct process $P_i$ eventually enters epoch $e$ (line~\ref{line:start_view_1} or line~\ref{line:start_view_2}); when $P_i$ enters $e$, $\mathit{epoch}_i = e$ and $\mathit{view}_i = 1$.
Moreover, just before entering $e$, $P_i$ invokes the $\mathsf{measure}(\cdot)$ method on $\mathit{view\_timer}_i$ (line~\ref{line:view_timer_measure_first_view} or line~\ref{line:view_timer_measure}); let this invocation be denoted by $\mathit{Inv}_v^1$.
As $P_i$ never updates its $\mathit{epoch}_i$ variable to an epoch greater than $e$ (by Part 1), $\mathit{Inv}_v^1$ eventually expires.
When $P_i$ processes the expiration of $\mathit{Inv}_v^1$ (line~\ref{line:rule_view_expired}), $\mathit{epoch}_i = e$ and $\mathit{view}_i = 1 < f + 1$ (by \Cref{lemma:epoch_and_view_no_change}).
Hence, $P_i$ then invokes the $\mathsf{measure}(\cdot)$ method on $\mathit{view\_timer}_i$ (line~\ref{line:view_timer_measure_without_msg_exchange}); when this occurs, $\mathit{epoch}_i = e$ and $\mathit{view}_i = 2$ (by line~\ref{line:increment_view}).
Following the same argument as for $\mathit{Inv}_v^1$, $\mathit{view\_timer}_i$ expires for each view of epoch $e$.

Therefore, every correct process $P_i$ eventually broadcasts an \textsc{epoch-completed} message for epoch $e$ (line~\ref{line:broadcast_epoch_over}) when $\mathit{view\_timer}_i$ expires for the last view of epoch $e$.
Thus, a correct process $P_j$ eventually receives $2f + 1$ \textsc{epoch-completed} messages for epoch $e$ (line~\ref{line:receive_epoch_over}), and updates $\mathit{epoch}_j$ to $e + 1$ (line~\ref{line:update_epoch_1}).
This contradicts Part 1, which implies that the lemma holds.
\end{proof}

We now introduce $e_{\mathit{final}}$, the first new epoch entered at or after $\mathit{GST}$.

\begin{definition} \label{definition:final}
We denote by $e_{\mathit{final}}$ the smallest epoch such that the first correct process to enter $e_{\mathit{final}}$ does so at time $t_{e_\mathit{final}} \geq \mathit{GST}$.
\end{definition}

Note that $e_{\mathit{final}}$ exists due to \Cref{lemma:no_decision_constantly_next_epoch}; recall that, by $\mathit{GST}$, an execution must be finite as no process is able to perform infinitely many steps in finite time.
It is stated in \Cref{algorithm:variable_constants} that $\mathit{view\_duration} = \Delta + 2\delta$ (line~\ref{line:view_duration}).
However, technically speaking, $\mathit{view\_duration}$ must be greater than $\Delta + 2\delta$ in order to not waste the ``very last'' moment of a $\Delta + 2\delta$ time period, i.e., we set $\mathit{view\_duration} = \Delta + 2\delta + \epsilon$, where $\epsilon$ is any positive constant.
Therefore, in the rest of the section, we assume that $\mathit{view\_duration} = \Delta + 2\delta + \epsilon > \Delta + 2\delta$.

We now show that, if a correct process enters an epoch $e$ at time $t_e \geq \mathit{GST}$ and sends an \textsc{epoch-completed} message for $e$, the \textsc{epoch-completed} message is sent at time $t_e + \mathit{epoch\_duration}$, where $\mathit{epoch\_duration} = (f + 1) \cdot \mathit{view\_duration}$.

\begin{lemma} \label{lemma:epoch_completed_time}
Let a correct process $P_i$ enter an epoch $e$ at time $t_e \geq \mathit{GST}$ and let $P_i$ send an \textsc{epoch-completed} message for epoch $e$ (line~\ref{line:broadcast_epoch_over}).
The \textsc{epoch-completed} message is sent at time $t_e + \mathit{epoch\_duration}$.
\end{lemma}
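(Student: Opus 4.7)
The plan is to prove, by induction on $j \in \{1, 2, \ldots, f+1\}$, that $P_i$ enters the $j$-th view of epoch $e$ at real time exactly $t_j := t_e + (j-1) \cdot \mathit{view\_duration}$, and that this entry is immediately preceded (with zero local delay) by an invocation $\mathit{Inv}_v^j$ of $\mathsf{measure}(\mathit{view\_duration})$ on $\mathit{view\_timer}_i$ with $\mathit{epoch}_i = e$ and $\mathit{view}_i = j$. Once this is established for $j = f+1$, the lemma follows: since $t_e \geq \mathit{GST}$ ensures no clock drift on the intervals at hand, the expiration produced by $\mathit{Inv}_v^{f+1}$ is processed exactly $\mathit{view\_duration}$ later, i.e., at time $t_{f+1} + \mathit{view\_duration} = t_e + \mathit{epoch\_duration}$; by Lemma~\ref{lemma:epoch_and_view_no_change} we still have $\mathit{epoch}_i = e$ and $\mathit{view}_i = f+1$ at that instant, so the $\mathit{else}$ branch of line~\ref{line:check_last_view} fires and $P_i$ executes the \textsc{epoch-completed} broadcast on line~\ref{line:broadcast_epoch_over} at that same real time.

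The base case $j=1$ is immediate by inspection of the code: entering epoch $e$ at $t_e$ means triggering $\mathsf{advance}$ for its first view, which happens at line~\ref{line:start_view_1} if $e=1$ or at line~\ref{line:start_view_2} if $e>1$; in both cases this is the instruction directly following a $\mathsf{measure}(\mathit{view\_duration})$ on $\mathit{view\_timer}_i$ (line~\ref{line:view_timer_measure_first_view} or line~\ref{line:view_timer_measure}), with $\mathit{view}_i = 1$ and $\mathit{epoch}_i = e$. For the inductive step from $j$ to $j+1$ (with $j \leq f$), since $t_j \geq \mathit{GST}$ and clocks do not drift after $\mathit{GST}$, $\mathit{Inv}_v^j$ would expire exactly $\mathit{view\_duration}$ real time later provided it is not cancelled first. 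Cancellation of $\mathit{view\_timer}_i$ occurs only on line~\ref{line:cancel_view_timer_1} or line~\ref{line:cancel_view_timer_2}, each of which is preceded by an update of $\mathit{epoch}_i$ to some $e' > e$; by the monotonicity of $\mathit{epoch}_i$, once its value exceeds $e$ it never returns to $e$, yet the broadcast on line~\ref{line:broadcast_epoch_over} hypothesized by the lemma requires $\mathit{epoch}_i = e$ at the time of its execution --- a contradiction. Hence $\mathit{Inv}_v^j$ produces an expiration at $t_{j+1} := t_j + \mathit{view\_duration}$; by Lemma~\ref{lemma:epoch_and_view_no_change}, the state at the processing of this expiration (line~\ref{line:rule_view_expired}) satisfies $\mathit{epoch}_i = e$ and $\mathit{view}_i = j < f+1$, so the $\mathit{if}$ branch of line~\ref{line:check_last_view} is taken, $\mathit{view}_i$ is incremented to $j+1$, $\mathit{Inv}_v^{j+1}$ is invoked on line~\ref{line:view_timer_measure_without_msg_exchange}, and $\mathsf{advance}$ for the $(j+1)$-st view of $e$ is triggered on line~\ref{line:start_view_without_msg_exchange}, all at real time $t_{j+1}$. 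This closes the induction.

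The main obstacle is the \emph{no cancellation} claim inside the inductive step: in isolation it is false, as an \textsc{enter-epoch} message for some $e' > e$ arriving between $t_j$ and $t_{j+1}$ would indeed cancel $\mathit{view\_timer}_i$. The resolution uses the lemma's hypothesis that $P_i$ eventually broadcasts \textsc{epoch-completed} for epoch $e$: this hypothesis propagates backwards to forbid any such overtaking anywhere in the interval $[t_e, t_e + \mathit{epoch\_duration}]$, because any such event would leave $\mathit{epoch}_i$ permanently above $e$ and thereby preclude the broadcast on line~\ref{line:broadcast_epoch_over}.
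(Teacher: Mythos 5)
Your proof is correct, but it runs the induction in the opposite direction from the paper, and the difference is substantive. The paper's proof is a \emph{backwards} induction anchored at the broadcast event itself: it lets $t^*$ be the time the \textsc{epoch-completed} message is sent and peels off views $f+1, f, \ldots, 1$, using \Cref{lemma:epoch_and_view_no_change} to transfer the state from each processed expiration back to the $\mathsf{measure}(\cdot)$ invocation that produced it. Because each expiration is already known to have been \emph{processed}, the paper never has to argue that the corresponding timer was not cancelled --- non-cancellation comes for free from the direction of the argument. Your forward induction must instead confront cancellation head-on, and you correctly identify this as the crux and resolve it the right way: any cancellation of $\mathit{view\_timer}_i$ is immediately preceded by an update of $\mathit{epoch}_i$ to a value strictly greater than $e$, and by monotonicity of $\mathit{epoch}_i$ this would forever preclude the hypothesized broadcast at line~\ref{line:broadcast_epoch_over}, which requires $\mathit{epoch}_i = e$. (One small point you leave implicit: this contradiction needs the broadcast to occur \emph{after} the cancellation, which follows because the broadcast can only be triggered by a view-timer expiration with $\mathit{view}_i = f+1$ in epoch $e$, and by \Cref{lemma:increasing_views} and \Cref{lemma:no_jump} that view is entered only after the view whose timer is being cancelled.) What your route buys is a constructive statement --- under the hypothesis, the process provably marches through every view of $e$ on schedule --- whereas the paper's route buys a shorter argument with fewer case distinctions about what could interrupt the epoch.
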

\begin{proof}
We prove the lemma by backwards induction.
Let $t^*$ denote the time at which the \textsc{epoch-completed} message for epoch $e$ is sent (line~\ref{line:broadcast_epoch_over}).

\smallskip
\noindent \textbf{Base step:} \emph{The $(f + 1)$-st view of the epoch $e$ is entered by $P_i$ at time $t^{f + 1}$ such that $t^* - t^{f + 1} = 1 \cdot \mathit{view\_duration}$.}
\smallskip
\\ When sending the \textsc{epoch-completed} message (line~\ref{line:broadcast_epoch_over}), the following holds: $\mathit{epoch}_i = e$ and $\mathit{view}_i = f + 1$ (due to the check at line~\ref{line:check_last_view} and \Cref{lemma:view_f+1}).
Let $\mathit{Exp}_v^{f + 1}$ denote the expiration event of $\mathit{view\_timer}_i$ processed just before broadcasting the message (line~\ref{line:rule_view_expired}).
When processing $\mathit{Exp}_v^{f + 1}$, we have that $\mathit{epoch}_i = e$ and $\mathit{view}_i = f + 1$.
When $P_i$ has invoked $\mathit{Inv}_v^{f + 1}$, where $\mathit{Inv}_v^{f + 1}$ is the invocation of the $\mathsf{measure}(\cdot)$ method which has produced $\mathit{Exp}_v^{f + 1}$, we have that $\mathit{epoch}_i = e$ and $\mathit{view}_i = f + 1$ (by \Cref{lemma:epoch_and_view_no_change}).
As $f + 1 \neq 1$, $\mathit{Inv}_v^{f + 1}$ is invoked at line~\ref{line:view_timer_measure_without_msg_exchange} at some time $t^{f + 1} \leq t^*$.
Finally, $P_i$ enters the $(f + 1)$-st view of the epoch $e$ at line~\ref{line:start_view_without_msg_exchange} at time $t^{f + 1}$.
By \Cref{lemma:no_jump}, we have that $t^{f + 1} \geq t_e \geq \mathit{GST}$.
As local clocks do not drift after $\mathit{GST}$, we have that $t^* - t^{f + 1} = \mathit{view\_duration}$ (due to line~\ref{line:view_timer_measure_without_msg_exchange}), which concludes the base step.

\smallskip
\noindent \textbf{Induction step:} \emph{Let $j \in [1, f]$. 
The $j$-th view of the epoch $e$ is entered by $P_i$ at time $t^j$ such that $t^* - t^j = (f + 2 - j)\cdot\mathit{view\_duration}$.
\\Induction hypothesis: For every $k \in [j + 1, f + 1]$, the $k$-th view of the epoch $e$ is entered by $P_i$ at time $t^k$ such that $t^* - t^k = (f + 2 - k) \cdot \mathit{view\_duration}$.}
\smallskip
\\ Let us consider the $(j + 1)$-st view of the epoch $e$; note that $j + 1 \neq 1$.
Hence, the $(j + 1)$-st view of the epoch $e$ is entered by $P_i$ at some time $t^{j + 1}$ at line~\ref{line:start_view_without_msg_exchange}, where $t^* - t^{j + 1} = (f + 2 - j - 1) \cdot \mathit{view\_duration} = (f + 1 - j) \cdot \mathit{view\_duration}$ (by the induction hypothesis).
Let $\mathit{Exp}_v^{j}$ denote the expiration event of $\mathit{view\_timer}_i$ processed at time $t^{j + 1}$ (line~\ref{line:rule_view_expired}).
When processing $\mathit{Exp}_v^{j}$, we have that $\mathit{epoch}_i = e$ and $\mathit{view}_i = j$ (due to line~\ref{line:increment_view}).
When $P_i$ has invoked $\mathit{Inv}_v^{j}$ at some time $t^{j}$, where $\mathit{Inv}_v^{j}$ is the invocation of the $\mathsf{measure}(\cdot)$ method which has produced $\mathit{Exp}_v^{j}$, we have that $\mathit{epoch}_i = e$ and $\mathit{view}_i = j$ (by \Cref{lemma:epoch_and_view_no_change}).
$\mathit{Inv}_v^{j}$ could have been invoked either at line~\ref{line:view_timer_measure_first_view}, or at line~\ref{line:view_timer_measure_without_msg_exchange}, or at line~\ref{line:view_timer_measure}:
\begin{compactitem}
    \item line~\ref{line:view_timer_measure_first_view}:
    In this case, $P_i$ enters the $j$-th view of the epoch $e$ at time $t^j$ at line~\ref{line:start_view_1}, where $j = 1$ (by line~\ref{line:start_view_1}).
    Moreover, we have that $t^{j} \geq \mathit{GST}$ as $t^j = t_e$ (by \Cref{lemma:increasing_views}).
    As local clocks do not drift after $\mathit{GST}$, we have that $t^{j + 1} - t^j = \mathit{view\_duration}$, which implies that $t^* - t^j = t^* - t^{j + 1} + \mathit{view\_duration} = (f + 1 - j + 1) \cdot \mathit{view\_duration} = (f + 2 - j) \cdot \mathit{view\_duration}$.
    Hence, in this case, the induction step is concluded.
    
    \item line~\ref{line:view_timer_measure_without_msg_exchange}:
    $P_i$ enters the $j$-th view of the epoch $e$ at line~\ref{line:start_view_without_msg_exchange} at time $t^j$, where $j > 1$ (by \Cref{lemma:view_f+1} and line~\ref{line:increment_view}).
    By lemmas~\ref{lemma:increasing_views} and~\ref{lemma:no_jump}, we have that $t^j \geq t_e \geq \mathit{GST}$.
    As local clocks do not drift after $\mathit{GST}$, we have that $t^{j + 1} - t^j = \mathit{view\_duration}$, which implies that $t^* - t^j = (f + 2 - j) \cdot \mathit{view\_duration}$.
    Hence, the induction step is concluded even in this case.
    
    \item line~\ref{line:view_timer_measure}:
    In this case, $P_i$ enters the $j$-th view of the epoch $e$ at time $t^j$ at line~\ref{line:start_view_2}, where $j = 1$ as $\mathit{view}_i = 1$ (by line~\ref{line:reset_view}).
    Moreover, $t^j = t_e \geq \mathit{GST}$ (by \Cref{lemma:increasing_views}).
    As local clocks do not drift after $\mathit{GST}$, we have that $t^{j + 1} - t^j = \mathit{view\_duration}$, which implies that $t^* - t^j = t^* - t^{j + 1} + \mathit{view\_duration} = (f + 1 - j + 1) \cdot \mathit{view\_duration} = (f + 2 - j) \cdot \mathit{view\_duration}$.
    Hence, even in this case, the induction step is concluded.
\end{compactitem}
As the induction step is concluded in all possible scenarios, the backwards induction holds.
Therefore, $P_i$ enters the first view of the epoch $e$ (and, thus, the epoch $e$) at time $t_e$ (recall that the first view of any epoch is entered at most once by \Cref{lemma:increasing_views}) such that $t^* - t_e = (f + 1) \cdot \mathit{view\_duration} = \mathit{epoch\_duration}$, which concludes the proof.
\end{proof}

The following lemma shows that no correct process broadcasts an \textsc{epoch-completed} message for an epoch $\geq e_{\mathit{final}}$ before time $t_{e_\mathit{final}} + \mathit{epoch\_duration}$.

\begin{lemma} \label{lemma:no_epoch_over_for_epoch_duration}
No correct process broadcasts an \textsc{epoch-completed} message for an epoch $e' \geq e_{\mathit{final}}$ (line~\ref{line:broadcast_epoch_over}) before time $t_{e_\mathit{final}} + \mathit{epoch\_duration}$.
\end{lemma}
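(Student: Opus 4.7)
\textit{Plan.} The plan is to assume for contradiction (or just directly) that a correct process $P_i$ broadcasts an \textsc{epoch-completed} message for some epoch $e' \geq e_{\mathit{final}}$ at some time $t^*$, and then to show $t^* \geq t_{e_\mathit{final}} + \mathit{epoch\_duration}$ by combining three already-established lemmas. First, I would invoke Lemma~\ref{lemma:epoch_over_previously_started} to conclude that $P_i$ must have entered epoch $e'$ at some earlier time $t^{e'} \leq t^*$ (so that $\mathsf{advance}(v) \stackrel{\beta_i}{\prec}$ the sending event, where $v$ is the first view of $e'$). Next, I would show that $t^{e'} \geq t_{e_\mathit{final}}$, so in particular $t^{e'} \geq \mathit{GST}$. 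Finally, I would apply Lemma~\ref{lemma:epoch_completed_time} to this entry, which gives $t^* = t^{e'} + \mathit{epoch\_duration} \geq t_{e_\mathit{final}} + \mathit{epoch\_duration}$.

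The middle step---showing $t^{e'} \geq t_{e_\mathit{final}}$---is the only non-trivial piece, and I would handle it by a short induction on $e'$. The base case $e' = e_{\mathit{final}}$ is immediate from the definition of $t_{e_\mathit{final}}$ (\Cref{definition:final}), which says that $t_{e_\mathit{final}}$ is the \emph{first} time any correct process enters $e_{\mathit{final}}$, so any entry of $e_{\mathit{final}}$ by a correct process (in particular, the one by $P_i$) occurs at time $\geq t_{e_\mathit{final}}$. For the inductive step $e' > e_{\mathit{final}}$, I would apply Lemma~\ref{lemma:epoch_previously_started}: if $P_i$ enters $e'$ at time $t^{e'}$, then some correct process has entered $e' - 1$ by time $t^{e'}$, hence $t_{e'-1} \leq t^{e'}$. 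Iterating this yields $t_{e_\mathit{final}} \leq t_{e_\mathit{final}+1} \leq \dots \leq t_{e'} \leq t^{e'}$, which closes the induction.

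The anticipated obstacle is minor and essentially bookkeeping: one must be careful that the chain of inequalities $t_{e_\mathit{final}} \leq t_{e_\mathit{final}+1} \leq \dots \leq t_{e'}$ is justified at every link, which requires Lemma~\ref{lemma:epoch_previously_started} to apply to the \emph{first} correct process to enter each intermediate epoch (not to $P_i$ itself, since $P_i$ may not have entered those epochs). This is fine because Lemma~\ref{lemma:epoch_previously_started} is stated for any correct process entering epoch $e > 1$, and $t_{e}$ is by definition the first entry time, so the application is direct. Once the monotonicity of entry times above $e_{\mathit{final}}$ is established, the conclusion follows in one line by Lemma~\ref{lemma:epoch_completed_time} applied at $P_i$, since the required hypothesis $t^{e'} \geq \mathit{GST}$ is inherited from $t^{e'} \geq t_{e_\mathit{final}} \geq \mathit{GST}$.
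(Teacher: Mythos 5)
Your proposal is correct and follows essentially the same route as the paper's proof: both combine \Cref{lemma:epoch_over_previously_started} (the sender previously entered $e'$), a lower bound on that entry time via \Cref{definition:final} and \Cref{lemma:epoch_previously_started}, and \Cref{lemma:epoch_completed_time} to conclude $t^* = t^{e'} + \mathit{epoch\_duration} \geq t_{e_\mathit{final}} + \mathit{epoch\_duration}$. Your explicit induction giving $t_{e_\mathit{final}} \leq \dots \leq t_{e'}$ is simply the unwound version of the paper's one-line appeal to \Cref{lemma:epoch_previously_started} for the case $e' > e_{\mathit{final}}$.
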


\begin{proof}
Let $t^*$ be the first time a correct process, denoted by $P_i$, sends an \textsc{epoch-completed} message for an epoch $e' \geq e_{\mathit{final}}$ (line~\ref{line:broadcast_epoch_over}); if $t^*$ is not defined, the lemma trivially holds.
By \Cref{lemma:epoch_over_previously_started}, $P_i$ has entered epoch $e'$ at some time $t_{e'} \leq t^*$.
If $e' = e_{\mathit{final}}$, then $t_e' \geq t_{e_{\mathit{final}}} \geq \mathit{GST}$.
If $e' > e_{\mathit{final}}$, by \Cref{lemma:epoch_previously_started}, $t_{e'} \geq t_{e_{\mathit{final}}} \geq \mathit{GST}$.
Therefore, $t^* = t_{e'} + \mathit{epoch\_duration}$ (by \Cref{lemma:epoch_completed_time}), which means that $t^* \geq t_{e_{\mathit{final}}} + \mathit{epoch\_duration}$.
\end{proof}

Next, we show during which periods a correct process is in which view of the epoch $e_{\mathit{final}}$.

\begin{lemma} \label{lemma:time_in_final}
Consider a correct process $P_i$.
\begin{compactitem}
    \item For any $j \in [1, f]$, $P_i$ enters the $j$-th view of the epoch $e_{\mathit{final}}$ at some time $t^j$, where $t^j \in \big[t_{e_{\mathit{final}}} + (j - 1)\cdot\mathit{view\_duration}, t_{e_{\mathit{final}}} + (j - 1)\cdot\mathit{view\_duration} + 2\delta\big]$, and stays in the view until (at least) time $t^j + \mathit{view\_duration}$ (excluding time $t^j + \mathit{view\_duration}$).
    
    \item For $j = f + 1$, $P_i$ enters the $j$-th view of the epoch $e_{\mathit{final}}$ at some time $t^j$, where $t^j \in \big[t_{e_{\mathit{final}}} + f\cdot\mathit{view\_duration}, t_{e_{\mathit{final}}} + f\cdot\mathit{view\_duration} + 2\delta\big]$, and stays in the view until (at least) time $t_{e_{\mathit{final}}} + \mathit{epoch\_duration}$ (excluding time $t_{e_{\mathit{final}}} + \mathit{epoch\_duration}$).
\end{compactitem}
\end{lemma}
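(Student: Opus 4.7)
\begin{sketch}
The plan is to prove both bullets by induction on $j$, exploiting two ingredients that are already in place: (i) after $\mathit{GST}$ local clocks do not drift, so any $\mathsf{measure}(\mathit{view\_duration})$ call that is not cancelled produces its expiration exactly $\mathit{view\_duration}$ real time later; and (ii) by \Cref{lemma:no_epoch_over_for_epoch_duration}, no correct process can cause a $\mathit{view\_timer}$ cancellation (via the rules at lines~\ref{line:receive_epoch_over} and~\ref{line:receive_epoch_over_complete}) before time $t_{e_{\mathit{final}}}+\mathit{epoch\_duration}$, because any such cancellation requires either $2f+1$ \textsc{epoch-completed} messages for some $e\geq e_{\mathit{final}}$ or an \textsc{enter-epoch} for $e>e_{\mathit{final}}$, both of which need $f+1$ correct \textsc{epoch-completed} messages for some epoch $\geq e_{\mathit{final}}$.

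For the base case $j=1$, I would split on whether $e_{\mathit{final}}=1$ or $e_{\mathit{final}}>1$. If $e_{\mathit{final}}=1$, then no correct process entered any epoch before $\mathit{GST}$, so every correct process enters epoch $1$ at $\mathit{GST}=t_{e_{\mathit{final}}}$ via line~\ref{line:start_view_1}, giving $t^1=t_{e_{\mathit{final}}}$. Otherwise, the first correct process to enter $e_{\mathit{final}}$ at $t_{e_{\mathit{final}}}$ has, in the same upon-block, broadcast an \textsc{enter-epoch} message for $e_{\mathit{final}}$ (line~\ref{line:broadcast_epoch_over_complete}). Since $t_{e_{\mathit{final}}}\geq\mathit{GST}$, every correct $P_i$ receives this message by $t_{e_{\mathit{final}}}+\delta$ and, unless it has already set $\mathit{epoch}_i\geq e_{\mathit{final}}$ earlier, starts $\mathit{dissemination\_timer}_i$ for $\delta$ (line~\ref{line:sync_timer_measure_2}); by \Cref{lemma:no_epoch_over_for_epoch_duration} this timer cannot be cancelled in the next $\delta$ time, so it expires and $P_i$ enters view $1$ of $e_{\mathit{final}}$ (line~\ref{line:start_view_2}) at some $t^1\in[t_{e_{\mathit{final}}},t_{e_{\mathit{final}}}+2\delta]$.

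For the induction step with $j\in[1,f]$, assuming $t^j\in[t_{e_{\mathit{final}}}+(j-1)\mathit{view\_duration},\,t_{e_{\mathit{final}}}+(j-1)\mathit{view\_duration}+2\delta]$, I would observe that upon entering view $j$ the process invokes $\mathsf{measure}(\mathit{view\_duration})$ on $\mathit{view\_timer}_i$ (line~\ref{line:view_timer_measure} or line~\ref{line:view_timer_measure_without_msg_exchange}). The scheduled expiration occurs at $t^j+\mathit{view\_duration}\leq t_{e_{\mathit{final}}}+f\cdot\mathit{view\_duration}+2\delta<t_{e_{\mathit{final}}}+\mathit{epoch\_duration}$ (using $\mathit{view\_duration}>2\delta$), so by \Cref{lemma:no_epoch_over_for_epoch_duration} the timer is not cancelled in the meantime. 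Since $t^j\geq\mathit{GST}$ and clocks no longer drift, the expiration fires at exactly $t^j+\mathit{view\_duration}$, and $P_i$ advances to view $j+1$ of $e_{\mathit{final}}$ via line~\ref{line:start_view_without_msg_exchange}; thus $P_i$ remains in view $j$ throughout $[t^j,t^j+\mathit{view\_duration})$ and sets $t^{j+1}=t^j+\mathit{view\_duration}$, which lies in the claimed window.

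The case $j=f+1$ then follows: $t^{f+1}$ lies in the desired interval by the same equality, and after entering the last view the check at line~\ref{line:check_last_view} fails, so the timer expiration only triggers an \textsc{epoch-completed} broadcast and no $\mathsf{advance}$ indication. The only way for $P_i$ to leave view $f+1$ is therefore via line~\ref{line:start_view_2}, which requires $\mathit{epoch}_i$ to first be updated beyond $e_{\mathit{final}}$; by \Cref{lemma:no_epoch_over_for_epoch_duration} this cannot happen before $t_{e_{\mathit{final}}}+\mathit{epoch\_duration}$. The main subtlety, and the one piece of bookkeeping I expect to require care, is the base case with $e_{\mathit{final}}>1$: I must rule out that $\mathit{dissemination\_timer}_i$ is cancelled during the $\delta$ window after its activation by a late \textsc{enter-epoch} for some $e'>e_{\mathit{final}}$, which is precisely where \Cref{lemma:no_epoch_over_for_epoch_duration} (via its dependence on \Cref{lemma:epoch_over_previously_started} and \Cref{lemma:epoch_previously_started}) plays the decisive role.
\end{sketch}
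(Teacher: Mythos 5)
Your proposal is correct and follows essentially the same route as the paper's proof: induction on $j$, a base case split on $e_{\mathit{final}}=1$ versus $e_{\mathit{final}}>1$ using the \textsc{enter-epoch} dissemination and the $\delta$-timer to get the $2\delta$ window, and \Cref{lemma:no_epoch_over_for_epoch_duration} to rule out timer cancellations (hence premature view changes) before $t_{e_{\mathit{final}}}+\mathit{epoch\_duration}$. The paper merely spells out more of the timer-history bookkeeping (via \Cref{lemma:expiration_after_invocation_view} and \Cref{lemma:expiration_after_invocation_dissemination}) that you correctly flag as the delicate part.
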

\begin{proof}
Note that no correct process broadcasts an \textsc{epoch-completed} message for an epoch $\geq e_{\mathit{final}}$ (line~\ref{line:broadcast_epoch_over}) before time $t_{e_{\mathit{final}}} + \mathit{epoch\_duration}$ (by \Cref{lemma:no_epoch_over_for_epoch_duration}).
We prove the lemma by induction.

\smallskip
\noindent \textbf{Base step:} \emph{The statement of the lemma holds for $j = 1$.}
\smallskip
\\ If $e_{\mathit{final}} > 1$, every correct process receives an \textsc{enter-epoch} message (line~\ref{line:receive_epoch_over_complete}) for epoch $e_{\mathit{final}}$ by time $t_{e_\mathit{final}} + \delta$ (since $t_{e_\mathit{final}} \geq \mathit{GST}$).
As no correct process broadcasts an \textsc{epoch-completed} message for an epoch $\geq e_{\mathit{final}}$ before time $t_{e_\mathit{final}} + \mathit{epoch\_duration} > t_{e_{\mathit{final}}} + \delta$, $P_i$ sets its $\mathit{epoch}_i$ variable to $e_{\mathit{final}}$ (line~\ref{line:update_epoch_2}) and invokes the $\mathsf{measure}(\delta)$ method on $\mathit{dissemination\_timer}_i$ (line~\ref{line:sync_timer_measure_2}) by time $t_{e_{\mathit{final}}} + \delta$.
Because of the same reason, the $\mathit{dissemination\_timer}_i$ expires by time $t_{e_{\mathit{final}}} + 2\delta$ (line~\ref{line:sync_timer_expires}); at this point in time, $\mathit{epoch}_i = e_{\mathit{final}}$.
Hence, $P_i$ enters the first view of $e_{\mathit{final}}$ by time $t_{e_{\mathit{final}}} + 2\delta$ (line~\ref{line:start_view_2}).
Observe that, if $e_{\mathit{final}} = 1$, $P_i$ enters $e_{\mathit{final}}$ at time $t_{e_{\mathit{final}}}$ (as every correct process starts executing \Cref{algorithm:synchronizer} at $\mathit{GST} = t_{e_{\mathit{final}}}$).
Thus, $t^1 \in [t_{e_{\mathit{final}}}, t_{e_{\mathit{final}}} + 2\delta]$.

Prior to entering the first view of $e_{\mathit{final}}$, $P_i$ invokes the $\mathsf{measure}(\mathit{view\_duration})$ method on $\mathit{view\_timer}_i$ (line~\ref{line:view_timer_measure_first_view} or line~\ref{line:view_timer_measure}); we denote this invocation by $\mathit{Inv}_v$.
By \Cref{lemma:no_epoch_over_for_epoch_duration}, $\mathit{Inv}_v$ cannot be canceled (line~\ref{line:cancel_view_timer_1} or line~\ref{line:cancel_view_timer_2}) as $t_{e_{\mathit{final}}} + \mathit{epoch\_duration} > t_{e_{\mathit{final}}} + 2\delta + \mathit{view\_duration}$.
Therefore, $\mathit{Inv}_v$ produces an expiration event $\mathit{Exp}_v$ which is processed by $P_i$ at time $t^1 + \mathit{view\_duration}$ (since $t^1 \geq \mathit{GST}$ and local clocks do not drift after $\mathit{GST}$).

Let us investigate the first time $P_i$ enters another view after entering the first view of $e_{\mathit{final}}$.
This could happen at the following places of \Cref{algorithm:synchronizer}:
\begin{compactitem}
    \item line~\ref{line:start_view_without_msg_exchange}:
    By \Cref{lemma:expiration_after_invocation_view}, we conclude that this occurs at time $t^* \geq t^1 + \mathit{view\_duration}$.
    Therefore, in this case, $P_i$ is in the first view of $e_{\mathit{final}}$ during the time period $[t^1, t^1 + \mathit{view\_duration})$.
    The base step is proven in this case.
    
    \item line~\ref{line:start_view_2}:
    By contradiction, suppose that this happens before time $t^1 + \mathit{view\_duration}$.
    Hence, the $\mathsf{measure}(\cdot)$ method was invoked on $\mathit{dissemination\_timer}_i$ (line~\ref{line:sync_timer_measure} or line~\ref{line:sync_timer_measure_2}) before time $t^1 + \mathit{view\_duration}$ and after the invocation of $\mathit{Inv}_v$ (by \Cref{lemma:expiration_after_invocation_dissemination}).
    Thus, $\mathit{Inv}_v$ is canceled (line~\ref{line:cancel_view_timer_1} or line~\ref{line:cancel_view_timer_2}), which is impossible (as previously proven).
    
    Hence, $P_i$ is in the first view of $e_{\mathit{final}}$ during (at least) the time period $[t^1, t^1 + \mathit{view\_duration})$, which implies that the base step is proven even in this case.
\end{compactitem}

\smallskip
\noindent \textbf{Induction step:} 
\emph{The statement of the lemma holds for $j$, where $1 < j \leq f + 1$.
\\ Induction hypothesis: The statement of the lemma holds for every $k \in [1, j - 1]$.}
\smallskip
\\ Consider the $(j - 1)$-st view of $e_{\mathit{final}}$ denoted by $v_{j - 1}$.
Recall that $t^{j - 1}$ denotes the time at which $P_i$ enters $v_{j - 1}$.
Just prior to entering $v_{j - 1}$ (line~\ref{line:start_view_1} or line~\ref{line:start_view_without_msg_exchange} or line~\ref{line:start_view_2}), $P_i$ has invoked the $\mathsf{measure}(\mathit{view\_duration})$ method on $\mathit{view\_timer}_i$ (line~\ref{line:view_timer_measure_first_view} or line~\ref{line:view_timer_measure_without_msg_exchange} or line~\ref{line:view_timer_measure}); let this invocation be denoted by $\mathit{Inv}_v$.
When $P_i$ invokes $\mathit{Inv}_v$, we have that $\mathit{epoch}_i = e_{\mathit{final}}$ and $\mathit{view}_i = j - 1$.
As in the base step, \Cref{lemma:no_epoch_over_for_epoch_duration} shows that $\mathit{Inv}_v$ cannot be canceled (line~\ref{line:cancel_view_timer_1} or line~\ref{line:cancel_view_timer_2}) as $t_{e_{\mathit{final}}} + \mathit{epoch\_duration} > t^{j - 1} + \mathit{view\_duration}$ since $t^{j - 1} \leq t_{e_{\mathit{final}}} + (j - 2)\cdot \mathit{view\_duration} + 2\delta$ (by the induction hypothesis).
We denote by $\mathit{Exp}_v$ the expiration event produced by $\mathit{Inv}_v$.
By \Cref{lemma:epoch_and_view_no_change}, when $P_i$ processes $\mathit{Exp}_v$ (line~\ref{line:rule_view_expired}), we have that $\mathit{epoch}_i = e_{\mathit{final}}$ and $\mathit{view}_i = j - 1 < f + 1$.
Hence, $P_i$ enters the $j$-th view of $e_{\mathit{final}}$ at time $t^j = t^{j - 1} + \mathit{view\_duration}$ (line~\ref{line:start_view_without_msg_exchange}), which means that $t^j \in \big[ t_{e_{\mathit{final}}} + (j - 1) \cdot \mathit{view\_duration}, t_{e_{\mathit{final}}} + (j - 1) \cdot \mathit{view\_duration} + 2\delta \big]$.

We now separate two cases:
\begin{compactitem}
    \item Let $j < f + 1$.
    Just prior to entering the $j$-th view of $e_{\mathit{final}}$ (line~\ref{line:start_view_without_msg_exchange}), $P_i$ invokes the $\mathsf{measure}(\mathit{view\_duration})$ method on $\mathit{view\_timer}_i$ (line~\ref{line:view_timer_measure_without_msg_exchange}); we denote this invocation by $\mathit{Inv}'_v$.
    By \Cref{lemma:no_epoch_over_for_epoch_duration}, $\mathit{Inv}'_v$ cannot be canceled (line~\ref{line:cancel_view_timer_1} or line~\ref{line:cancel_view_timer_2}) as $t_{e_{\mathit{final}}} + \mathit{epoch\_duration} > t_{e_{\mathit{final}}} + (j - 1) \cdot \mathit{view\_duration} + 2\delta + \mathit{view\_duration}$.
    Therefore, $\mathit{Inv}'_v$ produces an expiration event $\mathit{Exp}'_v$ which is processed by $P_i$ at time $t^j + \mathit{view\_duration}$ (since $t^j \geq \mathit{GST}$ and local clocks do not drift after $\mathit{GST}$).

    Let us investigate the first time $P_i$ enters another view after entering the $j$-th view of $e_{\mathit{final}}$.
    This could happen at the following places of \Cref{algorithm:synchronizer}:
    \begin{compactitem}
        \item line~\ref{line:start_view_without_msg_exchange}:
        By \Cref{lemma:expiration_after_invocation_view}, we conclude that this occurs at time $\geq t^j + \mathit{view\_duration}$.
        Therefore, in this case, $P_i$ is in the $j$-th view of $e_{\mathit{final}}$ during the time period $[t^j, t^j + \mathit{view\_duration})$.
        The induction step is proven in this case.
    
        \item line~\ref{line:start_view_2}:
        By contradiction, suppose that this happens before time $t^j + \mathit{view\_duration}$.
        Hence, the $\mathsf{measure}(\cdot)$ method was invoked on $\mathit{dissemination\_timer}_i$ (line~\ref{line:sync_timer_measure} or line~\ref{line:sync_timer_measure_2}) before time $t^j + \mathit{view\_duration}$ and after the invocation of $\mathit{Inv}'_v$ (by \Cref{lemma:expiration_after_invocation_dissemination}).
        Thus, $\mathit{Inv}'_v$ is canceled (line~\ref{line:cancel_view_timer_1} or line~\ref{line:cancel_view_timer_2}), which is impossible (as previously proven).
    
        Hence, $P_i$ is in the $j$-th view of $e_{\mathit{final}}$ during (at least) the time period $[t^j, t^j + \mathit{view\_duration})$, which concludes the induction step even in this case.
    \end{compactitem}
    
    \item Let $j = f + 1$.
    Just prior to entering the $j$-th view of $e_{\mathit{final}}$ (line~\ref{line:start_view_without_msg_exchange}), $P_i$ invokes the $\mathsf{measure}(\mathit{view\_duration})$ method on $\mathit{view\_timer}_i$ (line~\ref{line:view_timer_measure_without_msg_exchange}); we denote this invocation by $\mathit{Inv}'_v$.
    When $\mathit{Inv}_v'$ was invoked, $\mathit{epoch}_i = e_{\mathit{final}}$ and $\mathit{view}_i = f + 1$.
    By \Cref{lemma:no_epoch_over_for_epoch_duration}, we know that the earliest time $\mathit{Inv}_v'$ can be canceled (line~\ref{line:cancel_view_timer_1} or line~\ref{line:cancel_view_timer_2}) is $t_{e_{\mathit{final}}} + \mathit{epoch\_duration}$.
    
    Let us investigate the first time $P_i$ enters another view after entering the $j$-th view of $e_{\mathit{final}}$.
    This could happen at the following places of \Cref{algorithm:synchronizer}:
    \begin{compactitem}
        \item line~\ref{line:start_view_without_msg_exchange}:
        This means that, when processing the expiration event of $\mathit{view\_timer}_i$ (denoted by $\mathit{Exp}_v^*$) at line~\ref{line:rule_view_expired} (before executing the check at line~\ref{line:check_last_view}), $\mathit{view}_i < f + 1$.
        Hence, $\mathit{Exp}_v^*$ is not produced by $\mathit{Inv}_v'$ (by \Cref{lemma:epoch_and_view_no_change}).
        
        By contradiction, suppose that $\mathit{Exp}_v^*$ is processed before time $t_{e_{\mathit{final}}} + \mathit{epoch\_duration}$.
        In this case, $\mathit{Exp}_v^*$ is processed before the expiration event produced by $\mathit{Inv}_v'$ would (potentially) be processed (which is $t_{e_{\mathit{final}}} + \mathit{epoch\_duration}$ at the earliest).
        Thus, $\mathit{Inv}_v'$ must be immediately followed by an invocation of the $\mathsf{cancel()}$ method on $\mathit{view\_timer}_i$ in $h_i|_{\mathit{view}}$ (by lemmas~\ref{lemma:view_timer_behavior} and~\ref{lemma:expiration_after_invocation}).
        As previously shown, the earliest time $\mathit{Inv}'_v$ can be canceled is $t_{e_\mathit{final}} + \mathit{epoch\_duration}$, which implies that $\mathit{Exp}_v^*$ cannot be processed before time $t_{e_{\mathit{final}}} + \mathit{epoch\_duration}$.
        Therefore, $\mathit{Exp}_v^*$ is processed at $t_{e_{\mathit{final}}} + \mathit{epoch\_duration}$ (at the earliest), which concludes the induction step for this case.

        \item line~\ref{line:start_view_2}:
        Suppose that, by contradiction, this happens before time $t_{e_{\mathit{final}}} + \mathit{epoch\_duration}$.
        Hence, the $\mathsf{measure}(\cdot)$ method was invoked on $\mathit{dissemination\_timer}_i$ (line~\ref{line:sync_timer_measure} or line~\ref{line:sync_timer_measure_2}) before time $t_{e_{\mathit{final}}} + \mathit{epoch\_duration}$ (by \Cref{lemma:expiration_after_invocation_dissemination}) and after $P_i$ has entered the $j$-th view of $e_{\mathit{final}}$, which implies that $\mathit{Inv}_v'$ is canceled before time $t_{e_{\mathit{final}}} + \mathit{epoch\_duration}$ (line~\ref{line:cancel_view_timer_1} or line~\ref{line:cancel_view_timer_2}).
        However, this is impossible as the earliest time for $\mathit{Inv}'_v$ to be canceled is $t_{e_{\mathit{final}}} + \mathit{epoch\_duration}$.
        Hence, $P_i$ enters another view at time $t_{e_{\mathit{final}}} + \mathit{epoch\_duration}$ (at the earliest), which concludes the induction step in this case.
    \end{compactitem}
\end{compactitem}
The conclusion of the induction step concludes the proof of the lemma.
\end{proof}

Finally, we prove that \rare ensures the eventual synchronization property.

\begin{theorem} [Eventual synchronization] \label{theorem:termination}
\rare ensures eventual synchronization.
Moreover, the first synchronization time at or after $\mathit{GST}$ occurs by time $t_{e_\mathit{final}} + f\cdot\mathit{view\_duration} + 2\delta$.
\end{theorem}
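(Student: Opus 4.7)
The plan is to combine Lemma~\ref{lemma:time_in_final} with a simple counting argument on the round-robin leader function to produce an explicit synchronization time within $e_{\mathit{final}}$.

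First, I would observe that, since $\mathsf{leader}(\cdot)$ is the round-robin function from line~\ref{line:leader_rare} of \Cref{algorithm:variable_constants} and there are at most $f$ Byzantine processes among the $n = 3f+1$ total, among the $f+1$ consecutive views composing the epoch $e_{\mathit{final}}$ at least one view has a correct leader. Let $k \in [1, f+1]$ be (the smallest) such position, and let $v_k$ denote the $k$-th view of $e_{\mathit{final}}$.

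Next, I would apply Lemma~\ref{lemma:time_in_final} to $v_k$ for each correct process $P_i$. Every correct process enters $v_k$ at some time $t^k_i \in [t_{e_{\mathit{final}}} + (k-1)\cdot\mathit{view\_duration}, t_{e_{\mathit{final}}} + (k-1)\cdot\mathit{view\_duration} + 2\delta]$. Define
\[
t_s \;=\; t_{e_{\mathit{final}}} + (k-1)\cdot \mathit{view\_duration} + 2\delta,
\]
which is $\geq \mathit{GST}$ because $t_{e_{\mathit{final}}} \geq \mathit{GST}$ by \Cref{definition:final}. By the entry-time bound above, every correct process is in $v_k$ by time $t_s$. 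For the leaving time, I separate two cases. If $k \leq f$, Lemma~\ref{lemma:time_in_final} guarantees that each correct process stays in $v_k$ at least until $t^k_i + \mathit{view\_duration} \geq t_{e_{\mathit{final}}} + k \cdot \mathit{view\_duration} = t_s + (\mathit{view\_duration} - 2\delta) = t_s + \Delta$. If $k = f+1$, the same lemma guarantees each correct process stays in $v_k$ until at least $t_{e_{\mathit{final}}} + \mathit{epoch\_duration} = t_{e_{\mathit{final}}} + (f+1)\cdot \mathit{view\_duration} = t_s + (\mathit{view\_duration} - 2\delta) = t_s + \Delta$. Hence, in both cases, every correct process is in $v_k$ continuously on $[t_s, t_s + \Delta]$.

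Since $\mathsf{leader}(v_k)$ is correct by construction, $t_s$ satisfies \Cref{definition:synchronization_time}, proving the eventual synchronization property. The bound $t_s \leq t_{e_{\mathit{final}}} + f\cdot \mathit{view\_duration} + 2\delta$ follows from $k \leq f+1$, i.e.\ $(k-1)\cdot\mathit{view\_duration} \leq f \cdot \mathit{view\_duration}$.

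The only non-routine step is the case split at $k = f+1$, where the generic ``stay $\mathit{view\_duration}$ long'' bound is replaced by the epoch-ending bound $t_{e_{\mathit{final}}} + \mathit{epoch\_duration}$; the arithmetic nevertheless matches because the last view of the epoch is budgeted exactly one $\mathit{view\_duration}$ worth of time before any correct process can broadcast \textsc{epoch-completed} (Lemma~\ref{lemma:no_epoch_over_for_epoch_duration}). Beyond that, the argument is a direct composition of previously proved lemmas with the pigeonhole observation on the round-robin leader schedule.
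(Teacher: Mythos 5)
Your proof is correct and follows essentially the same route as the paper: it combines Lemma~\ref{lemma:time_in_final} with the round-robin pigeonhole over the $f+1$ views of $e_{\mathit{final}}$, and the stated bound comes from the correct-leader position $k \leq f+1$. The only cosmetic difference is that your equalities of the form $\mathit{view\_duration} - 2\delta = \Delta$ use the main-text constant, whereas the paper's appendix sets $\mathit{view\_duration} = \Delta + 2\delta + \epsilon$ precisely so that the closed endpoint $t_s + \Delta$ is covered despite the half-open ``stays until'' intervals of Lemma~\ref{lemma:time_in_final}; under that convention your inequalities become strict and the argument goes through unchanged.
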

\begin{proof}
\Cref{lemma:time_in_final} proves that all correct processes overlap in each view of $e_{\mathit{final}}$ for (at least) $\Delta$ time.
As the leader of one view of $e_{\mathit{final}}$ must be correct (since $\mathsf{leader}(\cdot)$ is a round-robin function), the eventual synchronization is satisfied by \rare: correct processes synchronize in (at least) one of the views of $e_{\mathit{final}}$.
Finally, as the last view of $e_{\mathit{final}}$ is entered by every correct process by time $t^* = t_{e_{\mathit{final}}} + f \cdot \mathit{view\_duration} + 2\delta$ (by \Cref{lemma:time_in_final}), the first synchronization time at or after $\mathit{GST}$ must occur by time $t^*$.
\end{proof}

\noindent \textbf{Proof of complexity.}
We start by showing that, if a correct process sends an \textsc{epoch-completed} message for an epoch $e$, then the ``most recent'' epoch entered by the process is $e$.

\begin{lemma} \label{lemma:epoch_completed_last_enter}
Let $P_i$ be a correct process and let $P_i$ send an \textsc{epoch-completed} message for an epoch $e$ (line~\ref{line:broadcast_epoch_over}).
Then, $e$ is the last epoch entered by $P_i$ in $\beta_i$ before sending the \textsc{epoch-completed} message.
\end{lemma}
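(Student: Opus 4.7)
The plan is to argue from two facts: (i) the value of $\mathit{epoch}_i$ at the moment of the broadcast equals $e$, and (ii) $\mathit{epoch}_i$ is monotonically non-decreasing over time. Combining (i) and (ii) with \Cref{lemma:increasing_views} and \Cref{lemma:epoch_over_previously_started} immediately yields the claim.

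First, I would observe that, at the execution of line~\ref{line:broadcast_epoch_over}, the \textsc{epoch-completed} message carries $\mathit{epoch}_i$ as its second field; since the received epoch value is $e$ by hypothesis, $\mathit{epoch}_i = e$ at the time of the broadcast. Second, I would note that $\mathit{epoch}_i$ is modified only at lines~\ref{line:update_epoch_1} and~\ref{line:update_epoch_2}, and in each case the new value is strictly larger than the old: line~\ref{line:update_epoch_1} sets $\mathit{epoch}_i \gets e' + 1$ only after the guard at line~\ref{line:receive_epoch_over} ensures $e' \geq \mathit{epoch}_i$, while line~\ref{line:update_epoch_2} is guarded by $e' > \mathit{epoch}_i$ at line~\ref{line:receive_epoch_over_complete}. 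Hence $\mathit{epoch}_i$ is monotonically non-decreasing throughout the execution of $P_i$.

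Next, by \Cref{lemma:epoch_over_previously_started}, $P_i$ triggered $\mathsf{advance}(v)$ for $v = (e - 1) \cdot (f + 1) + 1$ at some point before the broadcast event; that is, $P_i$ entered epoch $e$ before broadcasting. For the contradiction step, suppose that $P_i$ entered some epoch $e^{\star}$ after entering $e$ and before the broadcast. By \Cref{lemma:increasing_views}, the $\mathsf{advance}(\cdot)$ indications triggered by $P_i$ carry strictly increasing view numbers; since entering an epoch $e'$ corresponds to triggering $\mathsf{advance}((e'-1)(f+1)+1)$, we must have $e^{\star} > e$. Entering $e^{\star} > 1$ can only happen at line~\ref{line:start_view_2}, which requires $\mathit{epoch}_i = e^{\star}$ at that moment, and in particular $\mathit{epoch}_i$ must have been updated to $e^{\star}$ at an earlier line~\ref{line:update_epoch_1} or~\ref{line:update_epoch_2}.

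Finally, combining monotonicity with the assumption that $\mathit{epoch}_i = e^{\star} > e$ at some point before the broadcast, we would conclude that $\mathit{epoch}_i \geq e^{\star} > e$ at the moment of the broadcast, contradicting $\mathit{epoch}_i = e$ established above. Therefore, no epoch is entered by $P_i$ between entering $e$ and the broadcast, and $e$ is the last epoch entered by $P_i$ in $\beta_i$ before sending the \textsc{epoch-completed} message. I do not anticipate a substantive obstacle: the only subtlety is to be pedantic that the value broadcast at line~\ref{line:broadcast_epoch_over} is literally the current value of $\mathit{epoch}_i$, and to invoke \Cref{lemma:increasing_views} rather than reprove monotonicity of entered epochs from scratch.
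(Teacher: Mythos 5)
Your proof is correct and follows essentially the same route as the paper's: establish via \Cref{lemma:epoch_over_previously_started} that $P_i$ entered $e$ before the broadcast, suppose for contradiction an epoch $e^{\star}$ is entered in between, use \Cref{lemma:increasing_views} to get $e^{\star} > e$, and derive a contradiction from the monotonicity of $\mathit{epoch}_i$. Your version merely spells out the monotonicity of $\mathit{epoch}_i$ in more detail than the paper does.
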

\begin{proof}
By \Cref{lemma:epoch_over_previously_started}, $P_i$ enters $e$ before sending the \textsc{epoch-completed} message for $e$.
By contradiction, suppose that $P_i$ enters some other epoch $e^*$ after entering $e$ and before sending the \textsc{epoch-completed} message for $e$.
By \Cref{lemma:increasing_views}, $e^* > e$.

When $P_i$ enters $e^*$ (line~\ref{line:start_view_2}), $\mathit{epoch}_i = e^*$.
As the value of the $\mathit{epoch}_i$ variable only increases throughout the execution, $P_i$ does not send the \textsc{epoch-completed} message for $e$ after entering $e^* > e$.
Thus, we reach a contradiction, and the lemma holds.
\end{proof}

Next, we show that, if a correct process sends an \textsc{enter-epoch} message for an epoch $e$ at time $t$, the process enters $e$ at time $t$.

\begin{lemma} \label{lemma:enter_epoch_in_e}
Let a correct process $P_i$ send an \textsc{enter-epoch} message (line~\ref{line:broadcast_epoch_over_complete}) for an epoch $e$ at time $t$.
Then, $P_i$ enters $e$ at time $t$.
\end{lemma}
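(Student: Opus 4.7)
The plan is a direct inspection of the pseudocode of \Cref{algorithm:synchronizer}, since the only line at which an \textsc{enter-epoch} message is ever broadcast is line~\ref{line:broadcast_epoch_over_complete}, located inside the handler that processes the expiration of $\mathit{dissemination\_timer}_i$. The message's payload is $\langle \textsc{enter-epoch}, \mathit{epoch}_i, \mathit{epoch\_sig}_i \rangle$, so the hypothesis that this particular message carries epoch $e$ forces $\mathit{epoch}_i = e$ at the instant line~\ref{line:broadcast_epoch_over_complete} is executed, which by assumption is time $t$.

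Next I would trace the continuation of the same handler: immediately after the broadcast, the process executes line~\ref{line:reset_view} ($\mathit{view}_i \gets 1$), computes $\mathit{view\_to\_advance} = (\mathit{epoch}_i - 1)(f+1) + \mathit{view}_i = (e-1)(f+1) + 1$, arms $\mathit{view\_timer}_i$ at line~\ref{line:view_timer_measure}, and triggers $\mathsf{advance}(\mathit{view\_to\_advance})$ at line~\ref{line:start_view_2}. The view index $(e-1)(f+1) + 1$ is, by construction, the first view of epoch $e$, so this $\mathsf{advance}$ indication is precisely the one that, under the definition of ``entering an epoch'' given just before \Cref{lemma:epoch_over_previously_started}, marks $P_i$ as entering $e$. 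Because the system model stipulates that local steps take zero time, every event inside this handler, including both the broadcast at line~\ref{line:broadcast_epoch_over_complete} and the $\mathsf{advance}$ indication at line~\ref{line:start_view_2}, occurs at the same real time $t$.

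The only point that warrants a sanity check — and what I view as the (minor) main obstacle — is that $\mathit{epoch}_i$ does not silently change between line~\ref{line:broadcast_epoch_over_complete} and line~\ref{line:start_view_2}, for otherwise $\mathit{view\_to\_advance}$ might correspond to a different epoch. This is immediate: the only assignments to $\mathit{epoch}_i$ occur at lines~\ref{line:update_epoch_1} and~\ref{line:update_epoch_2}, both of which lie in other event handlers, and no other handler can interleave within this single, zero-duration block of local computation. Consequently, $\mathit{view\_to\_advance}$ at line~\ref{line:start_view_2} is genuinely the first view of $e$, and $P_i$ enters $e$ at time $t$, proving the lemma.
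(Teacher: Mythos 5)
Your proposal is correct and matches the paper's argument: the paper's (one-line) proof likewise observes that $\mathit{epoch}_i = e$ at the moment the \textsc{enter-epoch} message is broadcast and that the same handler then triggers the $\mathsf{advance}$ for the first view of $e$ at line~\ref{line:start_view_2}, which by the zero-time-local-steps assumption happens at the same time $t$. Your version simply spells out the details (the computation of $\mathit{view\_to\_advance}$ and the non-interleaving of other handlers) that the paper leaves implicit.
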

\begin{proof}
When $P_i$ sends the \textsc{enter-epoch} message, we have that $\mathit{epoch}_i = e$.
Hence, $P_i$ enters $e$ at time $t$ (line~\ref{line:start_view_2}).
\end{proof}

Next, we show that a correct process sends (at most) $O(n)$ \textsc{epoch-completed} messages for a specific epoch $e$.

\begin{lemma} \label{lemma:epoch_completed_n}
For any epoch $e$ and any correct process $P_i$, $P_i$ sends at most $O(n)$ \textsc{epoch-completed} messages for $e$ (line~\ref{line:broadcast_epoch_over}).
\end{lemma}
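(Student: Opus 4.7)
The plan is to establish the stronger fact that $P_i$ broadcasts at most a single \textsc{epoch-completed} message for any given epoch $e$, which immediately yields the $O(n)$ (indeed $O(1)$) bound. The only place in \Cref{algorithm:synchronizer} where an \textsc{epoch-completed} message for $e$ is sent is line~\ref{line:broadcast_epoch_over}, and reaching it requires that, at the moment of the broadcast, $\mathit{epoch}_i = e$ and $\mathit{view}_i = f + 1$ (by the test at line~\ref{line:check_last_view} together with \Cref{lemma:view_f+1}).

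I would next peel back one step: the broadcast is immediately preceded by the processing of an expiration event $\mathit{Exp}_v$ of $\mathit{view\_timer}_i$ at line~\ref{line:rule_view_expired}. Let $\mathit{Inv}_v$ be the invocation of $\mathsf{measure}(\cdot)$ on $\mathit{view\_timer}_i$ that produced $\mathit{Exp}_v$. By \Cref{lemma:epoch_and_view_no_change}, the local state at the moment of $\mathit{Inv}_v$ was also $\mathit{epoch}_i = e$ and $\mathit{view}_i = f + 1$. Since $f + 1 \neq 1$, inspection of the three possible sites for $\mathit{Inv}_v$ rules out lines~\ref{line:view_timer_measure_first_view} and~\ref{line:view_timer_measure} (both of which execute only when $\mathit{view}_i = 1$), forcing $\mathit{Inv}_v$ to have been invoked at line~\ref{line:view_timer_measure_without_msg_exchange}. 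Consequently, $\mathit{Inv}_v$ is immediately followed by the event $\mathsf{advance}(v^\star) \in \beta_i$ at line~\ref{line:start_view_without_msg_exchange}, where $v^\star$ denotes the $(f+1)$-st view of $e$.

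Then I would close the argument with the two timer-history lemmas already established. By \Cref{lemma:increasing_views}, the view $v^\star$ is entered by $P_i$ at most once across the entire execution, so at most one invocation $\mathit{Inv}_v$ of the form above can occur in $\beta_i$. By \Cref{lemma:expiration_after_invocation_view}, any single invocation on $\mathit{view\_timer}_i$ produces at most one expiration event, hence at most one expiration event $\mathit{Exp}_v$ with the required local state is ever processed. Therefore line~\ref{line:broadcast_epoch_over} executes at most once with $\mathit{epoch}_i = e$, and $P_i$ sends at most one --- hence at most $O(n)$ --- \textsc{epoch-completed} messages for $e$.

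The proof presents no real obstacle; the only mildly delicate point is linking the expiration event that triggers the broadcast to the unique entry into the $(f+1)$-st view of $e$, and this link is delivered directly by \Cref{lemma:epoch_and_view_no_change} together with the observation that only line~\ref{line:view_timer_measure_without_msg_exchange} can produce an $\mathit{Inv}_v$ whose associated state satisfies $\mathit{view}_i = f + 1$.
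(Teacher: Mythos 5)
Your proof is correct, but it pivots on a different key fact than the paper's. The paper argues \emph{forward} from the first broadcast for $e$: it examines the first subsequent invocation $\mathit{Inv}'_v$ of $\mathsf{measure}(\cdot)$ on $\mathit{view\_timer}_i$, rules out line~\ref{line:view_timer_measure_without_msg_exchange} (the processing of $\mathit{Exp}_v$ takes the else-branch, not the if-branch), and shows that reaching line~\ref{line:view_timer_measure} forces $\mathit{epoch}_i$ to have strictly increased beyond $e$; monotonicity of $\mathit{epoch}_i$ then kills any later broadcast for $e$. You instead map each broadcast for $e$ \emph{backward}, via \Cref{lemma:epoch_and_view_no_change}, to its producing invocation at line~\ref{line:view_timer_measure_without_msg_exchange} and hence injectively to an $\mathsf{advance}(v^\star)$ event for the last view of $e$, and then invoke \Cref{lemma:increasing_views} to get at-most-once entry of $v^\star$. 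Both arguments are sound and of comparable length; yours arguably localizes the work better (it never reasons about what happens \emph{after} the broadcast), while the paper's forward argument is the one that generalizes to \Cref{lemma:enter_epoch_n}, where no analogue of ``entering a view'' is available. Your case elimination of lines~\ref{line:view_timer_measure_first_view} and~\ref{line:view_timer_measure} uses $f+1 \neq 1$, which is fine since the model assumes $f > 0$. One small accounting note: the conclusion should be ``at most one \emph{broadcast},'' which in the paper's complexity accounting is $n$ point-to-point messages, i.e., $O(n)$ messages and $O(n)$ words --- your parenthetical ``indeed $O(1)$'' overstates the bound under that convention, though it does not affect the lemma as stated.
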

\begin{proof}
Let $\mathit{Exp}_v$ denote the first expiration event of $\mathit{view\_timer}_i$ which $P_i$ processes (line~\ref{line:rule_view_expired}) in order to broadcast the \textsc{epoch-completed} message for $e$ (line~\ref{line:broadcast_epoch_over}); if $\mathit{Exp}_v$ does not exist, the lemma trivially holds.
Hence, let $\mathit{Exp}_v$ exist.

When $\mathit{Exp}_v$ was processed, $\mathit{epoch}_i = e$.
Let $\mathit{Inv}'_v$ denote the first invocation of the $\mathsf{measure}(\cdot)$ method on $\mathit{view\_timer}_i$ after the processing of $\mathit{Exp}_v$.
If $\mathit{Inv}'_v$ does not exist, there does not exist an expiration event of $\mathit{view\_timer}_i$ processed after $\mathit{Exp}_v$ (by \Cref{lemma:expiration_after_invocation_view}), which implies that the lemma trivially holds.

Let us investigate where $\mathit{Inv}'_v$ could have been invoked:
\begin{compactitem}
    \item line~\ref{line:view_timer_measure_without_msg_exchange}:
    By \Cref{lemma:expiration_after_invocation_view}, we conclude that the processing of $\mathit{Exp}_v$ leads to $\mathit{Inv}'_v$.
    However, this is impossible as the processing of $\mathit{Exp}_v$ leads to the broadcasting of the \textsc{epoch-completed} messages (see the check at line~\ref{line:check_last_view}).
    
    \item line~\ref{line:view_timer_measure}:
    In this case, $P_i$ processes an expiration event $\mathit{Exp}_d$ of $\mathit{dissemination\_timer}_i$ (line~\ref{line:sync_timer_expires}).
    By \Cref{lemma:expiration_after_invocation_dissemination}, the invocation $\mathit{Inv}_d$ of the $\mathsf{measure}(\cdot)$ method on $\mathit{dissemination\_timer}_i$ immediately precedes $\mathit{Exp}_d$ in $h_i$.
    Hence, $\mathit{Inv}_d$ follows $\mathit{Exp}_v$ in $h_i$ and $\mathit{Inv}_d$ could have been invoked either at line~\ref{line:sync_timer_measure} or at line~\ref{line:sync_timer_measure_2}.
    Just before invoking $\mathit{Inv}_d$, $P_i$ changes its $\mathit{epoch}_i$ variable to a value greater than $e$ (line~\ref{line:update_epoch_1} or line~\ref{line:update_epoch_2}; the value of $\mathit{epoch}_i$ only increases throughout the execution).
\end{compactitem}
Therefore, when $\mathit{Inv}'_v$ is invoked, $\mathit{epoch}_i > e$.
As the value of the $\mathit{epoch}_i$ variable only increases throughout the execution, $P_i$ broadcasts the \textsc{epoch-completed} messages for $e$ at most once (by \Cref{lemma:expiration_after_invocation_view}), which concludes the proof.
\end{proof}

The following lemma shows that a correct process sends (at most) $O(n)$ \textsc{enter-epoch} messages for a specific epoch $e$.

\begin{lemma} \label{lemma:enter_epoch_n}
For any epoch $e$ and any correct process $P_i$, $P_i$ sends at most $O(n)$ \textsc{enter-epoch} messages for $e$ (line~\ref{line:broadcast_epoch_over_complete}).
\end{lemma}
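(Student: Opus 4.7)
The plan is to show, in fact, the stronger statement that any correct process $P_i$ broadcasts an \textsc{enter-epoch} message for epoch $e$ \emph{at most once}, from which the $O(n)$ bound on individual point-to-point messages follows immediately since a broadcast expands to $n$ unicasts. So the proof will be short and rely on two already-established results: \Cref{lemma:enter_epoch_in_e} and \Cref{lemma:increasing_views}.

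First, I would observe that the only place in \Cref{algorithm:synchronizer} where an \textsc{enter-epoch} message is broadcast is line~\ref{line:broadcast_epoch_over_complete}, inside the handler triggered by the expiration of $\mathit{dissemination\_timer}_i$ (line~\ref{line:sync_timer_expires}). Suppose for contradiction that $P_i$ broadcasts an \textsc{enter-epoch} message for $e$ at two distinct times $t_1 < t_2$. By \Cref{lemma:enter_epoch_in_e}, $P_i$ enters epoch $e$ at both time $t_1$ and time $t_2$, i.e., it triggers $\mathsf{advance}(v)$ twice for $v = (e-1)(f+1)+1$, the first view of $e$ (line~\ref{line:start_view_2}). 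This, however, directly contradicts \Cref{lemma:increasing_views}, which states that the sequence of views entered by a correct process is strictly monotonically increasing and therefore any view is entered at most once. Hence $P_i$ broadcasts an \textsc{enter-epoch} message for $e$ at most once.

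Finally, since a single broadcast translates into at most $n$ point-to-point messages (one per destination), $P_i$ sends at most $n = O(n)$ \textsc{enter-epoch} messages for $e$, concluding the proof.

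There is no real obstacle here: the only subtlety is being explicit about the distinction between a broadcast event (of which there is at most one per epoch per correct process) and the individual point-to-point messages it generates (of which there are $n$). The entire argument is a one-line consequence of the two preceding lemmas, mirroring the structure of \Cref{lemma:epoch_completed_n} but in fact simpler, since there is no need to reason about subsequent invocations of $\mathsf{measure}(\cdot)$ or transitions of $\mathit{epoch}_i$: a single appeal to the monotonicity of entered views suffices.
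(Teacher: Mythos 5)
Your proof is correct, but it takes a genuinely different route from the paper's. The paper argues directly on the timer history of $\mathit{dissemination\_timer}_i$: it notes that any invocation of $\mathsf{measure}(\cdot)$ on that timer occurring after the first expiration that broadcast the \textsc{enter-epoch} message for $e$ must be preceded by an update of $\mathit{epoch}_i$ to a value strictly greater than $e$ (lines~\ref{line:update_epoch_1} and~\ref{line:update_epoch_2}), and then uses the monotonicity of $\mathit{epoch}_i$ together with \Cref{lemma:expiration_after_invocation_dissemination} to conclude that no later expiration can broadcast for $e$ again. You instead observe that, by \Cref{lemma:enter_epoch_in_e}, every broadcast of an \textsc{enter-epoch} message for $e$ is accompanied by an $\mathsf{advance}(\cdot)$ indication for the first view of $e$, so two broadcasts would force that view to be entered twice, contradicting \Cref{lemma:increasing_views}. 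Both arguments are sound and both lemmas you invoke are established before this one, so there is no circularity; your version is shorter because it delegates the timer-level bookkeeping to \Cref{lemma:increasing_views}, which has already absorbed that work, whereas the paper's version is self-contained at the level of the dissemination timer and mirrors the structure of the proof of \Cref{lemma:epoch_completed_n}. One small presentational nit: the contradiction does not actually require the two broadcast events to occur at \emph{distinct times} $t_1 < t_2$ (local steps take zero time, so in principle two handler executions could share a timestamp); what matters is that they are two distinct events in $\beta_i$, each immediately followed by an $\mathsf{advance}$ indication for the same view, and \Cref{lemma:increasing_views} is stated in terms of precedence in $\beta_i$, so the argument goes through unchanged if you phrase it in terms of two distinct broadcast events rather than two distinct times.
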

\begin{proof}
Let $\mathit{Exp}_d$ denote the first expiration event of $\mathit{dissemination\_timer}_i$ which $P_i$ processes (line~\ref{line:sync_timer_expires}) in order to broadcast the \textsc{enter-epoch} message for $e$ (line~\ref{line:broadcast_epoch_over_complete}); if $\mathit{Exp}_d$ does not exist, the lemma trivially holds.
When $\mathit{Exp}_d$ was processed, $\mathit{epoch}_i = e$.
Let $\mathit{Inv}'_d$ denote the first invocation of the $\mathsf{measure}(\cdot)$ method on $\mathit{dissemination\_timer}_i$ after the processing of $\mathit{Exp}_d$.
If $\mathit{Inv}'_d$ does not exist, there does not exist an expiration event of $\mathit{dissemination\_timer}_i$ processed after $\mathit{Exp}_d$ (by \Cref{lemma:expiration_after_invocation_dissemination}), which implies that the lemma trivially holds.

$\mathit{Inv}'_d$ could have been invoked either at line~\ref{line:sync_timer_measure} or at line~\ref{line:sync_timer_measure_2}.
However, before that (still after the processing of $\mathit{Exp}_d$), $P_i$ changes its $\mathit{epoch}_i$ variable to a value greater than $e$ (line~\ref{line:update_epoch_1} or line~\ref{line:update_epoch_2}).
Therefore, when $\mathit{Inv}'_d$ is invoked, $\mathit{epoch}_i > e$.
As the value of the $\mathit{epoch}_i$ variable only increases throughout the execution, $P_i$ broadcasts the \textsc{enter-epoch} messages for $e$ at most once (by \Cref{lemma:expiration_after_invocation_dissemination}), which concludes the proof.
\end{proof}

Next, we show that, after $\mathit{GST}$, two ``epoch-entering'' events are separated by at least $\delta$ time.

\begin{lemma} \label{lemma:within_delta}
Let $P_i$ be a correct process.
Let $P_i$ trigger $\mathsf{advance}(v)$ at time $t \geq \mathit{GST}$ and let $P_i$ trigger $\mathsf{advance}(v')$ at time $t'$ such that (1) $\mathsf{advance}(v) \stackrel{\beta_i}{\prec} \mathsf{advance}(v')$, and (2) $v$ (resp., $v'$) is the first view of an epoch $e$ (resp., $e'$).
Then, $t' \geq t + \delta$.
\end{lemma}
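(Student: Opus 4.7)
The plan is to trace the unique sequence of timer events that must occur between the two first-view advances, and then leverage the fact that the dissemination timer is measured with duration $\delta$. By \Cref{lemma:increasing_views}, $v' > v$, and since $v$ and $v'$ are the first views of epochs $e$ and $e'$ respectively, this forces $e' > e$. Moreover, $\mathsf{advance}(v')$ cannot be triggered at line~\ref{line:start_view_1}, because that line is executed only once (at $\mathsf{init}$) and $\mathsf{advance}(v)$ already precedes $\mathsf{advance}(v')$ in $\beta_i$. Therefore $\mathsf{advance}(v')$ is triggered at line~\ref{line:start_view_2}, i.e.\ as a consequence of processing an expiration event $\mathit{Exp}_d$ of $\mathit{dissemination\_timer}_i$ (line~\ref{line:sync_timer_expires}) at time $t'$.

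Next I would invoke \Cref{lemma:expiration_after_invocation_dissemination} to locate the invocation $\mathit{Inv}_d$ of the $\mathsf{measure}(\delta)$ method that produces $\mathit{Exp}_d$: it is the event that \emph{immediately} precedes $\mathit{Exp}_d$ in $h_i$. This invocation must occur either at line~\ref{line:sync_timer_measure} or at line~\ref{line:sync_timer_measure_2}, and is in turn immediately preceded (within the same zero-time block of local steps) by an update of $\mathit{epoch}_i$ at line~\ref{line:update_epoch_1} or line~\ref{line:update_epoch_2}. Because no other measure or cancel event intervenes between $\mathit{Inv}_d$ and $\mathit{Exp}_d$, and because the rules that trigger line~\ref{line:start_view_2} set $\mathit{view}_i \gets 1$ (line~\ref{line:reset_view}) with $\mathit{epoch}_i$ equal to the value installed at that update, the value of $\mathit{epoch}_i$ at the moment of $\mathit{Inv}_d$ must equal the value at the moment of $\mathsf{advance}(v')$, namely $e'$.

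The key step is pinning down the real time $t_u$ at which $\mathit{Inv}_d$ is invoked. At time $t$, just before $\mathsf{advance}(v)$ fires for the first view of epoch $e$, the variable $\mathit{epoch}_i$ has value $e$ (regardless of whether line~\ref{line:start_view_1} or line~\ref{line:start_view_2} is responsible for this $\mathsf{advance}$). Since $\mathit{epoch}_i$ is modified only at lines~\ref{line:update_epoch_1} and~\ref{line:update_epoch_2}, and these lines always make it strictly larger, the update that brings $\mathit{epoch}_i$ up to $e' > e$ cannot have occurred strictly before $t$ in $\beta_i$; it happens at some time $t_u \geq t$. The $\mathsf{measure}(\delta)$ invocation $\mathit{Inv}_d$ occurs in the same zero-time block as this update, so $\mathit{Inv}_d$ also happens at real time $t_u \geq t$.

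Finally, I would close with the clock argument: since $t_u \geq t \geq \mathit{GST}$, the local clock of $P_i$ does not drift over the interval $[t_u, t_u+\delta]$, so $\mathit{Exp}_d$ is produced exactly $\delta$ real time after $\mathit{Inv}_d$, yielding $t' = t_u + \delta \geq t + \delta$. The main technical obstacle is the bookkeeping in the second paragraph—guaranteeing that $\mathit{epoch}_i$ takes the value $e'$ already at the moment $\mathit{Inv}_d$ is invoked—which hinges on the ``immediately precedes'' guarantee of \Cref{lemma:expiration_after_invocation_dissemination} and the fact that the $\mathsf{measure}(\delta)$ step at lines~\ref{line:sync_timer_measure} and~\ref{line:sync_timer_measure_2} is the very step that follows the update; once this is in hand, the remaining monotonicity and clock arguments are short.
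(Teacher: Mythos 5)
Your proof is correct and follows essentially the paper's argument: both hinge on \Cref{lemma:expiration_after_invocation_dissemination} to tie the processed $\mathit{dissemination\_timer}_i$ expiration to the $\mathsf{measure}(\delta)$ invocation that produced it, and then use the absence of clock drift after $\mathit{GST}$ to obtain the $\delta$ gap. The only minor difference is how the invocation is placed at time $\geq t$: you use the strict monotonicity of $\mathit{epoch}_i$ together with $e' > e$, whereas the paper observes that the invocation must follow the $\mathit{view\_timer}_i$ measure immediately preceding $\mathsf{advance}(v)$ (otherwise that measure would intervene in $h_i$), and it reduces to the first epoch-entering event after $\mathsf{advance}(v)$ rather than arguing about $\mathsf{advance}(v')$ directly.
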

\begin{proof}
Let $\mathsf{advance}(v^*)$, where $v^*$ is the first view of an epoch $e^*$, be the first ``epoch-entering'' event following $\mathsf{advance}(v)$ in $\beta_i$ (i.e., $\mathsf{advance}(v) \stackrel{\beta_i}{\prec} \mathsf{advance}(v^*)$); let $\mathsf{advance}(v^*)$ be triggered at time $t^*$.
In order to prove the lemma, it suffices to show that $t^* \geq t + \delta$.

The $\mathsf{advance}(v^*)$ upcall is triggered at line~\ref{line:start_view_2}.
Let $\mathit{Exp}_d$ denote the processed expiration event of $\mathit{dissemination\_timer}_i$ (line~\ref{line:sync_timer_expires}) which leads $P_i$ to trigger $\mathsf{advance}(v^*)$.
Let $\mathit{Inv}_d$ denote the invocation of the $\mathsf{measure}(\delta)$ on $\mathit{dissemination\_timer}_i$ that has produces $\mathit{Exp}_d$.
By \Cref{lemma:expiration_after_invocation_dissemination}, $\mathit{Inv}_d$ immediately precedes $\mathit{Exp}_d$ in the timer history $h_i$ of $P_i$.
Note that $\mathit{Inv}_d$ was invoked after $P_i$ has entered $e$ (this follows from \Cref{lemma:expiration_after_invocation_dissemination} and the fact that $P_i$ enters $e$ after invoking $\mathsf{measure}(\cdot)$ on $\mathit{view\_timer}_i$), which means that $\mathit{Inv}_d$ was invoked at some time $\geq t \geq \mathit{GST}$.
As local clocks do not drift after $\mathit{GST}$, $\mathit{Exp}_d$ is processed at some time $\geq t + \delta$, which concludes the proof.
\end{proof}

Next, we define $t_s$ as the first synchronization time at or after $\mathit{GST}$.

\begin{definition} \label{definition:t_s}
We denote by $t_s$ the first synchronization time at or after $\mathit{GST}$ (i.e., $t_s \geq \mathit{GST}$).
\end{definition}

The next lemma shows that no correct process enters any epoch greater than $e_{\mathit{final}}$ by $t_s + \Delta$.
This lemma is the consequence of \Cref{lemma:no_epoch_over_for_epoch_duration} and \Cref{theorem:termination}.

\begin{lemma} \label{lemma:no_process_starts_emax_1}
No correct process enters an epoch greater than $e_{\mathit{final}}$ by time $t_s + \Delta$. 
\end{lemma}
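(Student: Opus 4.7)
The plan is to prove this by contradiction, leveraging two results already established: the upper bound on $t_s$ from the eventual synchronization theorem, and the lower bound on when \textsc{epoch-completed} messages for epochs $\geq e_{\mathit{final}}$ can first be sent from \Cref{lemma:no_epoch_over_for_epoch_duration}.

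First I would establish a numerical upper bound on $t_s + \Delta$. By \Cref{theorem:termination}, we have $t_s \leq t_{e_{\mathit{final}}} + f \cdot \mathit{view\_duration} + 2\delta$, so
\[
t_s + \Delta \;\leq\; t_{e_{\mathit{final}}} + f \cdot \mathit{view\_duration} + 2\delta + \Delta.
\]
Since $\mathit{view\_duration} = \Delta + 2\delta + \epsilon$ for some $\epsilon > 0$ (as noted in the paragraph following \Cref{definition:final}), we have $2\delta + \Delta = \mathit{view\_duration} - \epsilon$. Substituting yields
\[
t_s + \Delta \;\leq\; t_{e_{\mathit{final}}} + (f+1) \cdot \mathit{view\_duration} - \epsilon \;=\; t_{e_{\mathit{final}}} + \mathit{epoch\_duration} - \epsilon \;<\; t_{e_{\mathit{final}}} + \mathit{epoch\_duration}.
\]

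Next, suppose by contradiction that some correct process $P_i$ enters an epoch $e > e_{\mathit{final}}$ at some time $t \leq t_s + \Delta$. Since $P_i$ enters $e$ (either at line~\ref{line:start_view_1} or line~\ref{line:start_view_2}), we have $\mathit{epoch}_i = e > e_{\mathit{final}} \geq 1$ at that point, so $P_i$ must have set $\mathit{epoch}_i$ to some value $e'$ with $e' > e_{\mathit{final}}$ at some time $t' \leq t$ by executing either line~\ref{line:update_epoch_1} or line~\ref{line:update_epoch_2}. By the reasoning of \Cref{lemma:epoch_update_previous_epoch_entered}, in either case at least $f + 1$ correct processes must have broadcast \textsc{epoch-completed} messages (line~\ref{line:broadcast_epoch_over}) for some epoch $e'' \geq e_{\mathit{final}}$ by time $t' \leq t_s + \Delta$.

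The contradiction then follows directly: by the bound established in the first paragraph, $t_s + \Delta < t_{e_{\mathit{final}}} + \mathit{epoch\_duration}$, so some correct process would have sent an \textsc{epoch-completed} message for an epoch $\geq e_{\mathit{final}}$ strictly before time $t_{e_{\mathit{final}}} + \mathit{epoch\_duration}$, contradicting \Cref{lemma:no_epoch_over_for_epoch_duration}. This entire argument is essentially mechanical; the only subtle point is remembering to use $\mathit{view\_duration} = \Delta + 2\delta + \epsilon$ (not $\Delta + 2\delta$) to get the strict inequality needed to apply \Cref{lemma:no_epoch_over_for_epoch_duration}, so the main obstacle is merely keeping the constant arithmetic clean.
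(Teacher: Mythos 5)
Your proof is correct and takes essentially the same route as the paper: combine the bound $t_s + \Delta < t_{e_{\mathit{final}}} + \mathit{epoch\_duration}$ from \Cref{theorem:termination} with \Cref{lemma:no_epoch_over_for_epoch_duration}. The only difference is that you explicitly unfold why entering an epoch greater than $e_{\mathit{final}}$ forces correct processes to have broadcast \textsc{epoch-completed} messages for an epoch $\geq e_{\mathit{final}}$ (via the argument of \Cref{lemma:epoch_update_previous_epoch_entered}), a step the paper leaves implicit.
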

\begin{proof}
By \Cref{lemma:no_epoch_over_for_epoch_duration}, no correct process enters an epoch $> e_{\mathit{final}}$ before time $t_{e_{\mathit{final}}} + \mathit{epoch\_duration}$.
By \Cref{theorem:termination}, we have that $t_s < t_{e_{\mathit{final}}} + \mathit{epoch\_duration} - \Delta$, which implies that $t_{e_{\mathit{final}}} + \mathit{epoch\_duration} > t_s + \Delta$.
Hence, the lemma.
\end{proof}

Next, we define $e_{\mathit{max}}$ as the greatest epoch entered by a correct process before time $\mathit{GST}$.
Note that $e_{\mathit{max}}$ is properly defined in any execution as only finite executions are possible until $\mathit{GST}$.

\begin{definition} \label{definition:max}
We denote by $e_{\mathit{max}}$ the greatest epoch entered by a correct process before $\mathit{GST}$.
If no such epoch exists, $e_{\mathit{max}} = 0$.
\end{definition}

The next lemma shows that $e_{\mathit{final}}$ (\Cref{definition:final}) is $e_{\mathit{max}} + 1$.

\begin{lemma} \label{lemma:e_final_e_max}
$e_{\mathit{final}} = e_{\mathit{max}} + 1$.
\end{lemma}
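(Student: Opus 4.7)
The plan is to establish the two inequalities $e_{\mathit{final}} > e_{\mathit{max}}$ and $e_{\mathit{final}} \leq e_{\mathit{max}} + 1$ separately, relying on the monotonic structure of epoch entries established in Lemma~\ref{lemma:epoch_previously_started} and the liveness guarantee of Lemma~\ref{lemma:no_decision_constantly_next_epoch}.

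For the lower bound $e_{\mathit{final}} > e_{\mathit{max}}$, I will assume $e_{\mathit{max}} \geq 1$ (the case $e_{\mathit{max}} = 0$ is trivial since $e_{\mathit{final}} \geq 1$ by definition). By definition of $e_{\mathit{max}}$, some correct process enters $e_{\mathit{max}}$ at a time $t_{e_{\mathit{max}}} < \mathit{GST}$. I then invoke Lemma~\ref{lemma:epoch_previously_started} inductively (or equivalently, iterate it) to conclude that for every epoch $e' \in \{1, \ldots, e_{\mathit{max}} - 1\}$, epoch $e'$ is first entered by a correct process at some time $t_{e'} \leq t_{e_{\mathit{max}}} < \mathit{GST}$. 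Thus every epoch $e \leq e_{\mathit{max}}$ has $t_e < \mathit{GST}$, and by the definition of $e_{\mathit{final}}$ (which requires $t_{e_{\mathit{final}}} \geq \mathit{GST}$), we must have $e_{\mathit{final}} > e_{\mathit{max}}$.

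For the upper bound $e_{\mathit{final}} \leq e_{\mathit{max}} + 1$, I will use Lemma~\ref{lemma:no_decision_constantly_next_epoch} to assert that epoch $e_{\mathit{max}} + 1$ is eventually entered by some correct process; let $t_{e_{\mathit{max}}+1}$ denote the first time this occurs. By the definition of $e_{\mathit{max}}$ as the \emph{greatest} epoch entered before $\mathit{GST}$, no correct process enters $e_{\mathit{max}} + 1$ before $\mathit{GST}$, so $t_{e_{\mathit{max}}+1} \geq \mathit{GST}$. Since $e_{\mathit{final}}$ is the smallest epoch whose first entry by a correct process occurs at or after $\mathit{GST}$, it follows that $e_{\mathit{final}} \leq e_{\mathit{max}} + 1$. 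Combined with the previous inequality, $e_{\mathit{final}} = e_{\mathit{max}} + 1$.

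The only mildly delicate point is the $e_{\mathit{max}} = 0$ base case: I should verify that $e_{\mathit{final}} = 1$ in this situation. This holds because every correct process starts executing \rare by time $\mathit{GST}$ and immediately triggers $\mathsf{advance}(1)$ (line~\ref{line:start_view_1}), so epoch $1$ is entered by a correct process; since $e_{\mathit{max}} = 0$ means no correct process entered any epoch before $\mathit{GST}$, the first entry of epoch $1$ occurs at or after $\mathit{GST}$, giving $e_{\mathit{final}} = 1 = e_{\mathit{max}} + 1$. No step of this argument is technically hard; the main obstacle, if any, is being careful that Lemma~\ref{lemma:epoch_previously_started} is applied correctly (it guarantees the previous epoch was entered \emph{by} time $t$, not strictly before), which suffices to transfer the ``$< \mathit{GST}$'' bound down the chain of epochs.
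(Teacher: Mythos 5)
Your proof is correct, and its overall decomposition (trivial base case $e_{\mathit{max}} = 0$, then the two inequalities) mirrors the paper's. The lower bound $e_{\mathit{final}} > e_{\mathit{max}}$ is argued exactly as in the paper: iterating \Cref{lemma:epoch_previously_started} downward from $e_{\mathit{max}}$ shows that every epoch $\leq e_{\mathit{max}}$ is first entered before $\mathit{GST}$, so none of them can be $e_{\mathit{final}}$. Where you diverge is the upper bound. The paper proves $e_{\mathit{final}} \leq e_{\mathit{max}} + 1$ by contradiction: assuming $e_{\mathit{final}} > e_{\mathit{max}} + 1$, it applies \Cref{lemma:epoch_previously_started} once to $e_{\mathit{final}}$ to obtain an epoch $e_{\mathit{final}} - 1 \geq e_{\mathit{max}} + 1$ whose first entry lies in $[\mathit{GST}, t_{e_{\mathit{final}}}]$, contradicting the minimality in \Cref{definition:final}. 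You instead argue directly: \Cref{lemma:no_decision_constantly_next_epoch} guarantees that epoch $e_{\mathit{max}} + 1$ is eventually entered, and by maximality of $e_{\mathit{max}}$ its first entry is at or after $\mathit{GST}$, so the smallest epoch with that property satisfies $e_{\mathit{final}} \leq e_{\mathit{max}} + 1$. Both routes are sound; yours has the small advantage of simultaneously re-deriving the existence of $e_{\mathit{final}}$ (the paper invokes \Cref{lemma:no_decision_constantly_next_epoch} for that purpose only in the remark following \Cref{definition:final}), while the paper's contradiction argument avoids appealing to the liveness lemma inside the proof itself. Your treatment of the $e_{\mathit{max}} = 0$ case is also more explicit than the paper's one-line dismissal, and correctly observes that $e_{\mathit{max}} = 0$ forces the first entry of epoch $1$ to occur at or after $\mathit{GST}$.
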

\begin{proof}
If $e_{\mathit{max}} = 0$, then $e_{\mathit{final}} = 1$. 
Hence, let $e_{\mathit{max}} > 0$ in the rest of the proof.

By the definitions of $e_{\mathit{final}}$ (\Cref{definition:final}) and $e_{\mathit{max}}$ (\Cref{definition:max}) and by \Cref{lemma:epoch_previously_started}, $e_{\mathit{final}} \geq e_{\mathit{max}} + 1$.
Therefore, we need to prove that $e_{\mathit{final}} \leq e_{\mathit{max}} + 1$.

By contradiction, suppose that $e_{\mathit{final}} > e_{\mathit{max}} + 1$.
By \Cref{lemma:epoch_previously_started}, epoch $e_{\mathit{final}} - 1$ was entered by the first correct process at some time $t_{\mathit{prev}} \leq t_{e_\mathit{final}}$.
Note that $e_{\mathit{final}} - 1 \geq e_{\mathit{max}} + 1$.
Moreover, $t_{\mathit{prev}} \geq \mathit{GST}$; otherwise, we would contradict the definition of $e_{\mathit{max}}$. 
Thus, the first new epoch to be entered by a correct process at or after $\mathit{GST}$ is not $e_{\mathit{final}}$, i.e., we contradict \Cref{definition:final}.
Hence, the lemma holds.
\end{proof}

Next, we show that every correct process enters epoch $e_{\mathit{max}}$ by time $\mathit{GST} + 2\delta$ or epoch $e_{\mathit{final}} = e_{\mathit{max}} + 1$ by time $\mathit{GST} + 3\delta$.

\begin{lemma} \label{lemma:3delta}
Every correct process (1) enters epoch $e_{\mathit{max}}$ by $\mathit{GST} + 2\delta$, or (2) enters epoch $ e_{\mathit{max}} + 1$ by $\mathit{GST} + 3\delta$.
\end{lemma}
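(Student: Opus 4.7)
The plan is to case-split on $e_{\mathit{max}}$ and on $P_i$'s state at $\mathit{GST}$. Three cases are immediate: if $e_{\mathit{max}} = 0$, every correct process enters epoch $1 = e_{\mathit{max}} + 1$ by $\mathit{GST}$ (from the model assumption that all correct processes start by $\mathit{GST}$ together with line~\ref{line:start_view_1}), giving (2); if $e_{\mathit{max}} = 1$, the same reasoning gives (1); and if $e_{\mathit{max}} \geq 2$ while $P_i$ itself enters $e_{\mathit{max}}$ before $\mathit{GST}$, (1) is trivial. The substantive case is $e_{\mathit{max}} \geq 2$ with $P_i$ not yet in $e_{\mathit{max}}$ at $\mathit{GST}$, which I tackle in two more steps.

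For the substantive case, I would first exploit the witness $P_j$ that entered $e_{\mathit{max}}$ before $\mathit{GST}$: since entering happens only after broadcasting an \textsc{enter-epoch} at line~\ref{line:broadcast_epoch_over_complete}, $P_j$ broadcasts that message before $\mathit{GST}$, so by network reliability and the $\delta$-delay bound after $\mathit{GST}$, $P_i$ receives it by $\mathit{GST} + \delta$; the rule at line~\ref{line:receive_epoch_over_complete} then guarantees that $\mathit{epoch}_i \geq e_{\mathit{max}}$ from time $\mathit{GST} + \delta$ onward. Next I would establish a ceiling claim: no correct process has $\mathit{epoch}_i > e_{\mathit{max}} + 1$ at any time before $\mathit{GST} + 3\delta$. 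Otherwise, inspecting the update rules at lines~\ref{line:receive_epoch_over} and~\ref{line:receive_epoch_over_complete} together with the $(2f{+}1,n)$-threshold scheme shows that at least one correct process must have broadcast an \textsc{epoch-completed} message for $e_{\mathit{final}} = e_{\mathit{max}} + 1$ before $\mathit{GST} + 3\delta$, contradicting \Cref{lemma:no_epoch_over_for_epoch_duration} since $t_{e_\mathit{final}} + \mathit{epoch\_duration} \geq \mathit{GST} + \mathit{epoch\_duration} > \mathit{GST} + 3\delta$.

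The final step tracks $\mathit{dissemination\_timer}_i$. The key observation is that any $\mathsf{measure}(\delta)$ invocation at time $\tau$ (lines~\ref{line:sync_timer_measure} and~\ref{line:sync_timer_measure_2}) produces an expiration event by time $\max(\tau, \mathit{GST}) + \delta$: clocks do not drift after $\mathit{GST}$, so a timer running at $\mathit{GST}$ has at most $\delta$ of local-clock time remaining and thus fires by $\mathit{GST} + \delta$. Combining this with the two preceding steps, the timer's last (re)start before the firing that causes $P_i$ to enter its first epoch $\geq e_{\mathit{max}}$ occurs at some time $\tau^* \leq \mathit{GST} + 2\delta$ with $\mathit{epoch}_i \in \{e_{\mathit{max}}, e_{\mathit{max}} + 1\}$, so the timer fires by $\mathit{GST} + 3\delta$ and $P_i$ enters the corresponding epoch via line~\ref{line:start_view_2}; moreover, whenever the final value of $\mathit{epoch}_i$ is $e_{\mathit{max}}$, the last start occurs by $\mathit{GST} + \delta$ and the firing time is $\leq \mathit{GST} + 2\delta$, giving (1). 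The hard part will be the bookkeeping around a possible single ``bounce'' of $\mathit{epoch}_i$ from $e_{\mathit{max}}$ to $e_{\mathit{max}} + 1$ that restarts the timer: I would need to show that such an update, if it happens, occurs by $\mathit{GST} + 2\delta$ (either because an earlier timer was set by $\mathit{GST} + \delta$ and its pending expiration forces any restart within $\delta$, or because $\mathit{epoch}_i$ jumps directly to $e_{\mathit{max}} + 1$ from an \textsc{enter-epoch} message received by $\mathit{GST} + \delta$), so that the restarted $\delta$-timer still fires within the $\mathit{GST} + 3\delta$ budget.
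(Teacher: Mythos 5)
Your proposal is correct and follows essentially the same route as the paper's proof: a floor obtained from the \textsc{enter-epoch} message for $e_{\mathit{max}}$ arriving by $\mathit{GST}+\delta$, a ceiling $\mathit{epoch}_i \leq e_{\mathit{max}}+1$ obtained from \Cref{lemma:no_epoch_over_for_epoch_duration} via $e_{\mathit{final}} = e_{\mathit{max}}+1$, and a case analysis on $\mathit{dissemination\_timer}_i$ with at most one cancellation. The ``bounce'' bookkeeping you flag as the hard part is resolved exactly as you sketch---any cancellation must precede the pending timer's expiration and hence occurs by $\mathit{GST}+2\delta$, after which the restarted $\delta$-timer cannot be cancelled again before $\mathit{GST}+3\delta$---which is precisely the paper's second scenario.
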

\begin{proof}
\Cref{lemma:e_final_e_max} shows that $e_{\mathit{final}}$ is $e_{\mathit{max}} + 1$.
Recall that $t_{e_\mathit{final}} \geq \mathit{GST}$.
Consider a correct process $P_i$.
If $e_{\mathit{max}} = 1$ (resp., $e_{\mathit{final}} = 1$), then $P_i$ enters $e_{\mathit{max}}$ (resp., $e_{\mathit{final}}$) by time $\mathit{GST}$, which concludes the lemma.
Hence, let $e_{\mathit{max}} > 1$; thus, $e_{\mathit{final}} > 1$ by \Cref{lemma:e_final_e_max}.

    
    
\Cref{lemma:no_epoch_over_for_epoch_duration} proves that no correct process broadcasts an \textsc{epoch-completed} message for an epoch $\geq e_{\mathit{max}} + 1$ before time $t_{e_\mathit{final}} + \mathit{epoch\_duration} \geq \mathit{GST} + \mathit{epoch\_duration}$.
    
By time $\mathit{GST} + \delta$, every correct process $P_i$ receives an \textsc{enter-epoch} message for epoch $e_{\mathit{max}} > 1$ (line~\ref{line:receive_epoch_over_complete}) sent by the correct process which has entered $e_{\mathit{max}}$ before $\mathit{GST}$ (the message is sent at line~\ref{line:broadcast_epoch_over_complete}).
Therefore, by time $\mathit{GST} + \delta$, $\mathit{epoch}_i$ is either $e_{\mathit{max}}$ or $e_{\mathit{max}} + 1$; note that $\mathit{epoch}_i$ cannot take a value greater than $e_{\mathit{max}} + 1$ before time $\mathit{GST} + \mathit{epoch\_duration} > \mathit{GST} + \delta$ since no correct process broadcasts an \textsc{epoch-completed} message for an epoch $\geq e_{\mathit{max}} + 1$ before this time.

Let us consider both scenarios:
\begin{compactitem}
    \item Let $\mathit{epoch}_i = e_{\mathit{max}} + 1$ by time $\mathit{GST} + \delta$.
    In this case, $\mathit{dissemination\_timer}_i$ expires in $\delta$ time (line~\ref{line:sync_timer_expires}), and $P_i$ enters $e_{\mathit{max}} + 1$ by time $\mathit{GST} + 2\delta$ (line~\ref{line:start_view_2}) as $\mathit{GST} + \mathit{epoch\_duration} > \mathit{GST} + 2\delta$.
    Hence, the statement of the lemma is satisfied in this case.
    
    \item Let $\mathit{epoch}_i = e_{\mathit{max}}$ by time $\mathit{GST} + \delta$.
    If, within $\delta$ time from updating $\mathit{epoch}_i$ to $e_{\mathit{max}}$, $P_i$ does not cancel its $\mathit{dissemination\_timer}_i$, $\mathit{dissemination\_timer}_i$ expires (line~\ref{line:rule_view_expired}), and $P_i$ enters $e_{\mathit{max}}$ by time $\mathit{GST} + 2\delta$.
    
    Otherwise, $\mathit{epoch}_i = e_{\mathit{max}} + 1$ by time $\mathit{GST} + 2\delta$ as $\mathit{dissemination\_timer}_i$ was canceled; $\mathit{epoch}_i$ cannot take any other value as \textsc{epoch-completed} messages are not broadcast before time $\mathit{GST} + \mathit{epoch\_duration} > \mathit{GST} + 2\delta$.
    As in the previous case, $\mathit{dissemination\_timer}_i$ expires in $\delta$ time (line~\ref{line:sync_timer_expires}), and $P_i$ enters $e_{\mathit{max}} + 1$ by time $\mathit{GST} + 3\delta$ (line~\ref{line:start_view_2}) as $\mathit{GST} + \mathit{epoch\_duration} > \mathit{GST} + 3\delta$.
    Hence, the statement of the lemma holds in this case, as well.
\end{compactitem}
Since the lemma is satisfied in both possible scenarios, the proof is concluded.
\end{proof}

The direct consequence of \Cref{lemma:e_final_e_max} is that $t_{e_\mathit{final}} \leq \mathit{GST} + \mathit{epoch\_duration} + 4\delta$.

\begin{lemma} \label{lemma:delta}
$t_{e_\mathit{final}} \leq \mathit{GST} + \mathit{epoch\_duration} + 4\delta$.
\end{lemma}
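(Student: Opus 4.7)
My plan is to proceed by case analysis on the behavior of correct processes during $[\mathit{GST}, \mathit{GST}+3\delta]$, using Lemma~\ref{lemma:3delta} as the starting point. If some correct process already enters $e_{\mathit{max}}+1 = e_{\mathit{final}}$ by time $\mathit{GST}+3\delta$, then by definition of $t_{e_\mathit{final}}$ we are done, since $\mathit{GST}+3\delta \leq \mathit{GST}+\mathit{epoch\_duration}+4\delta$. Otherwise, Lemma~\ref{lemma:3delta} forces every correct process to enter $e_{\mathit{max}}$ by time $\mathit{GST}+2\delta$, and this is the case I need to work with.

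In that case, I would track the life of each correct process $P_i$ inside $e_{\mathit{max}}$. Let $t_i \in [\mathit{GST}, \mathit{GST}+2\delta]$ be the time $P_i$ enters $e_{\mathit{max}}$. If $P_i$ advances to a strictly greater epoch at any time $t' \leq \mathit{GST}+\mathit{epoch\_duration}+4\delta$, then by Lemma~\ref{lemma:no_epoch_over_for_epoch_duration} combined with Lemma~\ref{lemma:e_final_e_max} that epoch must be exactly $e_{\mathit{final}}$ (no higher epoch can be entered this soon), and again we are done because $t_{e_\mathit{final}} \leq t' \leq \mathit{GST}+\mathit{epoch\_duration}+4\delta$. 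So I may assume that no correct process leaves $e_{\mathit{max}}$ before time $\mathit{GST}+\mathit{epoch\_duration}+4\delta$. Under this assumption, Lemma~\ref{lemma:epoch_completed_time} applies to every correct $P_i$: it broadcasts $\langle \textsc{epoch-completed}, e_{\mathit{max}}, \cdot\rangle$ at time exactly $t_i + \mathit{epoch\_duration} \leq \mathit{GST}+\mathit{epoch\_duration}+2\delta$.

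From here the argument is the familiar chain: after $\mathit{GST}$ all messages are delivered within $\delta$, so by time $\mathit{GST}+\mathit{epoch\_duration}+3\delta$ every correct process has received $2f+1$ such \textsc{epoch-completed} messages for $e_{\mathit{max}}$, which triggers the rule at line~\ref{line:receive_epoch_over}, sets $\mathit{epoch}_i$ to $e_{\mathit{max}}+1 = e_{\mathit{final}}$ (line~\ref{line:update_epoch_1}), and invokes $\mathsf{measure}(\delta)$ on $\mathit{dissemination\_timer}_i$ (line~\ref{line:sync_timer_measure}). Since local clocks do not drift after $\mathit{GST}$ and the standing assumption prevents any cancellation before $\mathit{GST}+\mathit{epoch\_duration}+4\delta$, the timer expires within $\delta$, and line~\ref{line:start_view_2} is executed by time $\mathit{GST}+\mathit{epoch\_duration}+4\delta$. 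At that moment the process enters $e_{\mathit{final}}$, contradicting the standing assumption and yielding $t_{e_\mathit{final}} \leq \mathit{GST}+\mathit{epoch\_duration}+4\delta$ in every subcase.

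The main obstacle I anticipate is the circularity in the second case: I want to conclude that processes stay in $e_{\mathit{max}}$ long enough for the \textsc{epoch-completed} mechanism to fire, but ``staying in $e_{\mathit{max}}$'' is exactly what Lemma~\ref{lemma:epoch_completed_time} requires. I plan to break the circularity by phrasing the argument as: either some process leaves $e_{\mathit{max}}$ early (into $e_{\mathit{final}}$, by Lemma~\ref{lemma:no_epoch_over_for_epoch_duration}), in which case we win immediately, or none does, in which case Lemma~\ref{lemma:epoch_completed_time} gives a clean deterministic timing and the chain of events above proceeds uninterrupted. The remaining bookkeeping — verifying that $\mathit{dissemination\_timer}_i$ is not cancelled between being set and firing, and that no process receives an \textsc{enter-epoch} message for an epoch beyond $e_{\mathit{final}}$ this early (ruled out by Lemma~\ref{lemma:no_epoch_over_for_epoch_duration} via Lemma~\ref{lemma:epoch_update_previous_epoch_entered}) — should be routine.
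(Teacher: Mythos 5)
Your overall route is the same as the paper's: start from \Cref{lemma:3delta}, dispose of the case where some correct process reaches $e_{\mathit{final}} = e_{\mathit{max}}+1$ early, and otherwise run the chain \textsc{epoch-completed} broadcasts $\to$ $2f+1$ receptions (line~\ref{line:receive_epoch_over}) $\to$ $\delta$-wait on $\mathit{dissemination\_timer}$ $\to$ entry into $e_{\mathit{final}}$ by $\mathit{GST}+\mathit{epoch\_duration}+4\delta$; the paper phrases this as a contradiction while you phrase it contrapositively, which is immaterial. (In your ``early mover'' subcase, the cleaner justification that $t_{e_{\mathit{final}}} \leq t'$ is \Cref{lemma:epoch_previously_started}, since invoking \Cref{lemma:no_epoch_over_for_epoch_duration} there is slightly circular — its bound is stated relative to the very quantity $t_{e_{\mathit{final}}}$ you are trying to bound — but the conclusion you need holds either way.)

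There is, however, one concrete flaw in the main case: you assert that every correct process enters $e_{\mathit{max}}$ at a time $t_i \in [\mathit{GST}, \mathit{GST}+2\delta]$ and then apply \Cref{lemma:epoch_completed_time} to get a broadcast at exactly $t_i + \mathit{epoch\_duration}$. \Cref{lemma:3delta} only says ``by $\mathit{GST}+2\delta$'', not ``at or after $\mathit{GST}$'', and by the definition of $e_{\mathit{max}}$ (\Cref{definition:max}) at least one correct process entered $e_{\mathit{max}}$ strictly \emph{before} $\mathit{GST}$ whenever $e_{\mathit{max}} \geq 1$, which is exactly the nontrivial case. For such a process the hypothesis $t_e \geq \mathit{GST}$ of \Cref{lemma:epoch_completed_time} fails, and its exact-timing conclusion is false under pre-$\mathit{GST}$ clock drift. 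The deadline you need still holds, but it requires a separate observation: a correct process already inside $e_{\mathit{max}}$ at $\mathit{GST}$ has at most $f+1$ views left, each with at most $\mathit{view\_duration}$ remaining on a now drift-free clock, so (its timers never being cancelled, by your standing assumption) it broadcasts its \textsc{epoch-completed} message for $e_{\mathit{max}}$ (line~\ref{line:broadcast_epoch_over}) by $\mathit{GST}+\mathit{epoch\_duration} \leq \mathit{GST}+\mathit{epoch\_duration}+2\delta$, while processes entering $e_{\mathit{max}}$ at or after $\mathit{GST}$ are covered by \Cref{lemma:epoch_completed_time} as you say. The paper sidesteps this by asserting the broadcast deadline $\mathit{GST}+\mathit{epoch\_duration}+2\delta$ directly rather than pinning entry times to $[\mathit{GST}, \mathit{GST}+2\delta]$. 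With that repair, the remainder of your argument goes through as in the paper.
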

\begin{proof}
By contradiction, let $t_{e_\mathit{final}} > \mathit{GST} + \mathit{epoch\_duration} + 4\delta$.
\Cref{lemma:3delta} proves that every correct process enters epoch $e_{\mathit{max}}$ by time $\mathit{GST} + 2\delta$ or epoch $e_{\mathit{final}} = e_{\mathit{max}} + 1$ by time $\mathit{GST} + 3\delta$.
Additionally, \Cref{lemma:no_epoch_over_for_epoch_duration} proves that no correct process broadcasts an \textsc{epoch-completed} message for an epoch $\geq e_{\mathit{final}}$ (line~\ref{line:broadcast_epoch_over}) before time $t_{e_\mathit{final}} + \mathit{epoch\_duration} > \mathit{GST} + 2\cdot\mathit{epoch\_duration} + 4\delta$.

If any correct process enters $e_{\mathit{max}} + 1$ by time $\mathit{GST} + 3\delta$, we reach a contradiction with the fact that $t_{e_\mathit{final}} > \mathit{GST} + \mathit{epoch\_duration} + 4\delta$ since $e_{\mathit{final}} = e_{\mathit{max}} + 1$ (by \Cref{lemma:e_final_e_max}).
Therefore, all correct processes enter $e_{\mathit{max}}$ by time $\mathit{GST} + 2\delta$.

Since $t_{e_{\mathit{final}}} > \mathit{GST} + \mathit{epoch\_duration} + 4\delta$, no correct process $P_i$ updates its $\mathit{epoch}_i$ variable to $e_{\mathit{max}} + 1$ (at line~\ref{line:update_epoch_1} or line~\ref{line:update_epoch_2}) by time $\mathit{GST} + \mathit{epoch\_duration} + 3\delta$ (otherwise, $P_i$ would have entered $e_{\mathit{max}} + 1$ by time $\mathit{GST} + \mathit{epoch\_duration} + 4\delta$, which contradicts $t_{e_{\mathit{final}}} > \mathit{GST} + \mathit{epoch\_duration} + 4\delta$).
By time $\mathit{GST} + \mathit{epoch\_duration} + 2\delta$, all correct processes broadcast an \textsc{epoch-completed} message for $e_{\mathit{max}}$ (line~\ref{line:broadcast_epoch_over}).
By time $\mathit{GST} + \mathit{epoch\_duration} + 3\delta$, every correct process $P_i$ receives $2f + 1$ \textsc{epoch-completed} messages for $e_{\mathit{max}}$ (line~\ref{line:receive_epoch_over}), and updates its $\mathit{epoch}_i$ variable to $e_{\mathit{max}} + 1$ (line~\ref{line:update_epoch_1}).
This represents a contradiction with the fact that $P_i$ does not update its $\mathit{epoch}_i$ variable to $e_{\mathit{max}} + 1$ by time $\mathit{GST} + \mathit{epoch\_duration} + 3\delta$, which concludes the proof.
\end{proof}

The final lemma shows that no correct process enters more than $O(1)$ epochs during the time period $[\mathit{GST}, t_s + \Delta]$.

\begin{lemma} \label{lemma:constant-number-of-epochs-after-GST}
No correct process enters more than $O(1)$ epochs in the time period $[\mathit{GST}, t_s + \Delta]$.
\end{lemma}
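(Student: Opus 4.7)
The plan is to combine three earlier results: Lemma~\ref{lemma:3delta} (by $\mathit{GST} + 3\delta$ every correct process has entered either $e_{\mathit{max}}$ or $e_{\mathit{final}} = e_{\mathit{max}}+1$), Lemma~\ref{lemma:within_delta} (any two consecutive epoch-entering events of a correct process that both occur at or after $\mathit{GST}$ are separated by at least $\delta$ real time), and Lemma~\ref{lemma:no_process_starts_emax_1} (no correct process enters any epoch strictly greater than $e_{\mathit{final}}$ before $t_s + \Delta$). Together these constrain both how many epoch entries can be packed into the short interval just after $\mathit{GST}$ and which epochs are reachable at all before $t_s + \Delta$.

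First I would fix an arbitrary correct process $P_i$ and partition $[\mathit{GST}, t_s + \Delta]$ into the prefix $[\mathit{GST}, \mathit{GST} + 3\delta]$ and the suffix $(\mathit{GST} + 3\delta, t_s + \Delta]$. For the prefix, let $t_1 < t_2 < \cdots < t_k$ be the times at which $P_i$ enters an epoch during this window. Since each $t_j \geq \mathit{GST}$, Lemma~\ref{lemma:within_delta} applied to consecutive pairs yields $t_{j+1} - t_j \geq \delta$, so $(k-1)\delta \leq t_k - t_1 \leq 3\delta$, giving $k \leq 4$.

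For the suffix, Lemma~\ref{lemma:3delta} guarantees that $P_i$ has already entered $e_{\mathit{max}}$ or $e_{\mathit{final}}$ by time $\mathit{GST} + 3\delta$. By the monotonicity of $\mathsf{advance}(\cdot)$ (Lemma~\ref{lemma:increasing_views}), any epoch entered by $P_i$ later in the suffix must be strictly greater than $e_{\mathit{max}}$, and hence at least $e_{\mathit{final}}$. But Lemma~\ref{lemma:no_process_starts_emax_1} forbids any correct process from entering an epoch strictly greater than $e_{\mathit{final}}$ before $t_s + \Delta$. So the only possible new epoch entry of $P_i$ in the suffix is $e_{\mathit{final}}$ itself, contributing at most one further entry. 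Summing the two contributions gives at most $4 + 1 = 5 = O(1)$ epoch entries, which is the desired bound.

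I do not foresee a substantial obstacle: the proof is a direct combination of bookkeeping results already in hand. The only care needed is that Lemma~\ref{lemma:within_delta} requires both epoch entries to lie at or after $\mathit{GST}$, which is why the prefix begins exactly at $\mathit{GST}$ and why one should not count any epoch entry that $P_i$ performed strictly before $\mathit{GST}$; edge cases such as $e_{\mathit{max}} = 0$ (equivalently $e_{\mathit{final}} = 1$) are handled identically, as the same monotonicity and timing bounds apply.
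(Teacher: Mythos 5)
Your proof is correct and follows essentially the same route as the paper's: both bound the epoch entries in a short window after $\mathit{GST}$ via the $\delta$-spacing of Lemma~\ref{lemma:within_delta}, and bound the remaining entries by combining Lemma~\ref{lemma:3delta} with Lemma~\ref{lemma:no_process_starts_emax_1} (and monotonicity) to show only $e_{\mathit{final}}$ can still be entered. The only difference is cosmetic: you split at the fixed time $\mathit{GST}+3\delta$ (yielding the constant $5$), whereas the paper splits at the process-specific time $t^*$ at which $e_{\mathit{max}}$ or $e_{\mathit{max}}+1$ is entered (yielding $4$), which is immaterial for the $O(1)$ claim.
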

\begin{proof}
Consider a correct process $P_i$.
Process $P_i$ enters epoch $e_{\mathit{max}}$ by time $\mathit{GST} + 2\delta$ or $P_i$ enters epoch $e_{\mathit{max}} + 1$ by time $\mathit{GST} + 3\delta$ (by \Cref{lemma:3delta}).
\Cref{lemma:e_final_e_max} shows that $e_{\mathit{final}} = e_{\mathit{max}} + 1$.
Finally, no correct process enters an epoch greater than $e_{\mathit{final}} = e_{\mathit{max}} + 1$ by time $t_s + \Delta$ (by \Cref{lemma:no_process_starts_emax_1}).

Let us consider two scenarios according to \Cref{lemma:3delta}:
\begin{compactenum}
    \item By time $\mathit{GST} + 2\delta$, $P_i$ enters $e_{\mathit{max}}$; let $P_i$ enter $e_{\mathit{max}}$ at time $t^* \leq \mathit{GST} + 2\delta$.
    By \Cref{lemma:increasing_views}, during the time period $[t^*, t_s + \Delta]$, $P_i$ enters (at most) $2 = O(1)$ epochs (epochs $e_{\mathit{max}}$ and $e_{\mathit{max}} + 1$).
    Finally, during the time period $[\mathit{GST}, t^*)$, \Cref{lemma:within_delta} shows that $P_i$ enters (at most) $2 = O(1)$ epochs (as $t^* \leq \mathit{GST} + 2\delta$).
    Hence, in this case, $P_i$ enters (at most) $4 = O(1)$ epochs during the time period $[\mathit{GST}, t_s + \Delta]$.
    
    \item By time $\mathit{GST} + 3\delta$, $P_i$ enters $e_{\mathit{max}} + 1$; let $P_i$ enter $e_{\mathit{max}} + 1$ at time $t^* \leq \mathit{GST} + 3\delta$.
    By \Cref{lemma:increasing_views}, during the time period $[t^*, t_s + \Delta]$, $P_i$ enters (at most) $1 = O(1)$ epoch (epoch $e_{\mathit{max}} + 1$).
    Finally, during the time period $[\mathit{GST}, t^*)$, \Cref{lemma:within_delta} shows that $P_i$ enters (at most) $3 = O(1)$ epochs (as $t^* \leq \mathit{GST} + 3\delta$).
    Hence, in this case, $P_i$ enters (at most) $4 = O(1)$ epochs during the time period $[\mathit{GST}, t_s + \Delta]$.
    
\end{compactenum}
Hence, during the time period $[\mathit{GST}, t_s + \Delta]$, $P_i$ enters (at most) $4 = O(1)$ epochs.
\end{proof}

Finally, we prove that \rare achieves $O(n^2)$ communication and $O(f)$ latency.

\begin{theorem} [Complexity] \label{theorem:latency}
\rare achieves $O(n^2)$ communication complexity and $O(f)$ latency complexity. 
\end{theorem}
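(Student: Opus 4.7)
The plan is to establish the latency and communication bounds separately by assembling the intermediate lemmas already proved in this section. For the latency bound, the key ingredients are \Cref{theorem:termination} and \Cref{lemma:delta}: the former states that the first synchronization time at or after $\mathit{GST}$ occurs by $t_{e_{\mathit{final}}} + f \cdot \mathit{view\_duration} + 2\delta$, while the latter gives $t_{e_{\mathit{final}}} \leq \mathit{GST} + \mathit{epoch\_duration} + 4\delta$. Substituting the identity $\mathit{epoch\_duration} = (f+1) \cdot \mathit{view\_duration}$, the plan is to conclude
\[
t_s + \Delta - \mathit{GST} \leq (2f+1) \cdot \mathit{view\_duration} + 6\delta + \Delta = O(f),
\]
which is the desired linear latency.

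For the communication bound, the plan is to show that every correct process sends at most $O(n)$ words during $[\mathit{GST}, t_s + \Delta]$ and then sum over the $n$ processes to obtain $O(n^2)$. First, I will use \Cref{lemma:constant-number-of-epochs-after-GST} to observe that a correct process enters at most $O(1)$ epochs during this interval; since it can additionally be ``carrying over'' at most one epoch entered strictly before $\mathit{GST}$, the process is in only $O(1)$ distinct epochs over $[\mathit{GST}, t_s + \Delta]$. Next, \Cref{lemma:epoch_over_previously_started} and \Cref{lemma:enter_epoch_in_e} imply that a correct process only ever sends \textsc{epoch-completed} or \textsc{enter-epoch} messages for epochs it has already entered, so every message sent during the interval is attributable to one of those $O(1)$ epochs. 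Finally, \Cref{lemma:epoch_completed_n} and \Cref{lemma:enter_epoch_n} bound the number of messages of each type per epoch by $O(n)$, and each such message carries a single word (a partial signature or a constant-sized threshold signature), yielding $O(1) \cdot O(n) = O(n)$ words per process.

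The main subtlety I anticipate is in the communication accounting: \Cref{lemma:epoch_completed_n} and \Cref{lemma:enter_epoch_n} bound the per-epoch message counts across the entire execution rather than restricted to $[\mathit{GST}, t_s + \Delta]$, so I need to justify that only the $O(1)$ epochs ``active'' during the interval contribute messages that we must count. This follows cleanly by combining \Cref{lemma:epoch_over_previously_started}, \Cref{lemma:enter_epoch_in_e} (which tie message sending to epoch entry) with \Cref{lemma:constant-number-of-epochs-after-GST} (which bounds epoch entries), but it is the step that warrants the most careful bookkeeping in the formal write-up.
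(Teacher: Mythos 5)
Your latency argument is correct and matches the paper's: combining \Cref{theorem:termination} with \Cref{lemma:delta} and $\mathit{epoch\_duration} = (f+1)\cdot\mathit{view\_duration}$ gives $t_s + \Delta - \mathit{GST} \leq (2f+1)\cdot\mathit{view\_duration} + 6\delta + \Delta = O(f)$ (the paper states the slightly tighter $2\cdot\mathit{epoch\_duration}+4\delta$, but both are $O(f)$). The overall structure of your communication argument — $O(1)$ relevant epochs per process on $[\mathit{GST}, t_s+\Delta]$, times $O(n)$ single-word messages per epoch, summed over $n$ processes — is also the paper's.

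However, there is a gap in the step that attributes every message sent during $[\mathit{GST}, t_s+\Delta]$ to one of the $O(1)$ ``active'' epochs. For \textsc{enter-epoch} messages you are fine: \Cref{lemma:enter_epoch_in_e} ties the send to entering that epoch \emph{at the time of the send}, so the epoch is indeed entered within the interval. But for \textsc{epoch-completed} messages you invoke \Cref{lemma:epoch_over_previously_started}, which only says the process has entered the epoch \emph{at some earlier time}. That is too weak: a process may have entered many epochs before $\mathit{GST}$, and ``previously entered'' does not exclude, on its face, that it sends \textsc{epoch-completed} messages during the interval for arbitrarily many of those stale epochs, which would destroy the $O(n)$-words-per-process bound. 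What you need (and what the paper proves exactly for this purpose) is \Cref{lemma:epoch_completed_last_enter}: if a correct process sends an \textsc{epoch-completed} message for $e$, then $e$ is the \emph{last} epoch it entered before the send. Combined with \Cref{lemma:constant-number-of-epochs-after-GST} (and the one epoch carried over from before $\mathit{GST}$), this pins every \textsc{epoch-completed} message sent in the interval to one of the $O(1)$ epochs, after which your per-epoch bounds from \Cref{lemma:epoch_completed_n} and \Cref{lemma:enter_epoch_n} close the argument as you describe. So the fix is a substitution of the correct lemma rather than a new idea, but as written the key attribution step does not follow from the lemmas you cite.
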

\begin{proof}
Fix a correct process $P_i$.
For every epoch $e$, $P_i$ sends (at most) $O(n)$ \textsc{epoch-completed} and \textsc{enter-epoch} messages for $e$ (by lemmas~\ref{lemma:epoch_completed_n} and~\ref{lemma:enter_epoch_n}).
Moreover, if $P_i$ sends an \textsc{epoch-completed} message for an epoch $e$ at time $t$, then $e$ is the last epoch entered by $P_i$ prior to sending the message (by \Cref{lemma:epoch_completed_last_enter}).
Similarly, if $P_i$ sends an \textsc{enter-epoch} message for an epoch $e$ at time $t$, then $P_i$ enters $e$ at $t$ (by \Cref{lemma:enter_epoch_in_e}).
Hence, during the time period $[\mathit{GST}, t_s + \Delta]$, $P_i$ sends \textsc{epoch-completed} or \textsc{enter-epoch} messages for (at most) $O(1)$ epochs (by \Cref{lemma:constant-number-of-epochs-after-GST}).
Thus, $P_i$ sends (at most) $O(1) \cdot O(n) = O(n)$ messages during the time period $[\mathit{GST}, t_s + \Delta]$, which implies that $P_i$ sends $O(n)$ words in this time period (as each \textsc{epoch-completed} and \textsc{enter-epoch} message contains a single word).
Therefore, the communication complexity of \rare is indeed $n \cdot O(n) = O(n^2)$.

By \Cref{theorem:termination}, $t_s + \Delta < t_{e_\mathit{final}} + \mathit{epoch\_duration}$.
Moreover, \Cref{lemma:delta} shows that $t_{e_\mathit{final}} \leq \mathit{GST} + \mathit{epoch\_duration} + 4\delta$.
Therefore, $t_s + \Delta < \mathit{GST} + 2\cdot\mathit{epoch\_duration} + 4\delta$.
Furthermore, $t_s + \Delta - \mathit{GST} < 2\cdot\mathit{epoch\_duration} + 4\delta$.
Since $\mathit{epoch\_duration} = (f + 1) \cdot \mathit{view\_duration} = O(f)$ (recall that $\mathit{view\_duration}$ is constant), $t_s + \Delta - \mathit{GST} = O(f)$, which proves the linear latency complexity of \rare.
\end{proof}
\section{\name: Pseudocode \& Proof of Correctness and Complexity}\label{section:appendix_quad}

In this section, we give the complete pseudocode of \name's view core module (\cref{algorithm:utilities,algorithm:view_1}), and we formally prove that \name solves consensus (with weak validity) with $O(n^2)$ communication complexity and $O(f)$ latency complexity.

\begin{algorithm} [h]
\caption{\name: View core's utilities (for process $P_i$)}
\label{algorithm:utilities}
\begin{algorithmic} [1]
\State \textbf{function} $\mathsf{msg(String} \text{ } \mathit{type}, \mathsf{Value} \text{ } \mathit{value}, \mathsf{Quorum\_Certificate} \text{ } \mathit{qc}, \mathsf{View} \text{ }\mathit{view}\mathsf{)}$:
\State \hskip2em $m.\mathit{type} \gets \mathit{type}$; $m.\mathit{value} \gets \mathit{value}$; $m.\mathit{qc} \gets \mathit{qc}$; $m.\mathit{view} \gets \mathit{view}$
\State \hskip2em \textbf{return} $m$

\smallskip
\State \textbf{function} $\mathsf{vote\_msg(String} \text{ } \mathit{type}, \mathsf{Value} \text{ } \mathit{value}, \mathsf{Quorum\_Certificate} \text{ } \mathit{qc}, \mathsf{View} \text{ }\mathit{view}\mathsf{)}$:
\State \hskip2em $m \gets \mathsf{msg(}\mathit{type}, \mathit{value}, \mathit{qc}, \mathit{view}\mathsf{)}$
\State \hskip2em $m.\mathit{partial\_sig} \gets \mathit{ShareSign}_i([m.\mathit{type}, m.\mathit{value}, m.\mathit{view}])$
\State \hskip2em \textbf{return} $m$

\smallskip
\State \textcolor{blue}{\(\triangleright\) All the messages in $M$ have the same type, value and view}
\State \textbf{function} $\mathsf{qc(Set(Vote\_Message)} \text{ } M\mathsf{)}$:
\State \hskip2em $\mathit{qc}.\mathit{type} \gets m.\mathit{type}$, where $m \in M$
\State \hskip2em $\mathit{qc}.\mathit{value} \gets m.\mathit{value}$, where $m \in M$
\State \hskip2em $\mathit{qc}.\mathit{view} \gets m.\mathit{view}$, where $m \in M$
\State \hskip2em $\mathit{qc}.\mathit{sig} \gets \mathit{Combine}\big(\{\mathit{partial\_sig} \,|\, \mathit{partial\_sig} \text{ is in a message that belongs to } M\}\big)$
\State \hskip2em \textbf{return} $\mathit{qc}$

\smallskip
\State \textbf{function} $\mathsf{matching\_msg(Message} \text{ }m, \mathsf{String} \text{ }\mathit{type}, \mathsf{View} \text{ } \mathit{view}\mathsf{)}$:
\State \hskip2em \textbf{return} $m.\mathit{type} = \mathit{type}$ and $m.\mathit{view} = \mathit{view}$

\smallskip
\State \textbf{function} $\mathsf{matching\_qc(Quorum\_Certificate} \text{ } \mathit{qc}, \mathsf{String} \text{ } \mathit{type}, \mathsf{View} \text{ } \mathit{view}\mathsf{)}$:
\State \hskip2em \textbf{return} $\mathit{qc}.\mathit{type} = \mathit{type}$ and $\mathit{qc}.\mathit{view} = \mathit{view}$
\end{algorithmic}
\end{algorithm}

\begin{algorithm} 
\caption{\name: View core (for process $P_i$)}
\label{algorithm:view_1}
\begin{algorithmic} [1]
\State \textbf{upon} $\mathsf{init(Value } \text{ } \mathit{proposal})$:
\State \hskip2em $\mathit{proposal}_i \gets \mathit{proposal}$ \BlueComment{$P_i$'s proposal}

\smallskip
\State \textbf{upon} $\mathsf{start\_executing(View} \text{ } \mathit{view}\mathsf{)}$:

\State \hskip2em \textcolor{blue}{\(\triangleright\) Prepare phase}
\State \hskip2em \textbf{send} $\mathsf{msg(}\textsc{view-change}, \bot, \mathit{prepareQC}, \mathit{view}\mathsf{)}$ to $\mathsf{leader}(\mathit{view})$

\smallskip
\State \hskip2em \textbf{as} $\mathsf{leader}(\mathit{view})$:
\State \hskip4em \textbf{wait for} $2f + 1$ \textsc{view-change} messages:
\State \hskip6em $M \gets \{m \,|\, \mathsf{matching\_msg(}m, \textsc{view-change}, \mathit{view}\mathsf{)}\}$
\State \hskip4em $\mathsf{Quorum\_Certificate} \text{ } \mathit{highQC} \gets \mathit{qc}$ with the highest $\mathit{qc}.\mathit{view}$ in $M$ \label{line:highest_qc}
\State \hskip4em $\mathsf{Value} \text{ } \mathit{proposal} \gets \mathit{highQC}.\mathit{value}$
\State \hskip4em \textbf{if} $\mathit{proposal} = \bot$:
\State \hskip6em $\mathit{proposal} \gets \mathit{proposal}_i$ \BlueComment{$\mathit{proposal}_i$ denotes the proposal of $P_i$}
\State \hskip4em \textbf{broadcast} $\mathsf{msg(}\textsc{prepare}, \mathit{proposal}, \mathit{highQC}, \mathit{view}\mathsf{)}$

\smallskip
\State \hskip2em \textbf{as} a process: \BlueComment{every process executes this part of the pseudocode}
\State \hskip4em \textbf{wait for} message $m$: $\mathsf{matching\_msg(}m, \textsc{prepare}, \mathit{view}\mathsf{)}$ from $\mathsf{leader}(\mathit{view})$
\State \hskip4em \textbf{if} $m.\mathit{qc}.\mathit{value} = m.\mathit{value}$ and ($\mathit{lockedQC}.\mathit{value} = m.\mathit{value}$ or $\mathit{qc}.\mathit{view} > \mathit{lockedQC}.\mathit{view}$): \label{line:view_core_check}
\State \hskip6em \textbf{send} $\mathsf{vote\_msg(}\textsc{prepare}, m.\mathit{value},\bot, \mathit{view}\mathsf{)}$ to $\mathsf{leader}(\mathit{view})$ \label{line:proposal_support}

\smallskip
\State \hskip2em \textcolor{blue}{\(\triangleright\) Precommit phase}
\State \hskip2em \textbf{as} $\mathsf{leader}(\mathit{view})$:
\State \hskip4em \textbf{wait for} $2f + 1$ votes: $V \gets \{\mathit{vote} \,|\, \mathsf{matching\_msg(}\mathit{vote}, \textsc{prepare}, \mathit{view}\mathsf{)}\}$
\State \hskip4em $\mathsf{Quorum\_Certificate} \text{ } \mathit{qc} \gets \mathsf{qc(}V\mathsf{)}$
\State \hskip4em \textbf{broadcast} $\mathsf{msg(}\textsc{precommit}, \bot, \mathit{qc}, \mathit{view}\mathsf{)}$

\smallskip
\State \hskip2em \textbf{as} a process: \BlueComment{every process executes this part of the pseudocode}
\State \hskip4em \textbf{wait for} message $m$: $\mathsf{matching\_qc(}m.\mathit{qc}, \textsc{prepare}, \mathit{view}\mathsf{)}$ from $\mathsf{leader}(\mathit{view})$
\State \hskip4em $\mathit{prepareQC} \gets m.\mathit{qc}$ \label{line:quad_update_prepare_qc}
\State \hskip4em \textbf{send} $\mathsf{vote\_msg(}\textsc{precommit}, m.\mathit{qc}.\mathit{value}, \bot, \mathit{view}\mathsf{)}$ to $\mathsf{leader}(\mathit{view})$

\smallskip
\State \hskip2em \textcolor{blue}{\(\triangleright\) Commit phase}
\State \hskip2em \textbf{as} $\mathsf{leader}(\mathit{view})$:
\State \hskip4em \textbf{wait for} $2f + 1$ votes: $V \gets \{\mathit{vote} \,|\, \mathsf{matching\_msg(}\mathit{vote}, \textsc{precommit}, \mathit{view}\mathsf{)}\}$
\State \hskip4em $\mathsf{Quorum\_Certificate}$ $\mathit{qc} \gets \mathsf{qc(}V\mathsf{)}$
\State \hskip4em \textbf{broadcast} $\mathsf{msg(}\textsc{commit}, \bot, \mathit{qc}, \mathit{view}\mathsf{)}$

\smallskip
\State \hskip2em \textbf{as} a process: \BlueComment{every process executes this part of the pseudocode}
\State \hskip4em \textbf{wait for} message $m$: $\mathsf{matching\_qc(}m.\mathit{qc}, \textsc{precommit}, \mathit{view}\mathsf{)}$ from $\mathsf{leader}(\mathit{view})$
\State \hskip4em $\mathit{lockedQC} \gets m.\mathit{qc}$ \label{line:update_locked_qc}
\State \hskip4em \textbf{send} $\mathsf{vote\_msg(}\textsc{commit}, m.\mathit{qc}.\mathit{value}, \bot, \mathit{view}\mathsf{)}$ to $\mathsf{leader}(\mathit{view})$

\smallskip
\State \hskip2em \textcolor{blue}{\(\triangleright\) Decide phase}
\State \hskip2em \textbf{as} $\mathsf{leader}(\mathit{view})$:
\State \hskip4em \textbf{wait for} $2f + 1$ votes: $V \gets \{\mathit{vote} \,|\, \mathsf{matching\_msg(}\mathit{vote}, \textsc{commit}, \mathit{view}\mathsf{)}\}$
\State \hskip4em $\mathsf{Quorum\_Certificate}$ $\mathit{qc} \gets \mathsf{qc(}V\mathsf{)}$
\State \hskip4em \textbf{broadcast} $\mathsf{msg(}\textsc{decide}, \bot, \mathit{qc}, \mathit{view}\mathsf{)}$ \label{line:broadcast_decide_message}

\smallskip
\State \hskip2em \textbf{as} a process: \BlueComment{every process executes this part of the pseudocode}
\State \hskip4em \textbf{wait for} message $m$: $\mathsf{matching\_qc(}m.\mathit{qc}, \textsc{commit}, \mathit{view}\mathsf{)}$ from $\mathsf{leader}(\mathit{view})$
\State \hskip4em \textbf{trigger} $\mathsf{decide}(m.\mathit{qc}.\mathit{value})$ \label{line:decide_view_core}
\end{algorithmic}
\end{algorithm}

\smallskip
\noindent \textbf{Proof of correctness.}
In this paragraph, we show that \name ensures weak validity, termination and agreement.
Recall that the main body of \name is given in \Cref{algorithm:quad}, whereas its view synchronizer \rare is presented in \Cref{algorithm:synchronizer} and its view core in \Cref{algorithm:view_1}.
We underline that the proofs concerned with the view core of \name can be found in~\cite{Yin2019}, as \name uses the same view core as HotStuff.

We start by proving that \name ensures weak validity.

\begin{theorem} [Weak validity]
\name ensures weak validity.
\end{theorem}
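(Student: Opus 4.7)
The plan is to trace any decision back through HotStuff's view-core logic to a correct leader's own proposal, exploiting the assumption that every process is correct. First, I would observe that if a correct process $P_i$ triggers $\mathsf{decide}(v)$ at line~\ref{line:decide_view_core} of \Cref{algorithm:view_1}, then $P_i$ has received a \textsc{decide} message carrying a commit QC for $v$ at some view $\mathit{view}$. By unforgeability of the threshold signature, such a QC aggregates $2f + 1$ precommit votes, so at least one correct voter contributed; that correct voter must previously have received a \textsc{precommit} message from $\mathsf{leader}(\mathit{view})$ carrying a prepare QC for $v$ in view $\mathit{view}$, which in turn implies that $\mathsf{leader}(\mathit{view})$ broadcast a \textsc{prepare} message with $m.\mathit{value} = v$.

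Since every process is correct, the leader of view $\mathit{view}$ obeys the proposal rule at line~\ref{line:highest_qc} of \Cref{algorithm:view_1}: it aggregates $2f + 1$ \textsc{view-change} messages and picks the $\mathit{highQC}$ with the largest view. If $\mathit{highQC}.\mathit{value} = \bot$, the leader falls back on its own input $\mathit{proposal}_i$, which is precisely the value this process supplied to \name via $\mathsf{init}$. Otherwise, the leader adopts $v = \mathit{highQC}.\mathit{value}$, and by construction $\mathit{highQC}$ is itself a prepare QC for $v$ formed in some strictly earlier view $\mathit{view}' < \mathit{view}$.

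I would then complete the argument by strong induction on the view number in which a prepare QC for a given value $v$ is produced. The inductive claim states that any value certified by a prepare QC was proposed to \name by some correct process. The inheritance case invokes the induction hypothesis at the strictly smaller view $\mathit{view}'$; the base case covers a leader that reverts to its own $\mathit{proposal}_i$. Combined with the first paragraph, which reduces the decision of $v$ to the existence of a prepare QC for $v$, this shows that every decided value was proposed by some (correct) process, which is precisely the weak validity condition.

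The main obstacle is anchoring the induction cleanly. The argument relies on the fact that every correct process initializes its $\mathit{prepareQC}$ to a sentinel QC at view $0$ carrying $\mathit{value} = \bot$, so the earliest view $\mathit{view}^*$ in which any leader aggregates $2f + 1$ \textsc{view-change} messages must yield an $\mathit{highQC}$ with $\mathit{value} = \bot$, forcing the leader into the fallback branch and pinning $v$ to that leader's $\mathit{proposal}_i$. Once this anchor is in place, the induction closes and weak validity follows.
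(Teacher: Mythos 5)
Your proposal is correct and takes essentially the same route as the paper: the paper's proof is a terser statement of the same invariant (every prepare QC vouches for a proposed value, and a correct leader proposes either a QC'd value or its own input), which you simply render as an explicit strong induction on views with the $\bot$-sentinel as the base case. The only cosmetic caveat is that the sentinel initialization of $\mathit{prepareQC}$ is implicit rather than written in the pseudocode, an assumption the paper's own proof also relies on silently.
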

\begin{proof}
Suppose that all processes are correct.
Whenever a correct process updates its $\mathit{prepareQC}$ variable (line~\ref{line:quad_update_prepare_qc} of \Cref{algorithm:view_1}), it updates it to a quorum certificate vouching for a proposed value.
Therefore, leaders always propose a proposed value since the proposed value is ``formed'' out of $\mathit{prepareQC}$s of processes (line~\ref{line:highest_qc} of \Cref{algorithm:view_1}).
Given that a correct process executes line~\ref{line:decide_view_core} of \Cref{algorithm:view_1} for a value proposed by the leader of the current view, which is proposed by a process (recall that all processes are correct), the weak validity property is ensured.
\end{proof}

Next, we prove agreement.

\begin{theorem} [Agreement] \label{theorem:quad_agreement}
\name ensures agreement.
\end{theorem}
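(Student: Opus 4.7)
The plan is to reuse the standard HotStuff-style safety argument, since the view core of \name (\Cref{algorithm:view_1}) is identical to HotStuff's; the view synchronizer \rare only controls when processes enter views, and cannot compromise safety. Concretely, suppose that some correct process $P_i$ decides a value $v$ in a view $V$ by executing line~\ref{line:decide_view_core} of \Cref{algorithm:view_1}. This means $P_i$ has received a commit QC $\mathit{qc}$ with $\mathit{qc}.\mathit{view}=V$ and $\mathit{qc}.\mathit{value}=v$. Since $\mathit{qc}$ combines $2f+1$ partial signatures on $(\textsc{commit},v,V)$, at least $f+1$ correct processes sent a $\textsc{commit}$ vote for $(v,V)$, and each such correct process updated its $\mathit{lockedQC}$ to a precommit QC for $(v,V)$ at line~\ref{line:update_locked_qc}. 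Call this set of correct processes $L$.

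The heart of the proof is the following invariant, proved by induction on the view index $V' \geq V$: any valid prepare QC for view $V'$ vouches for the value $v$. The base case $V'=V$ follows from the unforgeability of threshold signatures and the fact that a correct process produces at most one prepare vote per view, so no prepare QC for $V$ with value $v'\neq v$ can be formed (otherwise two disjoint quorums in $V$ would be required). For the inductive step, consider the first view $V'>V$ in which a prepare QC for some value $v'$ is formed. This QC is built from $2f+1$ prepare votes in view $V'$, hence at least $f+1$ correct processes executed line~\ref{line:proposal_support} in $V'$. Their vote was preceded by the leader of $V'$ sending a $\textsc{prepare}$ message whose attached QC is some prepare QC $\mathit{highQC}$ from a view $U<V'$. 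By quorum intersection, at least one process in $L$ voted in $V'$. That process holds $\mathit{lockedQC}$ with view $V$ and value $v$, so it passes the check at line~\ref{line:view_core_check} only if either (i) $\mathit{highQC}.\mathit{value}=v$, or (ii) $\mathit{highQC}.\mathit{view}>V$. In case (i) we are done. In case (ii), the induction hypothesis applied to $\mathit{highQC}$ (which lives in a view in the open interval $(V,V')$) gives $\mathit{highQC}.\mathit{value}=v$; combined with the check $m.\mathit{qc}.\mathit{value}=m.\mathit{value}$ at line~\ref{line:view_core_check}, this forces $v'=v$.

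Once the invariant is established, agreement follows immediately. If a correct process $P_j$ decides a value $v'$ in some view $V'$, it must have obtained a commit QC for $(v',V')$, which in turn requires a precommit QC for $(v',V')$, which in turn requires a prepare QC for $(v',V')$ in view $V'$. If $V'=V$, the unforgeability argument already used in the base case rules out $v'\neq v$. If $V'>V$, the invariant yields $v'=v$; symmetrically if $V'<V$, we apply the same argument with the roles of $V$ and $V'$ swapped. Hence $v'=v$ and agreement holds.

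The main obstacle is the inductive step, specifically the argument that the ``locked'' correct process in the intersection forces the new prepare QC to carry $v$. This is the classical HotStuff safety argument: it hinges on the precise formulation of the safety predicate at line~\ref{line:view_core_check} (namely the disjunction between matching the locked value and being justified by a strictly more recent QC), and on the fact that the check $m.\mathit{qc}.\mathit{value}=m.\mathit{value}$ ties the proposed value of a view to the value carried by its justifying prepare QC from an earlier view. Once these two ingredients are in place, the induction goes through uniformly, and since \rare is invoked only to dictate view transitions, it plays no role in this argument.
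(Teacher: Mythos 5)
Your proof is correct and follows essentially the same route as the paper: the paper's argument is exactly this HotStuff-style safety sketch (no conflicting QCs within a view, plus the $f+1$ locked correct processes blocking any conflicting prepare QC in later views, with the details deferred to the HotStuff paper), and your induction over views simply makes rigorous the step the paper glosses over when it claims the check at line~\ref{line:view_core_check} ``will return false'' despite the higher-view disjunct. The only slight imprecision is your assertion that the intersecting process in $L$ still holds $\mathit{lockedQC}$ with view exactly $V$ at the time it votes in $V'$ --- its lock may have advanced to a later view $W \in (V, V')$ --- but since any such lock is derived from a precommit (hence prepare) QC covered by your induction hypothesis, its value is still $v$ and its view is still $\geq V$, so your case analysis and conclusions go through unchanged.
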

\begin{proof}
Two conflicting quorum certificates associated with the same view cannot be obtained in the view core of \name (\Cref{algorithm:view_1}); otherwise, a correct process would vote for both certificates, which is not possible according to \Cref{algorithm:view_1}.
Therefore, two correct processes cannot decide different values from the view core of \name in the same view.
Hence, we need to show that, if a correct process decides $v$ in some view $\mathit{view}$ in the view core (line~\ref{line:decide_view_core} of \Cref{algorithm:view_1}), then no conflicting quorum certificate can be obtained in the future views.

Since a correct process decides $v$ in view $\mathit{view}$ in the view core, the following holds at $f + 1$ correct processes: $\mathit{lockedQC}.\mathit{value} = v$ and $\mathit{lockedQC}.\mathit{view} = \mathit{view}$ (line~\ref{line:update_locked_qc} of \Cref{algorithm:view_1}).
In order for another correct process to decide a different value in some future view, a prepare quorum certificate for a value different than $v$ must be obtained in a view greater than $\mathit{view}$.
However, this is impossible as $f + 1$ correct processes whose $\mathit{lockedQC}.\mathit{value} = v$ and $\mathit{lockedQC}.\mathit{view} = \mathit{view}$ will not support such a prepare quorum certificate (i.e., the check at line~\ref{line:view_core_check} of \Cref{algorithm:view_1} will return false).
Thus, it is impossible for correct processes to disagree in the view core even across multiple views.
The agreement property is ensured by \name.
\end{proof}

Finally, we prove termination.

\begin{theorem} [Termination] \label{theorem:quad_termination}
\name ensures termination.
\end{theorem}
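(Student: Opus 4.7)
The plan is to chain together the eventual synchronization property of \rare (established in \Cref{theorem:termination}) with a standard ``good view'' argument for the HotStuff view core. By \Cref{theorem:termination}, there exists a synchronization time $t_s \geq \mathit{GST}$ such that all correct processes are in the same view $v$ with $\mathsf{leader}(v)$ correct during the entire interval $[t_s, t_s + \Delta]$, where $\Delta = 8\delta$ was fixed at line~\ref{line:init_rare_sync} of \Cref{algorithm:quad}. The goal is to show that this $8\delta$-overlap is sufficient for the view core to produce a decision.

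First I would argue that, once all correct processes are simultaneously in view $v$ and the correct leader $L = \mathsf{leader}(v)$ has begun executing the view (which happens at time $t_s$ since \rare triggers $\mathsf{advance}(v)$, and \name responds by invoking $\mathit{core}.\mathsf{start\_executing}(v)$ at line~\ref{line:start_executing_logic}), the four phases of \Cref{algorithm:view_1} unfold in lock-step. Each phase consists of one ``all-to-leader'' step followed by one ``leader-to-all'' step, so after $\mathit{GST}$, each phase completes within $2\delta$: the leader collects $2f+1$ votes within $\delta$ (because the $2f+1$ correct processes all send their messages upon entering $v$), and the leader's resulting message reaches every correct process within an additional $\delta$. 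Four phases thus terminate within $4 \cdot 2\delta = 8\delta = \Delta$, so by time $t_s + \Delta$ every correct process processes a \textsc{decide} message broadcast by $L$ at line~\ref{line:broadcast_decide_message} of \Cref{algorithm:view_1} and triggers $\mathsf{decide}$ at line~\ref{line:decide_view_core}, which propagates to \name's $\mathsf{decide}$ at line~\ref{line:quad_decide}.

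The main obstacle is verifying that no phase stalls due to a correct process refusing to vote. The only non-trivial check is the safety test at line~\ref{line:view_core_check} of \Cref{algorithm:view_1}, where a correct process votes for the leader's proposal only if the accompanying $\mathit{highQC}$ is either consistent with its own $\mathit{lockedQC}.\mathit{value}$ or has a strictly higher $\mathit{lockedQC}.\mathit{view}$. Here I would invoke the standard HotStuff safety argument: since $L$ collected $2f+1$ \textsc{view-change} messages and selected the $\mathit{highQC}$ with the maximal $\mathit{qc}.\mathit{view}$ (line~\ref{line:highest_qc}), and since any $\mathit{lockedQC}$ held by a correct process was formed from $2f+1$ \textsc{precommit} votes, the quorum intersection guarantees that $L$'s chosen $\mathit{highQC}$ dominates (in view) every correct process's $\mathit{lockedQC}$ that could conflict with it. Hence the check passes at every correct process, votes are sent, and the phase succeeds. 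This is exactly the invariant proved for HotStuff in~\cite{Yin2019}, so I would appeal to that result rather than reprove it.

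Having established that the view core decides by time $t_s + \Delta$, the termination of \name follows. Combined with \Cref{theorem:quad_agreement}, all correct processes decide the same value, completing the proof.
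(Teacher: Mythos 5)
Your proposal is correct and follows essentially the same route as the paper's proof: it combines \rare's eventual synchronization (with $\Delta = 8\delta$ covering the $8$ communication steps of the view core) with the observation that the leader's $\mathit{highQC}$, selected from $2f+1$ \textsc{view-change} messages, makes the check at line~\ref{line:view_core_check} pass at every correct process, so all four phases complete and every correct process decides; like the paper, you defer the underlying HotStuff invariant to~\cite{Yin2019}.
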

\begin{proof}
\rare ensures that, eventually, all correct processes remain in the same view $\mathit{view}$ with a correct leader for (at least) $\Delta = 8\delta$ time after $\mathit{GST}$.
When this happens, all correct processes decide in the view core.

Indeed, the leader of $\mathit{view}$ learns the highest obtained locked quorum certificate through the \textsc{view-change} messages (line~\ref{line:highest_qc} of \Cref{algorithm:view_1}).
Therefore, every correct process supports the proposal of the leader (line~\ref{line:proposal_support} of \Cref{algorithm:view_1}) as the check at line~\ref{line:view_core_check} of \Cref{algorithm:view_1} returns true.
After the leader obtains a prepare quorum certificate in $\mathit{view}$, all correct processes vote in the following phases of the same view.
Thus, all correct processes decide from the view core (line~\ref{line:decide_view_core} of \Cref{algorithm:view_1}), which concludes the proof.
\end{proof}

Thus, \name indeed solves the Byzantine consensus problem with weak validity.

\begin{corollary}
\name is a partially synchronous Byzantine consensus protocol ensuring weak validity.
\end{corollary}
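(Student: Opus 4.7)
The plan is to split the argument into correctness and complexity and to leverage, in black-box fashion, the results already established for \rare (Section~\ref{section:rare}) and the HotStuff-style view core (Section~\ref{subsection:quad_algorithm}). For correctness, weak validity follows from a simple induction on the prepare QC: the first $\mathit{highQC}$ selected at line~\ref{line:highest_qc} of \Cref{algorithm:view_1} is $\bot$, in which case the leader proposes its own $\mathit{proposal}_i$, and every subsequent $\mathit{highQC}$ vouches for a value that some leader previously proposed; if all processes are correct this value was proposed to \sname. Agreement is the standard ``locked QC blocks conflicting prepare QC'' argument: if a correct process decides $v$ at view $\mathit{view}$ (line~\ref{line:decide_view_core} of \Cref{algorithm:view_1}), then $f+1$ correct processes set $\mathit{lockedQC}$ to a QC for $v$ at view $\mathit{view}$, so the safety check at line~\ref{line:view_core_check} rejects any conflicting prepare QC in any future view. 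Termination is immediate from \Cref{theorem:termination}: \rare produces a synchronization time $t_s \geq \mathit{GST}$ during which all correct processes share a view with a correct leader for $\Delta = 8\delta$ time, which is precisely the time needed to execute the $8$ message-delay phases of the view core under a correct leader.

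For communication complexity, I would decompose the total after-$\mathit{GST}$ cost in the form $n \cdot C + S$ advocated in \Cref{section:introduction}, where $S$ is the synchronizer cost and $C$ is the per-process view-core cost. The synchronizer contributes $S = O(n^2)$ by \Cref{theorem:latency}. To bound $C$, I would combine four facts: (i) by \Cref{lemma:constant-number-of-epochs-after-GST} every correct process enters at most $O(1)$ epochs during $[\mathit{GST}, t_s + \Delta]$; (ii) each epoch consists of exactly $f+1$ consecutive views, and because $\mathsf{leader}(\cdot)$ is the round-robin map of line~\ref{line:leader_rare} of \Cref{algorithm:variable_constants}, any fixed process is the leader of at most one view per epoch; (iii) within a single view of the view core, a non-leader sends $O(1)$ words across the prepare/precommit/commit vote phases and the leader sends $O(n)$ words across the four broadcasts; and (iv) by the contract of $\mathsf{start\_executing}(v')$ described in \Cref{subsection:quad_algorithm}, once \rare advances to $v'$ the view core abandons the previous view and stops emitting messages for it. Together these give $C = O(1) \cdot \big(1 \cdot O(n) + f \cdot O(1)\big) = O(n)$, hence $n \cdot C + S = O(n^2)$.

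For latency, I would chain the two bounds from Section~\ref{section:rare}: \Cref{theorem:termination} gives $t_s + \Delta \leq t_{e_{\mathit{final}}} + \mathit{epoch\_duration}$, and \Cref{lemma:delta} gives $t_{e_{\mathit{final}}} \leq \mathit{GST} + \mathit{epoch\_duration} + 4\delta$. Since the view core decides within $\Delta = 8\delta$ of $t_s$ (by the termination argument above), the time by which every correct process has decided is at most $\mathit{GST} + 2\,\mathit{epoch\_duration} + 4\delta = O(f)$, as $\mathit{epoch\_duration} = (f+1)\cdot\mathit{view\_duration}$ and $\mathit{view\_duration}$ is a constant.

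The main obstacle I expect is item (iv) in the communication argument: pre-$\mathit{GST}$ behaviour of \rare may have triggered $\mathsf{advance}(\cdot)$ an unbounded number of times, and if the view core kept responding to late-arriving messages for those abandoned views, the $C = O(n)$ bound would fail. Pinning down that the view core sends messages only for its currently active view, so that only the $O(1)$ epochs entered after $\mathit{GST}$ contribute post-$\mathit{GST}$ cost, is the delicate step; the rest is a straightforward synthesis of \rare's analysis with the per-view bookkeeping of \Cref{algorithm:view_1}.
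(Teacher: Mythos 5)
Your correctness argument---weak validity via the prepare-QC induction, agreement via the locked-QC safety check at line~\ref{line:view_core_check}, and termination from \rare's eventual synchronization with $\Delta = 8\delta$---is exactly the route the paper takes, since the corollary is stated there as an immediate consequence of those three theorems. The complexity analysis you append is not needed for this particular statement (the paper handles it in a separate theorem), but it also mirrors the paper's $n \cdot C + S$ decomposition, so there is nothing to correct.
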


\smallskip
\noindent \textbf{Proof of complexity.}
Next, we show that \name achieves $O(n^2)$ communication complexity and $O(f)$ latency complexity.
Before we start the proof, we clarify one point about \Cref{algorithm:quad}: as soon as $\mathsf{advance}(v)$ is triggered (line~\ref{line:synchronizer_advance}), for some view $v$, the process immediately stops accepting and sending messages for the previous view.
In other words, it is as if the ``stop accepting and sending messages for the previous view'' action immediately follows the $\mathsf{advance}(\cdot)$ upcall in \Cref{algorithm:synchronizer}.\footnote{Note that this additional action does not disrupt \rare (nor its proof of correctness and complexity).}

We begin by proving that, if a correct process sends a message of the view core associated with a view $v$ which belongs to an epoch $e$, then the last entered epoch prior to sending the message (in the behavior of the process) is $e$ (this result is similar to the one of \Cref{lemma:epoch_completed_last_enter}).
A message is a \emph{view-core} message if it is of \textsc{view-change}, \textsc{prepare}, \textsc{precommit}, \textsc{commit} or \textsc{decide} type.

\begin{lemma} \label{lemma:quad_sent_previously_entered}
Let $P_i$ be a correct process and let $P_i$ send a view-core message associated with a view $v$, where $v$ belongs to an epoch $e$.
Then, $e$ is the last epoch entered by $P_i$ in $\beta_i$ before sending the message.
\end{lemma}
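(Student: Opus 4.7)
The plan is to establish this by tracking the correspondence between the view core's ``currently executing view'' and the most recent $\mathsf{advance}(\cdot)$ upcall from \rare, then backtracking through $\mathsf{advance}(\cdot)$ events until the first view of epoch $e$ is reached. The clarification preceding the lemma states that a correct process immediately stops sending messages for a view the moment $\mathsf{advance}(\cdot)$ is triggered for any other view; combined with \Cref{lemma:increasing_views}, which forces advance events to be monotonically increasing, this means that if $P_i$ sends a view-core message tagged with $v$, then the most recent $\mathsf{advance}(\cdot)$ triggered at $P_i$ before that sending event must be exactly $\mathsf{advance}(v)$. So the problem reduces to arguing that no ``epoch-entering'' event happens between the $\mathsf{advance}$ that enters $e$'s first view and $\mathsf{advance}(v)$.

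Let $v$ be the $j$-th view of epoch $e$, with $j \in [1, f+1]$. If $j = 1$, then $\mathsf{advance}(v)$ is itself the entry into epoch $e$, and since it is the most recent advance before sending, $e$ is trivially the last epoch entered. If $j > 1$, I would apply \Cref{lemma:no_jump} iteratively: it gives $\mathsf{advance}(v-1) \stackrel{\beta_i}{\prec} \mathsf{advance}(v)$, then $\mathsf{advance}(v-2) \stackrel{\beta_i}{\prec} \mathsf{advance}(v-1)$, and so on down to $\mathsf{advance}(v-j+1)$, which is the entry into the first view of $e$. Crucially, \Cref{lemma:increasing_views} forces advance views to strictly increase along $\beta_i$, and since views are integers there is no ``room'' between consecutive integer views for any other advance to slip in. Thus the advances appearing in $\beta_i$ between $\mathsf{advance}(v-j+1)$ and $\mathsf{advance}(v)$ inclusive are precisely $\mathsf{advance}(v-j+1), \mathsf{advance}(v-j+2), \dots, \mathsf{advance}(v)$, all of which correspond to views inside epoch $e$.

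The conclusion then follows immediately: among those advances, only $\mathsf{advance}(v-j+1)$ is an epoch-entering event (the others advance within $e$), so no epoch other than $e$ can have been entered after entering $e$ and before sending the message. Hence $e$ is the last epoch entered by $P_i$ in $\beta_i$ prior to the sending event. The main obstacle is really just the bookkeeping in the iterative application of \Cref{lemma:no_jump}; the ``no room between consecutive integers'' observation coming from \Cref{lemma:increasing_views} is what makes the step from ``$\mathsf{advance}(v-1)$ precedes $\mathsf{advance}(v)$'' to ``$\mathsf{advance}(v-1)$ is the \emph{immediately preceding} advance of $\mathsf{advance}(v)$'' work, so I would state that observation explicitly before iterating.
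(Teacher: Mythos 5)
Your proof is correct and follows essentially the same route as the paper's: both rely on the iterated application of \Cref{lemma:no_jump} to reach the first view of epoch $e$, on the monotonicity guaranteed by \Cref{lemma:increasing_views}, and on the clarification that a process stops sending messages for a view once it advances past it. The only difference is presentational — the paper phrases the final step as a contradiction (if a later epoch $e' > e$ were entered before the send, all subsequent views would exceed $v$), whereas you argue directly that the consecutive integer views leave no room for an intervening epoch entry — but the substance is identical.
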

\begin{proof}
Process $P_i$ enters the view $v$ before sending the view-core message (since $\mathsf{start\_executing}(v)$ is invoked upon $P_i$ entering $v$; line~\ref{line:start_executing_logic} of \Cref{algorithm:quad}).
By \Cref{lemma:no_jump}, $P_i$ enters the first view of the epoch $e$ (and, hence, $e$) before sending the message.
By contradiction, suppose that $P_i$ enters another epoch $e'$ after entering $e$ and before sending the view-core message.

By \Cref{lemma:increasing_views}, we have that $e' > e$.
However, this means that $P_i$ does not send any view-core messages associated with $v$ after entering $e'$ (since $(e' - 1) \cdot (f + 1) + 1 > v$ and $P_i$ enters monotonically increasing views by \Cref{lemma:increasing_views}).
Thus, a contradiction, which concludes the proof.
\end{proof}

Next, we show that a correct process sends (at most) $O(n)$ view-core messages associated with a single epoch.

\begin{lemma} \label{lemma:quad_constant_in_epoch}
Let $P_i$ be a correct process.
For any epoch $e$, $P_i$ sends (at most) $O(n)$ view-core messages associated with views that belong to $e$.
\end{lemma}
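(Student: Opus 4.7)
The plan is to count the view-core messages that $P_i$ sends view-by-view within epoch $e$, and then exploit the round-robin nature of $\mathsf{leader}(\cdot)$ together with the fact that an epoch contains exactly $f+1 \leq n$ consecutive views.

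First, I would observe that, by the convention stated in the paragraph preceding \Cref{lemma:quad_sent_previously_entered} (``as soon as $\mathsf{advance}(v)$ is triggered, the process immediately stops accepting and sending messages for the previous view''), $P_i$ only sends view-core messages for a view $v$ while the view core is executing $v$. Combined with \Cref{lemma:increasing_views}, which guarantees that $P_i$ enters each view at most once, it suffices to bound the messages $P_i$ sends during any single view of $e$ and then sum over the $f+1$ views of $e$.

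Next, I would split the per-view count into the ``leader case'' and the ``non-leader case''. If $\mathsf{leader}(v) \neq P_i$, inspection of \Cref{algorithm:view_1} shows that $P_i$ only sends point-to-point messages to the current leader: one \textsc{view-change} in the prepare phase, plus at most one vote per subsequent phase (\textsc{prepare}, \textsc{precommit}, \textsc{commit}), giving $O(1)$ messages per view. If $\mathsf{leader}(v) = P_i$, then $P_i$ additionally broadcasts at most a constant number of messages (\textsc{prepare}, \textsc{precommit}, \textsc{commit}, \textsc{decide}), each of which contributes $O(n)$ point-to-point messages, so in this case $P_i$ sends $O(n)$ messages during $v$.

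Finally, I would invoke the fact that $\mathsf{leader}(\cdot)$ is round-robin over $n$ processes (line~\ref{line:leader_rare} of \Cref{algorithm:variable_constants}), and that the $f+1$ views of epoch $e$ are consecutive with $f+1 \leq n$; hence $P_i = \mathsf{leader}(v)$ for at most one view $v$ of $e$. Summing $1 \cdot O(n) + f \cdot O(1) = O(n)$ then yields the claimed bound. The only non-routine point is making precise that view-core messages for views of $e$ are never sent outside the time intervals during which $P_i$ executes those views, but this is already covered by \Cref{lemma:quad_sent_previously_entered} together with \Cref{lemma:no_jump} and the ``stop on advance'' convention, so no further argument is needed.
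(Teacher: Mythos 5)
Your proposal is correct and follows essentially the same argument as the paper's proof: a per-view case split ($O(n)$ words as leader, $O(1)$ otherwise), combined with the round-robin leader assignment over the $f+1 \leq n$ consecutive views of an epoch and the monotonicity of entered views, yielding $1 \cdot O(n) + f \cdot O(1) = O(n)$. The extra care you take about messages only being sent while the view is being executed is consistent with the paper's conventions and does not change the argument.
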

\begin{proof}
Recall that $P_i$ enters monotonically increasing views (by \Cref{lemma:increasing_views}), which means that $P_i$ never invokes $\mathsf{start\_executing}(v)$ (line~\ref{line:start_executing_logic} of \Cref{algorithm:quad}) multiple times for any view $v$.

Consider a view $v$ that belongs to $e$.
We consider two cases:
\begin{compactitem}
    \item Let $P_i$ be the leader of $v$.
    In this case, $P_i$ sends (at most) $O(n)$ view-core messages associated with $v$.
    
    \item Let $P_i$ not be the leader of $v$.
    In this case, $P_i$ sends (at most) $O(1)$ view-core messages associated with $v$.
\end{compactitem}
Given that $P_i$ is the leader of at most one view in every epoch $e$ (since $\mathsf{leader(\cdot)}$ is a round-robin function), $P_i$ sends (at most) $1 \cdot O(n) + f \cdot O(1) = O(n)$ view-core messages associated with views that belong to $e$.
\end{proof}



Finally, we prove the complexity of \name.

\begin{theorem} [Complexity] \label{theorem:quad_complexity_appendix}
\name achieves $O(n^2)$ communication complexity and $O(f)$ latency complexity.
\end{theorem}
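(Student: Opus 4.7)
The plan is to decompose the argument into two independent pieces: the latency bound follows almost directly from the guarantees of \rare combined with the behavior of the view core during a synchronized overlap, while the communication bound will be obtained by summing the cost of \rare with the cost of view-core messages, bounding the latter via Lemmas~\ref{lemma:quad_sent_previously_entered} and~\ref{lemma:quad_constant_in_epoch}.

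For the latency, I would first invoke \Cref{theorem:termination} and \Cref{theorem:latency} to obtain a synchronization time $t_s \geq \mathit{GST}$ with $t_s + \Delta - \mathit{GST} = O(f)$, during which all correct processes are in a common view $\mathit{view}$ with a correct leader for at least $\Delta = 8\delta$ time. I would then argue (as in the proof of \Cref{theorem:quad_termination}) that during this overlap the leader collects $2f + 1$ \textsc{view-change} messages within $\delta$, adopts $\mathit{highQC}$ that satisfies the safety predicate at line~\ref{line:view_core_check} of \Cref{algorithm:view_1} at every correct process, and completes the subsequent \textsc{prepare}, \textsc{precommit}, \textsc{commit}, and \textsc{decide} phases each within $2\delta$. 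Hence every correct process triggers $\mathsf{decide}$ (line~\ref{line:decide_view_core}) by time $t_d \leq t_s + \Delta$, so the latency complexity is $t_d - \mathit{GST} \leq t_s + \Delta - \mathit{GST} = O(f)$.

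For the communication bound, I would split the words sent by correct processes during $[\mathit{GST}, t_d]$ into (i) \rare messages and (ii) view-core messages. Part (i) contributes $O(n^2)$ by \Cref{theorem:latency}. For part (ii), fix a correct process $P_i$. Since $t_d \leq t_s + \Delta$, \Cref{lemma:constant-number-of-epochs-after-GST} shows that $P_i$ enters $O(1)$ epochs during $[\mathit{GST}, t_d]$. By \Cref{lemma:quad_sent_previously_entered}, every view-core message sent by $P_i$ during this interval is associated with a view in one of these $O(1)$ epochs, and by \Cref{lemma:quad_constant_in_epoch} the per-epoch view-core cost of $P_i$ is $O(n)$ (leader in at most one view of the epoch, contributing $O(n)$ words; non-leader in the remaining $f$ views, contributing $O(1)$ each). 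Summing gives $O(n)$ view-core words per correct process and thus $n \cdot O(n) = O(n^2)$ across all correct processes. Adding (i) and (ii) yields the claimed $O(n^2)$ total.

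The main obstacle will be cleanly justifying that $t_d \leq t_s + \Delta$, since the view core's progress predicate at line~\ref{line:view_core_check} is what closes the loop between \rare's synchronization guarantee and actual decision. I would handle this by exploiting the HotStuff invariant that any locked QC with the maximum view among $2f + 1$ \textsc{view-change} messages is a valid proposal for all correct processes; together with the fact that $\Delta = 8\delta$ is sized to accommodate exactly the eight one-way communication steps of a successful view (four all-to-leader and four leader-to-all), this ensures that the decision at line~\ref{line:decide_view_core} of \Cref{algorithm:view_1} is triggered before the synchronized overlap ends.
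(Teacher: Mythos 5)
Your proposal is correct and follows essentially the same route as the paper's proof: the latency bound comes from \rare's $O(f)$ latency together with the fact that synchronization for $\Delta = 8\delta$ suffices for the view core to decide, and the communication bound is obtained exactly as you describe, by combining \Cref{theorem:latency} with \Cref{lemma:constant-number-of-epochs-after-GST}, \Cref{lemma:quad_sent_previously_entered}, and \Cref{lemma:quad_constant_in_epoch} to charge each correct process $O(n)$ view-core words over $O(1)$ epochs. The extra detail you give on the eight communication steps within the synchronized view is consistent with the paper's termination argument and does not change the structure of the proof.
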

\begin{proof}
As soon as all correct processes remain in the same view for $8\delta$ time, all correct processes decide from the view core.
As \rare uses $\Delta = 8\delta$ in the implementation of \name (line~\ref{line:init_rare_sync} of \Cref{algorithm:quad}), all processes decide by time $t_s + 8\delta$, where $t_s$ is the first synchronization time after $\mathit{GST}$ (\Cref{definition:t_s}). 
Given that $t_s + 8\delta - \mathit{GST}$ is the latency of \rare (see \Cref{subsection:view_synchronization_problem_definition}) and the latency complexity of \rare is $O(f)$ (by \Cref{theorem:latency}), the latency complexity of \name is indeed $O(f)$.

Fix a correct process $P_i$.
For every epoch $e$, $P_i$ sends (at most) $O(n)$ view-core messages associated with views that belong to $e$ (by \Cref{lemma:quad_constant_in_epoch}).
Moreover, if $P_i$ sends a view-core message associated with a view that belongs to an epoch $e$, then $e$ is the last epoch entered by $P_i$ prior to sending the message (by \Cref{lemma:quad_sent_previously_entered}).
Hence, in the time period $[\mathit{GST}, t_s + 8\delta]$, $P_i$ sends view-core messages associated with views that belong to (at most) $O(1)$ epochs (by \Cref{lemma:constant-number-of-epochs-after-GST}).
Thus, $P_i$ sends (at most) $O(1) \cdot O(n) = O(n)$ view-core messages in the time period $[\mathit{GST}, t_s + 8\delta]$, each containing a single word.
Moreover, during this time period, the communication complexity of \rare is $O(n^2)$ (by \Cref{theorem:latency}).
Therefore, the communication complexity of \name is $n \cdot O(n) + O(n^2) = O(n^2)$.
\end{proof}
\section{\sname: Proof of Correctness and Complexity} \label{section:squad_appendix}






First, we show that the certification phase of \sname ensures computability and liveness.

\begin{lemma} [Computability \& liveness]  \label{lemma:groundwork}
Certification phase (\Cref{algorithm:groundwork}) ensures computability and liveness.
Moreover, every correct process sends (at most) $O(n)$ words and obtains a certificate by time $\mathit{GST} + 2\delta$.
\end{lemma}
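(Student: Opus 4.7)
I would structure the proof around three independent claims: (i) computability, (ii) liveness together with the $\mathit{GST} + 2\delta$ bound, and (iii) the per-process $O(n)$ word count. The proofs of (i) and (iii) are quorum-counting arguments essentially already sketched after \Cref{algorithm:groundwork} in the body of the paper, so I would only need to formalize them; the real content lies in a careful case analysis for (ii).

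\textbf{Computability.} Assume all correct processes propose $v$. A certificate $\sigma$ with $\mathsf{verify}(v', \sigma) = \mathit{true}$ for $v' \neq v$ must be a threshold signature either of $v'$ or of the string ``any value'' (by the function at line~\ref{line:function_verify}). First, every correct process invokes $\mathit{ShareSign}_i$ only on its own proposal $v$ (line~\ref{line:send_disclose}) or on ``any value'' (line~\ref{line:broadcast_allow_any}); since at most $f$ partial signatures come from faulty processes, no $f+1$ partial signatures of $v' \neq v$ can ever exist. Second, I claim no correct process ever executes line~\ref{line:broadcast_allow_any}: any set of $2f+1$ \textsc{disclose} messages received by a correct process contains at most $f$ from faulty processes, hence at least $f+1$ from correct processes, all carrying $v$; this violates condition~(2) of the rule at line~\ref{line:rule_allow_any}. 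Consequently, no $f+1$ partial signatures of ``any value'' can ever exist. Both alternatives are excluded, so computability holds.

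\textbf{Liveness and timing.} The key observation is that, since all correct processes start executing \sname by $\mathit{GST}$ and immediately broadcast \textsc{disclose} (line~\ref{line:send_disclose}), by time $\mathit{GST} + \delta$ every correct process has received a \textsc{disclose} from every correct process --- in particular, from at least $2f+1$ processes. At this moment, each correct process $P_i$ is in one of two (not mutually exclusive) situations: (A) $P_i$ has already triggered line~\ref{line:f+1_disclose} for some value and exited with a certificate, or (B) the rule at line~\ref{line:rule_allow_any} has fired at some time $t \leq \mathit{GST} + \delta$, so $P_i$ has broadcast \textsc{allow-any} by $\mathit{GST} + \delta$ (it has exited via line~\ref{line:receive_certificate} in the meantime, unless a certificate has reached it). I then split on whether any correct process is in situation (A). If yes, the \textsc{certificate} it broadcast (line~\ref{line:broadcast_certificate_1}) reaches every correct process by $\mathit{GST} + 2\delta$, activating line~\ref{line:receive_certificate}. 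If no, every correct process is in situation (B) and has broadcast \textsc{allow-any} by $\mathit{GST} + \delta$; thus every correct process receives at least $2f+1 \geq f+1$ \textsc{allow-any} messages by $\mathit{GST} + 2\delta$, activates line~\ref{line:f+1_allow_any}, and exits with a certificate for ``any value''. In every scenario, every correct process has obtained a certificate by time $\mathit{GST} + 2\delta$, which simultaneously establishes liveness and the latency bound. The main obstacle here is making the case split exhaustive and handling the \emph{mixed} situation cleanly --- ensuring that a process in (B) that has already sent \textsc{allow-any} will still exit, either by collecting $f+1$ \textsc{allow-any} signatures or by receiving a disseminated certificate from a (A)-process, within the same $\mathit{GST} + 2\delta$ budget.

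\textbf{Communication.} Each correct process broadcasts at most one \textsc{disclose} message (line~\ref{line:send_disclose} is executed only on $\mathsf{init}$), at most one \textsc{allow-any} message (the ``for the first time'' clause at line~\ref{line:rule_allow_any}), and at most one \textsc{certificate} message (every path that broadcasts a certificate --- lines~\ref{line:broadcast_certificate_1},~\ref{line:broadcast_certificate_2},~\ref{line:broadcast_certificate_3} --- is immediately followed by ``exit the certification phase'', after which rules are no longer processed). Each such broadcast consists of $n$ point-to-point messages of $O(1)$ words (partial and threshold signatures are both single words by our convention in \Cref{section:model}). Hence every correct process sends $O(n)$ words in the certification phase, which finishes the proof.
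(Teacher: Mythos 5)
Your proposal is correct and follows essentially the same route as the paper's proof: the identical quorum-counting argument for computability (no $f+1$ partial signatures for $v'\neq v$, and the rule at line~\ref{line:rule_allow_any} never fires at a correct process), the same two-case analysis at time $\mathit{GST}+\delta$ (some correct process disseminates a certificate vs.\ all correct processes broadcast \textsc{allow-any}) yielding the $\mathit{GST}+2\delta$ bound, and the same count of at most one broadcast of each message type for the $O(n)$ word bound.
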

\begin{proof}
As every correct process broadcasts \textsc{disclose}, \textsc{certificate} or \textsc{allow-any} messages at most once and each message contains a single word, every correct process sends (at most) $3 \cdot n \cdot 1 = O(n)$ words. 
Next, we prove computability and liveness.

\smallskip
\noindent \textbf{Computability.} 
Let all correct processes propose the same value $v$ to \sname.
Since no correct process broadcasts a \textsc{disclose} message for a value $v' \neq v$, no process ever obtains a certificate $\sigma_{v'}$ for $v'$ such that $\mathit{CombinedVerify}(v', \sigma_{v'}) = \mathit{true}$ (line~\ref{line:verify_v}).

Since all correct processes broadcast a \textsc{disclose} message for $v$ (line~\ref{line:send_disclose}), the rule at line~\ref{line:rule_allow_any} never activates at a correct process.
Thus, no correct process ever broadcasts an \textsc{allow-any} message (line~\ref{line:broadcast_allow_any}), which implies that no process obtains a certificate $\sigma_{\bot}$ such that $\mathit{CombinedVerify}(\text{``allow any''}, \sigma_{\bot}) = \mathit{true}$ (line~\ref{line:verify_any_value}).
The computability property is ensured.

\smallskip
\noindent \textbf{Liveness.} 
Every correct process receives all \textsc{disclose} messages sent by correct processes by time $\mathit{GST} + \delta$ (since message delays are $\delta$ after $\mathit{GST}$; see \Cref{section:model}).
Hence, all correct processes receive (at least) $2f + 1$ \textsc{disclose} messages by time $\mathit{GST} + \delta$.
Therefore, by time $\mathit{GST} + \delta$, all correct processes send either (1) a \textsc{certificate} message upon receiving $f + 1$ \textsc{disclose} messages for the same value (line~\ref{line:broadcast_certificate_1}), or (2) an \textsc{allow-any} message upon receiving $2f + 1$ \textsc{disclose} messages without a ``common value'' (line~\ref{line:broadcast_allow_any}).
Let us consider two possible scenarios:
\begin{compactitem}
    \item There exists a correct process that has broadcast a \textsc{certificate} message upon receiving $f + 1$ \textsc{disclose} messages for the same value (line~\ref{line:broadcast_certificate_1}) by time $\mathit{GST} + \delta$.
    Every correct process receives this message by time $\mathit{GST} + 2\delta$ (line~\ref{line:receive_certificate}) and obtains a certificate.
    Liveness is satisfied by time $\mathit{GST} + 2\delta$ in this case.
    
    \item Every correct process broadcasts an \textsc{allow-any} message (line~\ref{line:broadcast_allow_any}) by time $\mathit{GST} + \delta$.
    Hence, every correct process receives $f + 1$ \textsc{allow-any} messages by time $\mathit{GST} + 2\delta$ (line~\ref{line:f+1_allow_any}) and obtains a certificate (line~\ref{line:combine_allow_any}).
    The liveness property is guaranteed by time $\mathit{GST} + 2\delta$ in this case as well.
\end{compactitem}
The liveness property is ensured by time $\mathit{GST} + 2\delta$.
\end{proof}

Finally, we show that \sname is a Byzantine consensus protocol with $O(n^2)$ communication complexity and $O(f)$ latency complexity.

\begin{theorem} \label{theorem:squad}
\sname is a Byzantine consensus protocol with (1) $O(n^2)$ communication complexity, and (2) $O(f)$ latency complexity.
\end{theorem}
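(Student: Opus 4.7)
The plan is to establish the three consensus properties (agreement, validity, termination) and the two complexity bounds by reducing to the already-proven results for \name (\Cref{theorem:quad_agreement}, \Cref{theorem:quad_termination}, \Cref{theorem:quad_complexity_appendix}) together with the computability and liveness of the certification phase (\Cref{lemma:groundwork}). The main structural observation is that \sname is simply $\name_{\mathit{cer}}$ preceded by \Cref{algorithm:groundwork}, where $\name_{\mathit{cer}}$ differs from \name only in that correct processes drop any incoming view-core message whose value is not accompanied by a valid certificate (i.e.\ a value $v$ with $\sigma$ such that $\mathsf{verify}(v,\sigma)=\mathit{true}$).

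First I would handle safety. Agreement transfers verbatim from \name: discarding additional messages cannot introduce conflicting prepare quorum certificates, so the argument of \Cref{theorem:quad_agreement} applies unchanged to $\name_{\mathit{cer}}$. For validity (the ``if all correct propose $v$, only $v$ is decided'' property), I would invoke computability: if every correct process invokes \sname with $v$, then by \Cref{lemma:groundwork} no process, correct or faulty, can ever assemble a certificate $\sigma_{v'}$ with $\mathsf{verify}(v',\sigma_{v'})=\mathit{true}$ for $v'\neq v$. Since correct processes in $\name_{\mathit{cer}}$ ignore any view-core message whose value is not certified, no correct leader can ever drive a prepare phase on a value different from $v$, and hence the decision at \cref{line:decide_view_core} of \Cref{algorithm:view_1} can only be $v$.

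Next I would establish termination, which is the most delicate step because the message-ignoring rule of $\name_{\mathit{cer}}$ could a priori break the liveness of the underlying view core. Here I would use the liveness clause of \Cref{lemma:groundwork}: every correct process obtains a certificate and enters $\name_{\mathit{cer}}$ by time $\mathit{GST}+2\delta$. From that point on, every correct process always attaches a valid certificate to each view-core message it sends, so no correct process has its messages dropped by another correct process. Therefore the reasoning of \Cref{theorem:quad_termination} applies verbatim inside $\name_{\mathit{cer}}$: \rare still guarantees an eventual synchronization window of length $\Delta=8\delta$ with a correct leader, and in that window all correct processes complete the four view-core phases and decide.

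Finally, I would conclude with the complexity bounds. The certification phase contributes $O(n^2)$ words in total (each correct process sends $O(n)$ words by \Cref{lemma:groundwork}, since \textsc{disclose}, \textsc{allow-any}, and \textsc{certificate} messages each carry a single word thanks to the $(f+1,n)$-threshold signature scheme) and adds only $O(1)$ latency (the phase is completed at all correct processes by $\mathit{GST}+2\delta$). The $\name_{\mathit{cer}}$ portion retains the complexity of \name since (i) attaching a constant-sized certificate to every view-core message preserves the per-message size at one word, and (ii) dropping messages can only decrease communication. Thus, by \Cref{theorem:quad_complexity_appendix}, $\name_{\mathit{cer}}$ contributes $O(n^2)$ communication and $O(f)$ latency after $\mathit{GST}$; adding the certification phase's $O(n^2)$ words and $O(\delta)$ time yields the claimed $O(n^2)$ communication complexity and $O(f)$ latency complexity for \sname. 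The main obstacle in the whole argument is the liveness-under-filtering step for $\name_{\mathit{cer}}$, which is why \Cref{lemma:groundwork} is stated with the explicit $\mathit{GST}+2\delta$ deadline rather than merely ``eventually''.
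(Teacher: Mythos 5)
Your overall strategy is the same as the paper's: reduce safety and liveness of \sname to those of \name via the computability and liveness properties of the certification phase (\Cref{lemma:groundwork}), and add the phase's $O(n^2)$ words and $O(\delta)$ time on top of \name's bounds. The safety and termination parts are fine (your explicit treatment of liveness-under-filtering is a point the paper only makes informally in \Cref{subsection:squad}).

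There is, however, a gap in the complexity step. You invoke \Cref{theorem:quad_complexity_appendix} for $\name_{\mathit{cer}}$ ``after $\mathit{GST}$,'' but that theorem (and all the \rare lemmas it rests on) is proved under the model assumption that \emph{every} correct process has started executing the protocol by $\mathit{GST}$. In \sname a correct process only starts $\name_{\mathit{cer}}$ upon exiting the certification phase, which can happen as late as $\mathit{GST}+2\delta$, so the hypothesis is not met and the theorem cannot be applied verbatim. The paper repairs this by a case split on whether $t_{\mathit{last}}$ (the time the last correct process enters $\name_{\mathit{cer}}$) precedes $\mathit{GST}$, and in the hard case it constructs a genuine execution $E_{\name}$ of \name whose global stabilization time is \emph{redefined} to be $\mathit{GST}_2 = t_{\mathit{last}}$, applies \Cref{theorem:quad_complexity_appendix} to that execution, and then \emph{separately} bounds the words that \rare and the view core emit during the uncovered window $[\mathit{GST}_1, t_{\mathit{last}})$: since this window is at most $2\delta$ long, \Cref{lemma:within_delta} limits each correct process to $O(1)$ epochs there, hence $O(n)$ words each by \cref{lemma:epoch_completed_n,lemma:enter_epoch_n,lemma:quad_constant_in_epoch}. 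Your argument that ``dropping messages can only decrease communication'' does not cover this window, because the issue is not extra messages but the mismatch between the two protocols' stabilization times. To make your proof complete you need either this execution-surgery argument or a direct re-verification that the \rare complexity lemmas tolerate correct processes starting up to $2\delta$ after $\mathit{GST}$.
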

\begin{proof}
If a correct process decides a value $v'$ and all correct processes have proposed the same value $v$, then $v' = v$ since (1) correct processes ignore values not accompanied by their certificates (line~\ref{line:start_quad_squad}), and (2) the certification phase of \sname ensures computability (by \Cref{lemma:groundwork}).
Therefore, \sname ensures validity.

Fix an execution $E_{\sname}$ of \sname.
We denote by $t_{\mathit{last}}$ the time the last correct process starts executing $\name_{\mathit{cer}}$ (line~\ref{line:start_quad_squad}) in $E_{\sname}$; i.e., by $t_{\mathit{last}}$ every correct process has exited the certification phase.
Moreover, we denote the global stabilization time of $E_{\sname}$ by $\mathit{GST}_1$.
Now, we consider two possible scenarios:
\begin{compactitem}
    \item Let $\mathit{GST}_1 \geq t_{\mathit{last}}$.
    $\name_{\mathit{cer}}$ solves the Byzantine consensus problem with $O(n^2)$ communication and $O(f)$ latency (by \Cref{theorem:quad_complexity_appendix}).
    As processes send (at most) $O(n)$ words associated with the certification phase (by \Cref{lemma:groundwork}), consensus is solved in $E_{\sname}$ with $n \cdot O(n) + O(n^2) = O(n^2)$ communication complexity and $O(f)$ latency complexity.
    
    \item Let $\mathit{GST}_1 < t_{\mathit{last}}$.
    Importantly, $t_{\mathit{last}} - \mathit{GST}_ 1 \leq 2\delta$ (by \Cref{lemma:groundwork}).
    Now, we create an execution $E_{\name}$ of the original \name protocol in the following manner:
    \begin{compactenum}
        \item $E_{\name} \gets E_{\sname}$. 
        If a process sends a value with a valid accompanying certificate, then \emph{just} the certificate is removed in $E_{\name}$ (i.e., the corresponding message stays in $E_{\name}$).
        Otherwise, the entire message is removed.
        Note that no message sent by a correct process in $E_{\sname}$ is removed from $E_{\name}$ as correct processes only send values accompanied by their valid certificates.
        
        \item We remove from $E_{\name}$ all events associated with the certification phase of \sname.
        
        \item The global stabilization time of $E_{\name}$ is set to $t_{\mathit{last}}$. We denote this time by $\mathit{GST}_2 = t_{\mathit{last}}$.
        Note that we can set $\mathit{GST}_2$ to $t_{\mathit{last}}$ as $t_{\mathit{last}} > \mathit{GST}_1$.
    \end{compactenum}
    In $E_{\name}$, consensus is solved with $O(n^2)$ communication and $O(f)$ latency.
    Therefore, the consensus problem is solved in $E_{\sname}$.
    
    Let us now analyze the complexity of $E_{\sname}$:
    \begin{compactitem}
        \item The latency complexity of $E_{\sname}$ is $t_{\mathit{last}} - \mathit{GST}_1 + O(f) = O(f)$ (as $t_{\mathit{last}} - \mathit{GST}_1 \leq 2\delta$).
        
        \item The communication complexity of $E_{\sname}$ is the sum of (1) the number of words sent in the time period $[\mathit{GST}_1, t_{\mathit{last}})$, and (2) the number of words sent at and after $t_{\mathit{last}}$ and before the decision, which is $O(n^2)$ since that is the communication complexity of $E_{\name}$ and each correct process sends (at most) $O(n)$ words associated with the certification phase (by \Cref{lemma:groundwork}).
        
        Fix a correct process $P_i$.
        Let us take a closer look at the time period $[\mathit{GST}_1, t_{\mathit{last}})$:
        \begin{compactitem}
            \item Let $\mathit{epochs}_{\rare}$ denote the number of epochs for which $P_i$ sends \textsc{epoch-completed} or \textsc{enter-epoch} messages in this time period.
            By \Cref{lemma:within_delta}, $P_i$ enters (at most) $2 = O(1)$ epochs in this time period.
            Hence, $\mathit{epochs}_{\rare} = O(1)$ (by lemmas~\ref{lemma:epoch_completed_last_enter} and~\ref{lemma:enter_epoch_in_e}).
            
            \item Let $\mathit{epochs}_{\name_{\mathit{cer}}}$ denote the number of epochs for which $P_i$ sends view-core messages in this time period.
            By \Cref{lemma:within_delta}, $P_i$ enters (at most) $2 = O(1)$ epochs in this time period.
            Hence, $\mathit{epochs}_{\name_{\mathit{cer}}} = O(1)$ (by \Cref{lemma:quad_sent_previously_entered}).
        \end{compactitem}
        For every epoch $e$, $P_i$ sends (at most) $O(n)$ \textsc{epoch-completed} and \textsc{enter-epoch} messages (by lemmas~\ref{lemma:epoch_completed_n} and~\ref{lemma:enter_epoch_n}).
        Moreover, for every epoch $e$, $P_i$ sends (at most) $O(n)$ view-core messages associated with views that belong to $e$ (by \Cref{lemma:quad_constant_in_epoch}).\footnote{Note that lemmas~\ref{lemma:epoch_completed_last_enter},~\ref{lemma:enter_epoch_in_e},~\ref{lemma:epoch_completed_n},~\ref{lemma:enter_epoch_n},~\ref{lemma:within_delta},~\ref{lemma:quad_sent_previously_entered} and~\ref{lemma:quad_constant_in_epoch}, which we use to prove the theorem, assume that all correct processes have started executing \rare and \name by $\mathit{GST}$. In \Cref{theorem:squad}, this might not be true as some processes might start executing \rare after $\mathit{GST}$ (since $t_{\mathit{last}} > \mathit{GST}$). However, it is not hard to verify that the claims of these lemmas hold even in this case.}
        As each \textsc{epoch-completed}, \textsc{enter-epoch} and view-core message contains a single word and $P_i$ sends at most $O(n)$ words during the certification phase (by \Cref{lemma:groundwork}), we have that $P_i$ sends (at most) $\mathit{epochs}_{\rare} \cdot O(n) + \mathit{epochs}_{\name_{\mathit{cer}}} \cdot O(n) + O(n) = O(n)$ words during the time period $[\mathit{GST}_1, t_{\mathit{last}})$.
        Therefore, the communication complexity of $E_{\sname}$ is $n \cdot O(n) + O(n^2) + O(n^2) = O(n^2)$.\footnote{The first ``$n \cdot O(n)$'' term corresponds to the messages sent during the time period $[\mathit{GST}_1, t_{\mathit{last}})$, the second ``$O(n^2)$'' term corresponds to the messages sent during the certification phase, and the third ``$O(n^2)$'' term corresponds to the messages sent at and after $t_{\mathit{last}}$ and before the decision has been made.}
    \end{compactitem}
    Hence, consensus is indeed solved in $E_{\sname}$ with $O(n^2)$ communication complexity and $O(f)$ latency complexity.
\end{compactitem}
The theorem holds.
\end{proof}

\end{document}